\def\MODE{3}
\newcommand{\rt}[1]{\textcolor{red}{#1}}
\newcommand{\LL}[1]{#1}
\DeclareMathOperator{\IQC}{IQC}
\def\@IEEEsectpunct{.\ \,}
\def\paragraph{\@startsection{paragraph}{4}{\z@}{1.5ex plus 1.5ex minus 0.5ex}%
{0ex}{\bfseries}}
\begin{document}

\title{Exponential Stability Analysis via\\ Integral Quadratic Constraints}

\if\MODE1
\author{Ross~Boczar,~\IEEEmembership{Member,~IEEE,}
        Laurent~Lessard,~\IEEEmembership{Member,~IEEE,}
        Andrew~Packard,~\IEEEmembership{Member,~IEEE,}
        and~Benjamin~Recht,~\IEEEmembership{Member,~IEEE}
\thanks{Ross Boczar and Benjamin Recht are with with the Department
of Electrical Engineering and Computer Science, University of California, Berkeley, CA~94720, USA. Email: \texttt{\{boczar,brecht\}@berkeley.edu}.}
\thanks{Andrew Packard is with the Department
of Mechanical Engineering, University of California, Berkeley, CA~94720, USA. Email: \texttt{apackard@berkeley.edu}.}
\thanks{Laurent Lessard is with the Department of Electrical and Computer Engineering, University of Wisconsin--Madison, Madison, WI~53706, USA. Email: \texttt{laurent.lessard@wisc.edu}}
\thanks{Manuscript received ...\rt{TODO}}}

\markboth{IEEE Transactions on Automatic Control TODO}%
{\rt{TODO}}
\else
\author{Ross~Boczar, \hfill
        Laurent~Lessard, \hfill 
        Andrew~Packard, \hfill
        Benjamin~Recht}
\note{}
\fi
\maketitle

\begin{abstract}
The theory of integral quadratic constraints (IQCs) allows verification of stability and gain-bound properties of systems containing nonlinear or uncertain elements. Gain bounds often imply exponential stability, but it can be challenging to compute useful numerical bounds on the exponential decay rate.
This work presents a \LL{generalization} of the classical IQC results of Megretski and Rantzer~\cite{megretski_system_1997} that leads to a tractable computational procedure for finding exponential rate certificates that are far less conservative than ones computed from $L_2$ gain bounds alone.
An expanded library of IQCs for certifying exponential stability is also provided and the effectiveness of the technique is demonstrated via numerical examples.
\end{abstract}

\if\MODE1\begin{IEEEkeywords}
\rt{TODO?}
\end{IEEEkeywords}\fi


\section{Introduction}\label{sec:introduction}

\if\MODE1\IEEEPARstart{A}{nalysis}\else Analysis \fi in the context of robust control is generally concerned with obtaining absolute performance guarantees about a system in the presence of bounded uncertainty. Examples of such results include the small gain theorem \& passivity theory~\cite{zames}, dissipativity theory~\cite{willems721}, the structured singular value $\mu$~\cite{ref:doyle82}, and integral quadratic constraints (IQCs)~\cite{megretski_system_1997}.

In this paper, we present a modification of IQC theory, the most general of the aforementioned tools, that allows one to certify \emph{exponential stability} rather than just bounded-input bounded-output (BIBO) stability. Moreover, we can compute numerical bounds on the exponential decay rate of the state.

Even when BIBO stable systems are exponentially stable, estimates of the exponential decay rates provided by standard IQC theory are typically very conservative. We will show that this conservatism can be greatly reduced if we directly certify exponential stability and use the method presented herein to compute the associated decay rate.

Our modified IQC analysis was successfully applied in~\cite{lessard_analysis_2014} to analyze convergence properties of commonly-used optimization algorithms such as the gradient descent method. These algorithms converge at an exponential rate when applied to strongly convex functions, and the modified IQC analysis automatically produces very tight bounds on the convergence rates.
Another potential application is in time-critical systems. In embedded model predictive control, for example, it is vital to have robust guarantees that desired error bounds will be met in the allotted time without overflow errors and in spite of fixed-point arithmetic. See \cite{mpc} and references therein.

\paragraph*{A special case} While a general treatment of exponential bounds is provided in the sequel, it is worth noting that exponential stability can be proven directly for some special cases. 
%
To illustrate this fact, consider a linear time-invariant (LTI) discrete-time plant $G$ with state-space realization $(A,B,C,D)$. Suppose $G$ is connected in feedback with a \textit{strictly-input passive} nonlinearity $\Delta$. A sufficient condition for BIBO stability is that there exists a positive definite matrix $P \succ 0$ and a scalar $\lambda \ge 0$ satisfying the linear matrix inequality (LMI)
\begin{equation}\label{eq:passiv}
\addtolength{\arraycolsep}{-0pt}
\bmat{A & B \\ I & 0}^\tp\bmat{P & 0 \\ 0 & -P}\bmat{A & B \\ I & 0}\\
+ \lambda \bmat{0 & C^\tp \\ C & D\!+\!D^\tp} \prec 0
\end{equation}
This result is also related to the Positive Real Lemma (see~\cite{passivity_dissipativity_relationship} and references therein).
If we define $V(x) \defeq x^\tp P x$, then~\eqref{eq:passiv} implies that $V$ decreases along trajectories:
$
V(x_{k+1}) \le V(x_k)
$
for all $k$. BIBO stability then follows from positivity and boundedness of $V$.
Observe that when~\eqref{eq:passiv} holds, we may replace the right-hand side by $-\epsilon P$ for some sufficiently small $\epsilon > 0$. We then conclude that
$
V(x_{k+1}) \le (1-\epsilon)V(x_k)
$
for all $k$ and exponential stability follows. We may then maximize $\epsilon$ subject to feasibility of~\eqref{eq:passiv} to further improve the rate bound.

Unfortunately, the approach outlined above of including $-\epsilon P$ fails in the general IQC setting due to the different role played by $P$ in the associated LMI. In IQC theory, the LMI comes from the Kalman-Yakubovich-Popov (KYP) lemma and although it is structurally similar to~\eqref{eq:passiv}, $P$ is not positive definite in general and $V$ may not decrease along trajectories.

Our key insight is that by suitably modifying both the LMI \emph{and} the IQC definition, we obtain a more broadly applicable condition for certifying exponential stability.

The paper is organized as follows. We cover some related work in the remainder of the introduction, we explain our notation and some basic results in Section~\ref{sec:prelim}, we develop and present our main result in Section~\ref{sec:main}, and we discuss computational considerations in~Section~\ref{sec:computation}. An explicit construction of the (conservative) rate guarantees implied by finite $L_2$ gain is given in~Section~\ref{sec:L2exp}. In~Section~\ref{sec:library} we provide a library of applicable IQCs. Finally, we present illustrative examples demonstrating the usefulness of our result in Section~\ref{sec:examples}, and we make some concluding remarks in Section~\ref{sec:conclusion}.

\paragraph*{Related work}
It is noted in~\cite{megretski_system_1997,rantzer_system_1994} that BIBO stability often implies exponential stability. In particular, exponential stability follows if the nonlinearity satisfies an additional \emph{fading memory} property. \LL{So under mild assumptions, the robust stability guarantee from IQC theory automatically implies exponential stability as well. The proof of this result uses the $L_2$ gain from the stability analysis to construct an exponential rate bound. We will see in Section~\ref{sec:examples} that bounds computed in this way can be very conservative.}

Other proofs of exponential stability have appeared in the literature for specific classes of nonlinearities. Some examples include sector-bounded nonlinearities~\cite{corless_bounded_1993,konishi_robust_1999} and nonlinearities satisfying a Popov IQC~\cite{jonsson_nonlinear_1997}. These works exploit LMI modifications akin to the one shown with~\eqref{eq:passiv} earlier in this section.

This work is inspired by~\cite{lessard_analysis_2014}, which presents an approach for proving the robust exponential stability of optimization algorithms. The approach of~\cite{lessard_analysis_2014} uses a time-domain formulation of IQCs modified to handle exponential stability. \LL{In contrast, the present work develops the aforementioned exponential stability analysis entirely in the frequency domain and its applicability is not restricted to the analysis of iterative optimization algorithms. Moreover, we clarify the connection to the seminal IQC results in~\cite{megretski_system_1997}.} Parts of this work first appeared in the conference paper~\cite{boczar2015exponential}. Since then, an analogous continuous-time formulation with alternative techniques and motivations also appeared in \cite{huseiler}.

\section{Notation and preliminaries}\label{sec:prelim}

We adopt a setup analogous to the one used in~\cite{megretski_system_1997}, with the exception that we will work in discrete time rather than continuous time.
%
%
%
The conjugate transpose of a vector $v\in\C^n$ is denoted $v^*$. The unit circle in the complex plane is denoted $\T \defeq \set{z\in\C}{|z|=1}$
The $z$-transform of a time-domain signal $x\defeq (x_0,x_1,\dots)$ is denoted $\hat x (z)$ and defined as
$
\hat x(z) \defeq \sum_{k=0}^\infty x_k z^{-k}
$. The $i$-th coordinate of the vector $x$ is denoted $x^{(i)}$.

A Hermitian positive definite (semidefinite) matrix $M$ is denoted $M \succ 0$ ($M \succeq 0$).
Function composition is denoted $(g\circ f)(x):=g(f(x))$.
A sequence $u = (u_0,u_1,\dots)$ is said to be in $\ltwo$ if $\sum_{k=0}^\infty|u_k|^2<\infty$. A sequence $u_k$ is said to be in $\ltwo^\rho$ for some $\rho \in (0,1)$ if the sequence $(\rho^{-k}u_k)$ is in $\ltwo$, i.e. $\sum_{k=0}^\infty\rho^{-2k}|u_k|^2<\infty$. Note that $\ltwo^\rho \subset \ltwo$.
Let $\RHinf^{m\times n}$ be the set of $m\times n$ matrices whose elements are proper rational functions with real coefficients analytic outside the closed unit disk.

Consider the standard setup of Fig.~\ref{fig:lure1} (the \emph{Lur'e system}). The block $G$ contains the known LTI part of the system while $\Delta$ contains the part that is uncertain, unknown, nonlinear, or otherwise troublesome.
  \begin{figure}[thpb]
    \centering
    \includegraphics{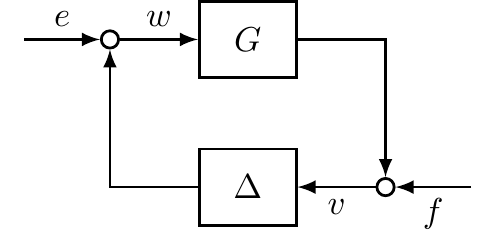}
    \caption{Linear time-invariant system $G$ in feedback with a nonlinearity $\Delta$.
    }\label{fig:lure1}
  \end{figure}
  The interconnection is said to be \emph{well-posed} if the map $(v,w)\mapsto(e,f)$ has a causal inverse. The interconnection is said to be bounded-input bounded-output (BIBO) stable if, in addition, there exists some $\gamma > 0$ such that when $G$ is initialized with zero state,
  \begin{equation*}\label{eq:gain}
  \norm{v}^2 + \norm{w}^2 \le \gamma \bl( \norm{e}^2 + \norm{f}^2 \br)
  \end{equation*}
  for all square-summable inputs $f$ and $e$, and where $\norm{\cdot}$ denotes the $\ltwo$ norm.
  Finally, the interconnection is \emph{exponentially stable} if there exists some $\rho \in (0,1)$ and $c > 0$ such that if $f=0$ and $e=0$, the state $x_k$ of $G$ will decay exponentially with rate $\rho$. That is,
  \[
  \norm{x_k} \le c\, \rho^k\, \norm{x_0}
  \qquad\text{for all }k.
  \]
We now present the classical IQC definition and stability result, which will be modified in the sequel to guarantee exponential convergence. These results are discrete-time analogs of the main IQC results of Megretski and Rantzer~\cite{megretski_system_1997}.

\begin{defn}[IQC]\label{def:iqc}
  Signals $y \in \ltwo$ and $u \in \ltwo$  with associated $z$-transforms  $\hat y(z)$ and $\hat u(z)$  satisfy the \emph{IQC} defined by a Hermitian complex-valued function $\Pi$ if
  \begin{equation}\label{iqceq}
    \int_\T\,
    \bmat{ \hat y(z)\\ \hat u(z) }^* \Pi(z)
    \bmat{ \hat y(z)\\ \hat u(z) } dz \geq 0\:.
  \end{equation}
A bounded causal operator $\Delta$ satisfies the IQC defined by $\Pi$ if~\eqref{iqceq} holds for all $y\in\ltwo$ with $u = \Delta(y)$. We also define $\IQC(\Pi(z))$ to be the set of all $\Delta$ that satisfy the IQC defined by $\Pi$.
\end{defn}
\begin{thm}[Stability result]
\label{thm:classic}
Let $G(z) \in \RHinf^{m\times n}$ and let $\Delta$ be a bounded causal operator. Suppose that:
  \begin{enumerate}[i)]
    \item for every $\tau \in [0,1]$, the interconnection of $G$ and $\tau \Delta$ is well-posed.
    \item for every $\tau \in [0,1]$, we have $\tau \Delta \in \IQC(\Pi(z))$.
    \item there exists $\epsilon > 0$ such that
      \begin{equation*}
          \begin{bmatrix}
          G(z)\\I
          \end{bmatrix}^*
          \Pi(z)
          \begin{bmatrix}
          G(z)\\I
          \end{bmatrix} \preceq -\epsilon I, \quad \forall z\in \T\:.
      \end{equation*}
  \end{enumerate}
Then, the feedback interconnection of $G$ and $\Delta$ is BIBO stable.
\end{thm}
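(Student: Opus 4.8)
The plan is to adapt the homotopy (continuity) argument of~\cite{megretski_system_1997} to the discrete-time setting. Write the loop equations as $v = Gw + f$ and $w = \Delta v + e$, where $(e,f)$ are the exogenous $\ltwo$ inputs and $G$ starts from zero state; for $\tau\in[0,1]$ let $S_\tau$ denote the same interconnection with $\Delta$ replaced by $\tau\Delta$. Introducing $u \defeq (\tau\Delta)v = w - e$, the transfer-function relations give $\hat v = G\hat u + \widehat{Ge+f}$, that is,
\[
\bmat{\hat v \\ \hat u} = \bmat{G \\ I}\hat u + \bmat{\widehat{Ge+f} \\ 0}.
\]
Define $T \defeq \set{\tau\in[0,1]}{S_\tau \text{ is BIBO stable}}$. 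I will show $T$ is nonempty, relatively open, and relatively closed in $[0,1]$; since $[0,1]$ is connected, this forces $1\in T$, which is the assertion.

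The crux is an a priori bound. I would first show there is a constant $\gamma^\star$, depending only on $\epsilon$, on $\norm{G}_\infty$, and on $\sup_{z\in\T}\norm{\Pi(z)}$, so that for \emph{every} $\tau\in[0,1]$, every $e,f\in\ltwo$, and every solution $(v,w)\in\ltwo\times\ltwo$ of $S_\tau$ one has $\norm{v}^2+\norm{w}^2 \le \gamma^\star(\norm{e}^2+\norm{f}^2)$. To see this, expand $\sigma \defeq \int_\T \bmat{\hat v \\ \hat u}^* \Pi(z) \bmat{\hat v \\ \hat u}\,dz$ using the relation above: condition~(iii) bounds the leading term by $-\epsilon\norm{u}^2$, while the cross term is $O(\norm{u}\,\norm{Ge+f})$ and the remaining term is $O(\norm{Ge+f}^2)$ by boundedness of $\Pi$ on $\T$ and of $G$. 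On the other hand $\sigma\ge 0$ by condition~(ii), applied with $y=v\in\ltwo$ and $u=(\tau\Delta)v\in\ltwo$. Completing the square yields $\norm{u}\le c\,\norm{Ge+f}\le c'(\norm{e}+\norm{f})$, and then $\norm{v}\le\norm{G}_\infty\norm{u}+\norm{Ge+f}$ and $\norm{w}\le\norm{u}+\norm{e}$ close the estimate.

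Next I would run the connectedness argument. For nonemptiness, $S_0$ reduces to $w=e$, $v=Ge+f$, which is BIBO stable because $G\in\RHinf$; hence $0\in T$. For openness, if $\tau_0\in T$ then, eliminating $w$, the operator $I-\tau_0 G\Delta$ is boundedly invertible on $\ltwo$; since $(\tau-\tau_0)G\Delta$ has small operator norm when $\tau$ is near $\tau_0$ (using boundedness of $G$ and $\Delta$), a Neumann-series perturbation keeps $I-\tau G\Delta$ boundedly invertible, and together with well-posedness from~(i) this gives $\tau\in T$. For closedness, take $\tau_j\to\tau$ with $\tau_j\in T$: by well-posedness~(i), $S_\tau$ has, for each $e,f\in\ltwo$, a unique causal solution $(v,w)$, a priori only in the extended $\ltwo$ space; one upgrades this to $(v,w)\in\ltwo$ by applying the a priori bound to finite-horizon truncations $P_N(v,w)$, which by causality solve a finite-horizon interconnection governed by the same data, so the bound holds with the same $\gamma^\star$ uniformly in $N$; letting $N\to\infty$ gives $(v,w)\in\ltwo$ with $\norm{v}^2+\norm{w}^2\le\gamma^\star(\norm{e}^2+\norm{f}^2)$, i.e.\ $\tau\in T$. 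Thus $T=[0,1]$, so $1\in T$, which is the claimed BIBO stability.

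I expect the closedness step to be the main obstacle: well-posedness alone only delivers a solution in the extended space, whereas Definition~\ref{def:iqc}, and hence the a priori bound, requires genuine $\ltwo$ signals. Making the truncation rigorous — setting it up so that causality produces an exact finite-horizon identity to which a finite-horizon counterpart of the a priori estimate applies, and controlling the tail terms discarded by truncation as $N\to\infty$ — is where the ``soft'' (integral) nature of the IQC and the $\tau$- and $N$-uniformity of $\gamma^\star$ are essential.
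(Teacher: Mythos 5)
The paper does not actually prove Theorem~\ref{thm:classic}: it is quoted as the discrete-time analog of the main result of~\cite{megretski_system_1997}, so the relevant comparison is with the original homotopy proof, which your outline follows. Your a priori estimate is the right step: for a loop whose signals are genuinely in $\ltwo$, expanding the IQC integral with $\hat v = G\hat u + \widehat{Ge+f}$, using condition~(iii) on the quadratic term and completing the square, gives $\norm{u}\le c(\norm{e}+\norm{f})$ and hence a gain bound $\gamma^\star$ that is uniform in $\tau$. The continuation argument, however, has a genuine gap.

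The problem is the closedness step. The a priori bound rests on condition~(ii), which is an infinite-horizon (``soft'') constraint applying only to pairs $(y,\tau\Delta(y))$ with $y\in\ltwo$. Truncated loop signals are not such a pair: by causality $P_N(w-e)=P_N\bigl(\tau\Delta(P_N v)\bigr)$, but the IQC for $y=P_N v$ involves the tail of $\tau\Delta(P_N v)$ beyond time $N$, over which you have no control, so there is no ``finite-horizon interconnection governed by the same data'' to which $\gamma^\star$ applies; IQCs simply do not survive truncation. A telltale symptom: your closedness argument never uses the hypothesis $\tau_j\in T$ --- if it were valid it would prove stability at $\tau=1$ directly from well-posedness plus the FDI at $\tau=1$, making the homotopy and the ``for all $\tau\in[0,1]$'' hypotheses superfluous, which is known to be false for soft IQCs. (The openness step also leans on a Neumann series for $I-\tau G\Delta$, which is not available since $\Delta$ is nonlinear: there is no operator-norm calculus, and no incremental bound is assumed that would give a contraction.) The repair, as in~\cite{megretski_system_1997}, is to apply the IQC bound only to loops already known to be stable, obtaining the $\tau$-uniform gain $\gamma^\star$; then, for $|\nu-\tau|$ smaller than a fixed threshold depending only on $\gamma^\star$ and the bound on $\Delta$, use well-posedness at $\nu$ for existence in the extended space, rewrite the $\nu$-loop as the stable $\tau$-loop driven by $e+(\nu-\tau)\Delta(v)$, and transfer boundedness through truncations of the \emph{causal bounded $\tau$-loop map} (truncation is legitimate there, unlike with the IQC). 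The uniform step size then carries you from $\tau=0$ to $\tau=1$ in finitely many steps, and no closedness argument is needed.
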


\section{Frequency-domain condition}\label{sec:main}
In this section, we augment Definition~\ref{def:iqc} and the classical result of Theorem~\ref{thm:classic} to derive a frequency-domain condition that certifies exponential stability.

\begin{defn}
The operators $\rho_+,\:\rho_-$ are defined as the time-domain, time-dependent multipliers $\rho^k, \rho^{-k}$, respectively, where $\rho\in(0,1)$ is a defined constant.
\end{defn}
\begin{rem}\label{rem:rho}
The operator $\rho_- \circ (G(z) \circ \rho_+)$ is equivalent to the operator $G(\rho z)$. This follows from the fact that, for any constant $a>0$ and signal $u_k$, the $z$-transform of $a^{-k}u_k$ is given by $\hat u(az)$. See Fig.~\ref{fig:rho_comp} for an illustration.
\begin{figure}[thpb]
  \centering
  \includegraphics{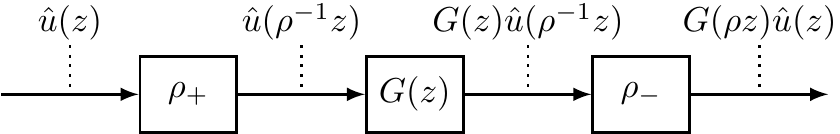}
  \caption{Illustration of Remark \ref{rem:rho}.}
  \label{fig:rho_comp}
\end{figure}
\end{rem}
In order to show exponential stability of the system in Fig.~\ref{fig:lure1}, we will relate it to BIBO stability of the modified system shown in Fig.~\ref{fig:lure2}. This equivalence is closely related to the theory of stability multipliers \cite{safonov_zames-falb_2000}.

\begin{figure}[thpb]
  \centering
  \includegraphics{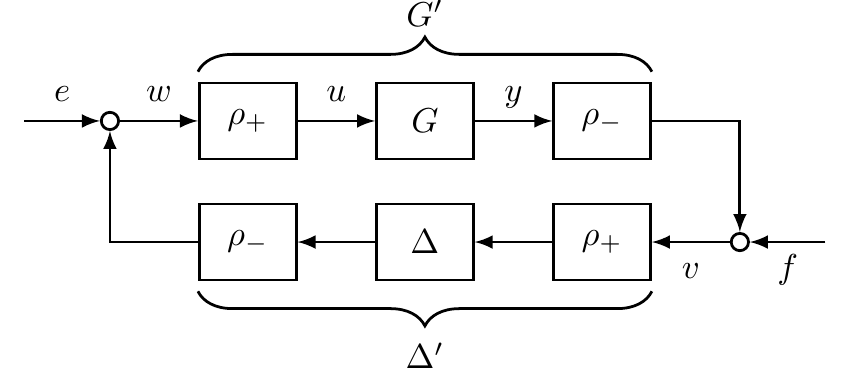}
  \caption{Modified feedback diagram with additional multipliers and inputs. For appropriately chosen $e$ and $f$ and with zero initial condition, we show how this diagram is equivalent to that of Fig.~\ref{fig:lure1}.}
  \label{fig:lure2}
\end{figure}

\begin{prop}\label{prop:exp}
  Suppose $G(z)$ has a minimal realization $(A,B,C,D)$. If the interconnection in Fig.~\ref{fig:lure2} is BIBO stable, then the interconnection in Fig.~\ref{fig:lure1} with initial state~$x_0$ is exponentially stable.
  
\end{prop}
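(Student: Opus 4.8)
The plan is to reduce the assertion to the BIBO gain bound that Fig.~\ref{fig:lure2} is assumed to satisfy, using the time-domain rescaling of Remark~\ref{rem:rho}. Decay of $x_k$ at rate $\rho$ is precisely boundedness of the rescaled state $\rho^{-k}x_k$ (in fact $(\rho^{-k}x_k)\in\ltwo$), and the rescaled trajectory of Fig.~\ref{fig:lure1} will turn out to be exactly a trajectory of Fig.~\ref{fig:lure2} with zero initial state and a particular $\ltwo$ input determined by $x_0$. Applying the gain bound to that trajectory then bounds the loop signals, from which the state bound follows.

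First I would write the equations of Fig.~\ref{fig:lure1} with $e=f=0$ and initial state $x_0$: $x_{k+1}=Ax_k+Bw_k$, $v_k=Cx_k+Dw_k$, $w=\Delta(v)$. Putting $\bar x_k:=\rho^{-k}x_k$, $\bar v_k:=\rho^{-k}v_k$, $\bar w_k:=\rho^{-k}w_k$, a one-line computation gives $\bar x_{k+1}=(\rho^{-1}A)\bar x_k+(\rho^{-1}B)\bar w_k$ with $\bar x_0=x_0$, $\bar v_k=C\bar x_k+D\bar w_k$, and $\bar w=(\rho_-\circ\Delta\circ\rho_+)(\bar v)$, i.e.\ the rescaled $\Delta$-block of Fig.~\ref{fig:lure2}. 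Since $(\rho^{-1}A,\rho^{-1}B,C,D)$ realizes $G(\rho z)$ (Remark~\ref{rem:rho} read in state space), splitting the state into its free and forced parts gives $\bar v=G(\rho z)\bar w+e$ with $e_k:=\rho^{-k}CA^kx_0$, the remaining exogenous input of Fig.~\ref{fig:lure2} being $f=0$. Hence the rescaled Fig.~\ref{fig:lure1} trajectory is a causal response of Fig.~\ref{fig:lure2} to the $\ltwo$ input $\bigl((\rho^{-k}CA^kx_0)_k,\,0\bigr)$ with zero initial state; well-posedness (part of BIBO stability) forces it to be \emph{the} response, so the gain bound applies to it.

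Next I would check $e\in\ltwo$ and estimate it. As $G(z)\in\RHinf$ with a minimal realization, $A$ is Schur, and --- under the operative assumption that $\rho$ exceeds the spectral radius of $A$, which is necessary for \emph{any} rate-$\rho$ bound on the free response $A^kx_0$ and is consistent with $G(\rho z)\in\RHinf$ --- we obtain $\norm{e}^2=\sum_{k\ge 0}\rho^{-2k}\norm{CA^kx_0}^2\le K\norm{x_0}^2$ with $K:=\norm{C}^2\sum_{k\ge 0}\norm{(\rho^{-1}A)^k}^2<\infty$. The gain bound of Fig.~\ref{fig:lure2} then yields $\norm{\bar v}^2+\norm{\bar w}^2\le\gamma\norm{e}^2\le\gamma K\norm{x_0}^2$, so in particular $\norm{\bar w}_{\ltwo}\le\sqrt{\gamma K}\,\norm{x_0}$. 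Finally, from $x_k=A^kx_0+\sum_{j=0}^{k-1}A^{k-1-j}Bw_j$, dividing by $\rho^k$ gives $\rho^{-k}x_k=(\rho^{-1}A)^kx_0+\sum_{j=0}^{k-1}(\rho^{-1}A)^{k-1-j}(\rho^{-1}B)\bar w_j$: a term bounded by $(\sup_k\norm{(\rho^{-1}A)^k})\norm{x_0}$, plus a discrete convolution of the absolutely summable sequence $\bigl((\rho^{-1}A)^i(\rho^{-1}B)\bigr)_i$ with the $\ltwo$ sequence $\bar w$, which Young's inequality bounds in $\ell_\infty$ by a constant times $\norm{\bar w}_{\ltwo}$. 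Collecting terms, $\norm{x_k}\le c\,\rho^k\norm{x_0}$ for all $k$ with $c$ an explicit constant --- the claimed exponential stability.

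The scaling identities and the convolution estimate are routine; the delicate step is the passage from a nonzero initial condition to a zero-state BIBO question --- identifying which exogenous input of Fig.~\ref{fig:lure2} reproduces the free response, and verifying that this input genuinely lies in $\ltwo$. That is exactly where minimality of the realization and the condition that $\rho$ exceed the spectral radius of $A$ (equivalently $G(\rho z)\in\RHinf$) enter, and it is the step I would treat most carefully.
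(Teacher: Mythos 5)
Your reduction of the rescaled Fig.~\ref{fig:lure1} trajectory to a zero-initial-state trajectory of Fig.~\ref{fig:lure2} is fine as far as it goes, but the argument has a genuine gap relative to the statement being proved: it works only under the extra hypothesis that the spectral radius of $A$ is strictly less than $\rho$ (equivalently, by minimality, $G(\rho z)\in\RHinf$), which Proposition~\ref{prop:exp} does not assume. You need this assumption twice --- to put the injected signal $e_k=\rho^{-k}CA^k x_0$ in $\ltwo$ so that the gain bound applies at all, and to make $(\rho^{-1}A)^k$ bounded and the kernel $(\rho^{-1}A)^i\rho^{-1}B$ summable in the final variation-of-constants estimate. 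The justification you give for adopting it --- that $\rho$ exceeding the spectral radius of $A$ is ``necessary for any rate-$\rho$ bound on the free response $A^kx_0$'' --- conflates the open-loop free response with the closed-loop trajectory: with $w=\Delta(v)\neq 0$ the closed loop can decay at rate $\rho$ even when $A^kx_0$ does not, and BIBO stability of Fig.~\ref{fig:lure2} is a zero-state input--output property that does not force $\rho^{-1}A$ to be Schur (the transformed nonlinearity $\Delta'$ may internally stabilize an unstable $G(\rho z)$). In such a case your chosen exogenous input is not in $\ltwo$ and the proof collapses, while the proposition still asserts (and the paper still proves) decay at rate $\rho$.

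The paper's proof avoids any spectral condition by using minimality in full. Controllability supplies a finitely supported pair $(e,f)$ on the first $n$ steps that steers the zero state of Fig.~\ref{fig:lure2} to $x_0$ --- finitely supported, hence trivially in $\ltwo$ with norm controlled by $x_0$ and no comparison between $\rho$ and the eigenvalues of $A$ --- after which the two interconnections coincide up to the shift and scaling, so the gain bound gives $\|u_k\|,\|y_k\|\le c\,\rho^k$. Observability then provides a deadbeat observer ($A+LC$ nilpotent) expressing $x_k$ through a fixed finite window of past $(u,y)$, which converts the input--output decay into $\|x_k\|\le C\rho^k\|x_0\|$. Note that your argument never uses controllability and uses minimality only to try to infer a spectral property of $A$ --- a sign that it establishes a weaker statement. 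Within the scope of Theorem~\ref{thm:main}, where $G(\rho z)\in\RHinf$ is an explicit hypothesis, your route is sound and is a reasonable, more direct alternative; but as a proof of Proposition~\ref{prop:exp} as stated it does not suffice.
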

\begin{proof}
Intuitively, if $v$ and $w$ are \emph{small} in the BIBO sense compared to $e$ and $f$, then $y$ must be even smaller. \if\MODE1
A complete proof is included in the Appendix.
\else
See Appendix~\ref{sec:prop5proof} for a detailed proof.
\fi
  \end{proof}

In an effort to define IQCs for the transformed system shown in Fig.~\ref{fig:lure2}, we introduce the concept of the \emph{$\rho$-IQC}.
\begin{defn}[$\rho$-IQC]\label{def:piqc}
  Signals $y \in \ltwo^\rho$ and $u \in \ltwo^\rho$  with associated $z$-transforms  $\hat y(z)$ and $\hat u(z)$  satisfy the \emph{$\rho$-IQC} defined by a Hermitian complex-valued function $\Pi$ if
  \begin{equation}\label{rr}
  \int_\T\,
  \begin{bmatrix} \hat y(\rho z)\\ \hat u(\rho z) \end{bmatrix}^* \Pi(\rho z)
  \begin{bmatrix} \hat y(\rho z)\\ \hat u(\rho z) \end{bmatrix} dz \geq 0\:.
  \end{equation}
  A bounded causal operator $\Delta$ satisfies the $\rho$-IQC defined by $\Pi$ if~\eqref{rr} holds for all $y\in\ltwo^\rho$ with $u = \Delta(y)$. We also define $\IQC(\Pi(z),\rho)$ to be the set of all $\Delta$ that satisfy the $\rho$-IQC defined by $\Pi$. 
\end{defn}

Note that the concept of a $\rho$-IQC generalizes that of a regular IQC. Indeed, we have $\IQC(\Pi(z),1) = \IQC(\Pi(z))$. The restriction of $u \in \ltwo^\rho$ and $y \in \ltwo^\rho$ corresponds to the restriction of $u \in \ltwo$ and $y \in \ltwo$ in the classical definition of IQC \cite{megretski_system_1997}. 
Now equipped with $\rho$-IQCs, we can relate $\Delta'$ in Fig.~\ref{fig:lure2} to  $\Delta$ in Fig.~\ref{fig:lure1}.

\begin{prop}\label{prop:iqcequiv}
Let $\Delta$ be a bounded causal operator, and let $\Pi$ be a Hermitian complex-valued function. As in Fig.~\ref{fig:lure2}, define $\Delta' \defeq \rho_- \circ (\Delta \circ \rho_+)$. Then the following statements are equivalent.
\begin{enumerate}[(i)]
  \item $\Delta \in \IQC(\Pi(z),\rho)$
  \item $\Delta' \in \IQC(\Pi(\rho z) )$
\end{enumerate}
\end{prop}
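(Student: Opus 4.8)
The plan is to produce an explicit invertible change of variables that turns the $\rho$-IQC integral~\eqref{rr} for $\Delta$ into the ordinary IQC integral~\eqref{iqceq} for $\Delta'$ with weight $\Pi(\rho z)$, \emph{term by term}, so that the proposition reduces to a bijection between the signals over which the two inequalities are quantified. The essential fact is the one recorded in Remark~\ref{rem:rho}: if $\tilde y = (\tilde y_0,\tilde y_1,\dots)$ and we set $y \defeq \rho_+ \tilde y$, i.e.\ $y_k = \rho^k \tilde y_k$, then $\hat y(z) = \hat{\tilde y}(z/\rho)$ and hence $\hat y(\rho z) = \hat{\tilde y}(z)$. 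Moreover $\sum_k \rho^{-2k}|y_k|^2 = \sum_k |\tilde y_k|^2$, so $\tilde y \mapsto \rho_+\tilde y$ is a bijection from $\ltwo$ onto $\ltwo^\rho$ with inverse $y \mapsto \rho_- y$.

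First I would fix $\tilde y \in \ltwo$, set $y \defeq \rho_+\tilde y \in \ltwo^\rho$, and compare the outputs. Writing $\tilde u \defeq \Delta'(\tilde y)$ and $u \defeq \Delta(y)$, the definition $\Delta' = \rho_- \circ \Delta \circ \rho_+$ gives $\tilde u = \rho_-\big(\Delta(\rho_+\tilde y)\big) = \rho_- u$, equivalently $u = \rho_+\tilde u$. Thus $(y,u)$ and $(\tilde y,\tilde u)$ correspond under componentwise multiplication by $\rho^{\pm k}$, this correspondence is a bijection, and $u = \Delta(y)$ holds exactly when $\tilde u = \Delta'(\tilde y)$. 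Applying the $z$-transform identity of the previous paragraph coordinatewise, $\hat y(\rho z) = \hat{\tilde y}(z)$ and $\hat u(\rho z) = \hat{\tilde u}(z)$ for $z\in\T$.

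Next I would substitute these identities into~\eqref{rr}. The integrand of~\eqref{rr} for the pair $(y,u)$,
\[
\bmat{\hat y(\rho z)\\ \hat u(\rho z)}^{*}\Pi(\rho z)\bmat{\hat y(\rho z)\\ \hat u(\rho z)},
\]
becomes exactly $\bmat{\hat{\tilde y}(z)\\ \hat{\tilde u}(z)}^{*}\Pi(\rho z)\bmat{\hat{\tilde y}(z)\\ \hat{\tilde u}(z)}$, which is precisely the integrand of~\eqref{iqceq} for the pair $(\tilde y,\tilde u)$ with weight $\Pi(\rho z)$. Hence, for each $\tilde y\in\ltwo$, inequality~\eqref{rr} holds for $(y,u)$ if and only if~\eqref{iqceq} holds for $(\tilde y,\Delta'(\tilde y))$ with weight $\Pi(\rho z)$. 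Letting $\tilde y$ range over all of $\ltwo$ — equivalently, by the bijection, letting $y$ range over all of $\ltwo^\rho$ — yields the equivalence of (i) and (ii).

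There is no genuine analytic difficulty here; the proposition is essentially an identity in disguise. The only points I would be careful to state are the bookkeeping ones: that the multipliers $\rho_\pm$ are mutually inverse and intertwine the two signal spaces exactly as above, that conjugating a bounded causal operator by them again yields a bounded causal operator so that the IQC definitions genuinely apply to $\Delta'$, and that the relevant $z$-transforms lie in $L_2(\T)$ so that both integrals are well defined — all of which follow from $\rho\in(0,1)$ together with the inclusion $\ltwo^\rho\subset\ltwo$.
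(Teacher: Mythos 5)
Your proof is correct and follows essentially the same route as the paper's: the substitution $\hat v(z)=\hat y(\rho z)$, $\hat w(z)=\hat u(\rho z)$ from Remark~\ref{rem:rho} turns~\eqref{iqceq} for $\Delta'$ with weight $\Pi(\rho z)$ into~\eqref{rr} for $\Delta$. You simply spell out more explicitly the bijection $\rho_\pm$ between $\ltwo$ and $\ltwo^\rho$ that the paper leaves implicit.
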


\begin{proof}
We define the discrete Fourier transform of the input and output of $\Delta$ as $\hat y(z)$ and $\hat u(z)$, respectively. Then, from the definition of $\rho_+$ and $\rho_-$, we have that $\hat w(z) = \hat u(\rho z)$ and $\hat v(z) = \hat y(\rho z)$. Substituting into the IQC definition~\eqref{iqceq}, we obtain~\eqref{rr} as required.
\end{proof}
Proposition~\ref{prop:iqcequiv} is illustrated in Fig.~\ref{fig:rho_equiv}.
\begin{figure}[thpb]
  \centering
  \includegraphics{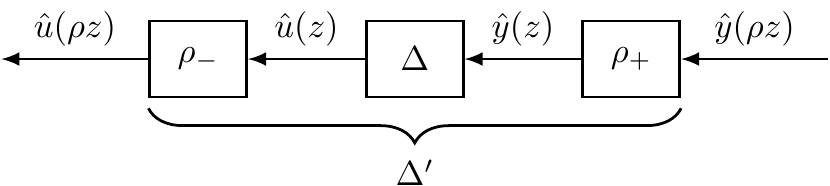}
  \caption{Illustration of Proposition \ref{prop:iqcequiv}.}
  \label{fig:rho_equiv}
\end{figure}

We now state our main result, an exponential stability theorem analogous to the classical result in Theorem~\ref{thm:classic}.

\begin{thm}[Exponential stability]
\label{thm:main}
Fix $\rho \in (0,1)$. Let $G(\rho z) \in \RHinf^{m\times n}$ and $\Delta$ be a bounded causal operator such that $\Delta':=\rho_- \circ (\Delta \circ \rho_+)$ is also bounded and causal. Furthermore, suppose that:
  \begin{enumerate}[i)]
    \item for every $\tau \in [0,1]$, the interconnection of $G$ and $\tau \Delta$ is well-posed.
    \item for every $\tau \in [0,1]$, we have $\tau\Delta \in \IQC(\Pi(z),\rho)$.
    \item there exists $\epsilon > 0$ such that
      \begin{equation}\label{eq:expfdi}
          \begin{bmatrix}
          G(\rho z)\\I
          \end{bmatrix}^*
          \Pi(\rho z)
          \begin{bmatrix}
          G(\rho z)\\I
          \end{bmatrix} \preceq -\epsilon I, \quad \forall z\in \T\:.
      \end{equation}
  \end{enumerate}
Then, the interconnection of $G$ and $\Delta$ shown in Fig.~\ref{fig:lure1} is exponentially stable with rate $\rho$.
\end{thm}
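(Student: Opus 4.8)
The plan is to reduce Theorem~\ref{thm:main} to the classical BIBO result, Theorem~\ref{thm:classic}, applied to the \emph{modified} interconnection of Fig.~\ref{fig:lure2}, and then to invoke Proposition~\ref{prop:exp} to translate BIBO stability of that modified system into exponential stability of the original one. Concretely, by Remark~\ref{rem:rho} the modified plant is $G(\rho z) = \rho_-\circ(G\circ\rho_+)$ and the modified uncertainty is $\Delta' = \rho_-\circ(\Delta\circ\rho_+)$, so Fig.~\ref{fig:lure2} is precisely the feedback interconnection of $G(\rho z)$ with $\Delta'$; I would apply Theorem~\ref{thm:classic} to this interconnection with the multiplier $\Pi(\rho z)$.

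Before that, I would record two standing facts. First, since $G(\rho z)\in\RHinf^{m\times n}$, the transfer function $G$ is proper and rational with real coefficients and all its poles in the disk $\{z : |z|<\rho\}$, hence admits a minimal realization $(A,B,C,D)$ --- exactly what Proposition~\ref{prop:exp} requires. Second, the hypotheses of Theorem~\ref{thm:main} directly provide that $G(\rho z)\in\RHinf^{m\times n}$ and that $\Delta'$ is bounded and causal, so both are legitimate inputs to Theorem~\ref{thm:classic}.

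Next I would check the three hypotheses of Theorem~\ref{thm:classic} for the pair $(G(\rho z),\Delta')$ with multiplier $\Pi(\rho z)$, for every $\tau\in[0,1]$. \emph{(i) Well-posedness:} the loop of $G(\rho z)$ and $\tau\Delta'$ is obtained from the loop of $G$ and $\tau\Delta$ by conjugating every internal and external signal by the multiplier $\rho_-$ (equivalently by $\rho_+$, its inverse). Setting $\tilde v = \rho_- v$, $\tilde w = \rho_- w$, $\tilde e = \rho_- e$, $\tilde f = \rho_- f$, one checks that the loop equations of Fig.~\ref{fig:lure2} are the $\rho_-$-images of those of Fig.~\ref{fig:lure1}; since $\rho_+$ and $\rho_-$ are bijective causal operators with causal inverses, the map $(\tilde v,\tilde w)\mapsto(\tilde e,\tilde f)$ has a causal inverse if and only if $(v,w)\mapsto(e,f)$ does, so well-posedness transfers from hypothesis~(i) of Theorem~\ref{thm:main}. \emph{(ii) IQC membership:} scaling commutes with $\rho_\pm$, so $(\tau\Delta)' = \rho_-\circ((\tau\Delta)\circ\rho_+) = \tau\Delta'$; hypothesis~(ii) of Theorem~\ref{thm:main} says $\tau\Delta\in\IQC(\Pi(z),\rho)$, which by Proposition~\ref{prop:iqcequiv} is equivalent to $\tau\Delta'\in\IQC(\Pi(\rho z))$. \emph{(iii) Frequency-domain inequality:} this is literally~\eqref{eq:expfdi}, read as condition~(iii) of Theorem~\ref{thm:classic} for the plant $G(\rho z)$ and multiplier $\Pi(\rho z)$.

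With these three conditions in hand, Theorem~\ref{thm:classic} yields that the feedback interconnection of $G(\rho z)$ and $\Delta'$, i.e.\ the system of Fig.~\ref{fig:lure2}, is BIBO stable. Proposition~\ref{prop:exp} then applies --- its minimality hypothesis holds by the first standing fact above --- and gives that the interconnection of $G$ and $\Delta$ in Fig.~\ref{fig:lure1} with initial state $x_0$ is exponentially stable with rate $\rho$, completing the proof. The main obstacle I anticipate is making the well-posedness transfer in~(i) fully rigorous: one must track exactly how the $\rho_+,\rho_-$ blocks in Fig.~\ref{fig:lure2} rescale each signal and confirm that the loop equations there are genuinely the $\rho_-$-conjugate of those of Fig.~\ref{fig:lure1}, so that causal invertibility is preserved. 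The remaining steps are essentially bookkeeping, since Definition~\ref{def:piqc} and the modified inequality~\eqref{eq:expfdi} were set up precisely so that Propositions~\ref{prop:exp} and~\ref{prop:iqcequiv} dovetail with Theorem~\ref{thm:classic}.
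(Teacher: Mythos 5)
Your proposal is correct and follows essentially the same route as the paper: apply Theorem~\ref{thm:classic} to the transformed loop of $G(\rho z)$ and $\Delta'$ with multiplier $\Pi(\rho z)$, using Proposition~\ref{prop:iqcequiv} for the IQC condition and the structural equivalence of Fig.~\ref{fig:lure1} and Fig.~\ref{fig:lure2} for well-posedness, then invoke Proposition~\ref{prop:exp} to pass from BIBO stability of the modified system to exponential stability of the original. Your added remarks on minimality of the realization and the explicit $\rho_-$-conjugation argument for well-posedness are fine elaborations of steps the paper treats more briefly.
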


\begin{proof}
We apply Theorem \ref{thm:classic} to the interconnection in Fig.~\ref{fig:lure2} with operators $G(\rho z)$ and $\Delta'$ and the IQC $\Pi(\rho z)$.

\if\MODE1
\begin{enumerate} [(a)]
  \item Since Fig.~\ref{fig:lure1} and Fig.~\ref{fig:lure2} have the same interconnection structure, well-posedness is equivalent. 
  \item Due to the equivalence of IQCs in Proposition \ref{prop:iqcequiv},
  \begin{align*}
  &\hspace{-3em}\tau \Delta \in \IQC(\Pi(z),\rho)\\
  &\; \iff \rho_- \circ ((\tau \Delta) \circ \rho_+) \in \IQC(\Pi(\rho z)) \\
  &\; \iff \tau (\rho_- \circ (\Delta \circ \rho_+)) \in \IQC(\Pi(\rho z))  \\
  &\; \iff \tau \Delta' \in \IQC(\Pi(\rho z)) \:.
  \end{align*}
  \item This is condition iii) of Theorem \ref{thm:classic} using $G(\rho z)$ and $\Delta'$.
\end{enumerate}
\else
\begin{enumerate} [(a)]
  \item Since Fig.~\ref{fig:lure1} and Fig.~\ref{fig:lure2} have the same interconnection structure, well-posedness is equivalent. 
  \item Due to the equivalence of IQCs in Proposition \ref{prop:iqcequiv},
  \begin{align*}
  \tau \Delta \in \IQC(\Pi(z),\rho)
  &\iff \rho_- \circ ((\tau \Delta) \circ \rho_+) \in \IQC(\Pi(\rho z)) \\
  &\iff \tau (\rho_- \circ (\Delta \circ \rho_+)) \in \IQC(\Pi(\rho z))  \\
  &\iff \tau \Delta' \in \IQC(\Pi(\rho z)) \:.
  \end{align*}
  \item This is condition iii) of Theorem \ref{thm:classic} using $G(\rho z)$ and $\Delta'$.
\end{enumerate}
\fi
Thus, these three conditions ensure BIBO stability of the system in Fig.~\ref{fig:lure2}. We then apply Proposition~\ref{prop:exp} to arrive at exponential stability of Fig.~\ref{fig:lure1}.

Note that the assumption $G(\rho z) \in \RHinf^{m\times n}$ restricts us to verifying rates that are no faster than the rate of convergence of the open-loop $G$, which corresponds to the largest (in magnitude) pole of $G(z)$. Assuming WLOG that $\Delta(0)=0$, this is clear as $\Delta\equiv0$ (corresponding to open-loop $G$) satisfies any $\rho$-IQC.
\end{proof}

\section{Computation}\label{sec:computation}

As in the classical IQC setting, to guarantee stability, the frequency-domain inequality (FDI) \eqref{eq:expfdi} must be verified for every $\omega \in [0,2\pi)$. However, if the IQC in question exhibits a particular factorization, then the discrete-time KYP Lemma can be applied to convert the infinite-dimensional FDI to a finite-dimensional LMI. We now review these results.

\begin{defn}\label{defn:factor}
We say $\Pi$ has a \emph{factorization} $(\Psi,M)$ if
\begin{equation*}
  \Pi(z) = \Psi(z)^*M \Psi(z)\:,
\end{equation*}
where $\Psi$ is a stable linear time-invariant system, $M$ is a constant Hermitian matrix, and $\Psi(z)^*$ denotes the conjugate transpose of $\Psi(z)$.
\end{defn}
\LL{\begin{rem}Definition~\ref{defn:factor} is similar to J-spectral factorization (see \cite{jspectral} and references therein), except we require them to hold for arbitrary $z\in\C$. Spectral factorizations are commonly evaluated on the unit circle for discrete systems (c.f. the imaginary axis for continuous-time systems). In such cases, we have $z^* = z^{-1}$ for all $z\in\T$ and $s^* = -s$ for all $s\in j\R$. For this reason, factorizations are conventionally written using the \emph{para-Hermitian conjugate} defined as $\Psi^\sim(z) \defeq \Psi^\tp(z^{-1})$ (c.f. $\Psi^\sim(s) \defeq \Psi^\tp(-s)$ for continuous time). Although these definitions are equivalent to $\Psi(z)^*$ (c.f. $\Psi(s)^*$) in general, we cannot use the para-Hermitian conjugate for our factorization because we require it to hold for all $z\in\C$.
\end{rem}}

\begin{rem}\label{rem:rhoIQC_time_domain}
If $\Pi(z)$ has a factorization $(\Psi,M)$ and $\Psi(\rho z)$ is stable, then by Parseval's Theorem,~\eqref{rr} is equivalent to 
  \begin{equation*}
    \sum_{k=0}^\infty \rho^{-2k}z_k^\tp  M z_k \geq 0\:,
\quad\text{where }
    z\defeq \Psi \begin{pmatrix}y\\u\end{pmatrix} \:.
  \end{equation*}
\end{rem}

The KYP lemma, stated below, is attributed to Kalman, Yakubovich, and Popov. A simple proof and further references can be found in~\cite{rantzer_KYP}.

\begin{lem}[Discrete-time KYP Lemma]\label{lem:kyp}
  Suppose $A$, $B$, $M$ are given matrices where $M$ is Hermitian and $A$ has no eigenvalues on the unit circle. Then the following FDI:
  \begin{equation*}
  \begin{bmatrix}
  (zI-A)^{-1}B\\I
  \end{bmatrix}^* M 
  \begin{bmatrix}
  (zI-A)^{-1}B\\I
  \end{bmatrix} \prec 0
  \end{equation*}
  holds for all $z\in\T$ if and only if there exists a $P=P^\tp$ and $\lambda \ge 0$ satisfying the LMI
  \begin{equation*}
  \begin{bmatrix}
  A&B\\I&0
  \end{bmatrix}^\tp
  \begin{bmatrix}
  P&0\\0&-P
  \end{bmatrix}
  \begin{bmatrix}
  A&B\\I&0
  \end{bmatrix} + \lambda M \prec 0\:.
  \end{equation*}
\end{lem}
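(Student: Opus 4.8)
The plan is to prove both directions of the equivalence. The easy direction is ``LMI $\Rightarrow$ FDI''. Suppose $P = P^\tp$ and $\lambda \ge 0$ satisfy the LMI. Fix $z \in \T$ and let $\xi = (zI - A)^{-1} B$, so that $z\xi = A\xi + B$. Then left-multiply the LMI by $\bmat{\xi^* & I}$ and right-multiply by its conjugate transpose. The term coming from $\bmat{P & 0\\0 & -P}$ becomes $(A\xi + B)^* P (A\xi + B) - \xi^* P \xi = (z\xi)^* P (z\xi) - \xi^* P \xi = |z|^2\, \xi^* P \xi - \xi^* P \xi = 0$ since $|z| = 1$. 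What remains is $\lambda\, \bmat{\xi^* & I} M \bmat{\xi\\ I} < 0$. If $\lambda > 0$ this already gives the strict FDI; if $\lambda = 0$ one gets $0 < 0$, a contradiction, so in fact the LMI with $\lambda = 0$ is infeasible unless the FDI degenerates — more carefully, strictness of the LMI forces $\lambda\, \bmat{\xi^*&I}M\bmat{\xi\\I} < 0$ for every unit-norm $z$ and every such $\xi$, hence $\lambda > 0$ and the FDI holds after dividing by $\lambda$. (A clean way to sidestep the $\lambda=0$ bookkeeping is to note that the quadratic form in the remaining inequality must be negative for all $z\in\T$, which is exactly the claimed FDI scaled by $\lambda>0$.)

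The harder direction is ``FDI $\Rightarrow$ LMI'', and this is where the real work lies. The standard route is a separation/convexity argument. Define the cone
\[
\mathcal{S} \defeq \set{\bmat{A & B\\ I & 0}^\tp \bmat{P & 0\\ 0 & -P}\bmat{A & B\\ I & 0} + \lambda M}{P = P^\tp,\ \lambda \ge 0},
\]
a convex cone in the space of Hermitian matrices. We want to show $\mathcal{S}$ contains a negative definite element. Suppose not; then $\mathcal{S}$ is disjoint from the open cone of negative definite matrices, so by a separating-hyperplane argument there is a nonzero positive semidefinite $Q \succeq 0$ with $\langle Q, S\rangle \ge 0$ for all $S \in \mathcal{S}$. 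Testing against $S = \lambda M$ for $\lambda \ge 0$ gives $\langle Q, M\rangle \ge 0$; testing against the terms linear in $P$ (which can have either sign) gives a linear constraint that forces $Q$ to be, in an appropriate sense, supported on trajectories — concretely, $\langle Q, \bmat{A&B\\I&0}^\tp \bmat{P&0\\0&-P}\bmat{A&B\\I&0}\rangle = 0$ for all symmetric $P$. The key step is then to show such a $Q$ can be decomposed (via a moment/rank-one decomposition, using that $A$ has no eigenvalues on $\T$) into a nonnegative combination of rank-one terms $\bmat{\xi\\ I}\bmat{\xi\\ I}^*$ with $\xi = (zI-A)^{-1}B\eta$ for $z\in\T$, each contradicting the strict FDI. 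This is the crux and the main obstacle: establishing the trajectory decomposition of the separating functional, i.e.\ that the constraints imposed by the $P$-term mean $Q$ is a (finite, by Carathéodory) conic combination of these unit-circle rank-one generators.

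An alternative, and perhaps cleaner, route — the one I would actually write up, since the paper cites \cite{rantzer_KYP} for ``a simple proof'' — is to invoke the lossless/completion-of-squares version: the FDI says a certain para-Hermitian transfer function $\bmat{(zI-A)^{-1}B\\I}^* M \bmat{(zI-A)^{-1}B\\I}$ is uniformly negative on $\T$, which by a state-space spectral factorization (available since $A$ has no unit-circle eigenvalues) can be written as a constant negative definite matrix plus a term of the form $(z\xi)^*P(z\xi) - \xi^*P\xi$ for a suitable $P = P^\tp$, matching the algebraic identity from the easy direction in reverse. Equating state-space data and collecting terms then yields exactly the LMI with some $\lambda \ge 0$. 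Either way, the structure of the final write-up is: (1) the short algebraic identity exploiting $|z|=1$ for both directions; (2) for the nontrivial direction, the separation argument together with the trajectory-decomposition lemma for the separating functional, or equivalently the spectral-factorization construction; and (3) a remark that one may take $\lambda \ge 0$ (and in fact $\lambda>0$ when the FDI is strict and nondegenerate). I expect step (2) to consume most of the proof, and I would lean on \cite{rantzer_KYP} for the technical core rather than reproduce it in full.
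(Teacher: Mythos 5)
The paper gives no proof of this lemma at all---it is quoted as a known background result with a pointer to \cite{rantzer_KYP}---and your proposal is consistent with that treatment: your LMI-to-FDI direction is correct and complete (including the observation that strictness rules out $\lambda=0$), and your converse is sketched along exactly the standard separation/trajectory-decomposition route of the very reference the paper cites, with the technical crux deferred to it just as the paper does. One minor slip worth fixing if you write it out: the rank-one generators in the decomposition of the separating functional $Q$ should be built from the stacked vectors $\bigl((zI-A)^{-1}B\eta,\ \eta\bigr)$ with $z\in\T$ and $\eta\in\C^m$ (so the bottom block has the dimension of the identity channel), not from $[\xi;\,I]$ with $\xi=(zI-A)^{-1}B\eta$, which is dimensionally inconsistent.
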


\begin{cor}
Suppose the realization of $G$ is given by $(A,B,C,D)$ and assume $\Pi$ has a factorization $(\Psi,M)$, where the realization of $\Psi$ is given by
\begin{equation*}
\Psi = \left[\begin{array}{c|cc}
   A_\Psi & B_{\Psi_1} & B_{\Psi_2} \\ \hlinet
  C_\Psi & D_{\Psi_1} & D_{\Psi_2}
  \end{array}\right].
\end{equation*}
Then \eqref{eq:expfdi} is equivalent to the existence of $P=P^\tp $ and $\lambda\ge 0$ such that
  \begin{equation}\label{eq:explmi}
      \begin{bmatrix}  
    \hat A^\tp P\hat A-\rho^2 P & \hat A^\tp P\hat B\\ \hat B^\tp P \hat A & \hat B^\tp  P \hat B
    \end{bmatrix}
     + 
    \lambda \begin{bmatrix}
    \hat C^\tp \\ \hat D^\tp 
    \end{bmatrix}
    M
    \begin{bmatrix}
    \hat C & \hat D
    \end{bmatrix}
     \prec 0
  \end{equation}
  where $(\hat A, \hat B, \hat C, \hat D)$ are defined as
  \begin{equation*}
  \stsp{\hat A}{\hat B}{\hat C}{\hat D} \defeq
  \left[\begin{array}{cc|c}
    A & 0 & B \\
    B_{\Psi_1} C & A_\Psi & B_{\Psi_2}+B_{\Psi_1}D \\ \hlinet
    D_{\Psi_1}C & C_\Psi & D_{\Psi_2} + D_{\Psi_1}D
  \end{array}\right]\:.
  \end{equation*}
  
\end{cor}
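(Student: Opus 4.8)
The plan is to collapse the factorization of $\Pi$ into a single transfer matrix, reduce the infinite-dimensional FDI~\eqref{eq:expfdi} to a finite-dimensional LMI via the discrete-time KYP Lemma (Lemma~\ref{lem:kyp}), and finish with a congruence transformation that absorbs the $\rho$-scalings so as to recover exactly~\eqref{eq:explmi}.

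First, I would rewrite the quadratic form in~\eqref{eq:expfdi}. Using $\Pi(z)=\Psi(z)^*M\Psi(z)$, its left-hand side equals $H(\rho z)^*M\,H(\rho z)$, where $H(z)\defeq\Psi(z)\begin{bmatrix}G(z)\\ I\end{bmatrix}$ is the transfer matrix obtained by driving $\Psi$ with the stacked signal $\begin{bmatrix}y\\u\end{bmatrix}=\begin{bmatrix}G\\I\end{bmatrix}u$. A routine series-interconnection computation --- stack the state of $G$ over the state of $\Psi$ and substitute $y=Cx+Du$ into the defining equations of $\Psi$ --- shows that $H$ has exactly the realization $(\hat A,\hat B,\hat C,\hat D)$ displayed in the statement.

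Next, I would put the resulting FDI into the form required by Lemma~\ref{lem:kyp}. Since $H(z)=\hat C(zI-\hat A)^{-1}\hat B+\hat D$, we have
\[
H(\rho z)=\rho^{-1}\hat C\,(zI-\rho^{-1}\hat A)^{-1}\hat B+\hat D=\begin{bmatrix}\rho^{-1}\hat C & \hat D\end{bmatrix}\begin{bmatrix}(zI-\rho^{-1}\hat A)^{-1}\hat B\\ I\end{bmatrix},
\]
so $H(\rho z)^*M H(\rho z)$ equals $\begin{bmatrix}(zI-\rho^{-1}\hat A)^{-1}\hat B\\ I\end{bmatrix}^{*}\tilde M\begin{bmatrix}(zI-\rho^{-1}\hat A)^{-1}\hat B\\ I\end{bmatrix}$ with the Hermitian matrix $\tilde M\defeq\begin{bmatrix}\rho^{-1}\hat C^\tp\\ \hat D^\tp\end{bmatrix}M\begin{bmatrix}\rho^{-1}\hat C & \hat D\end{bmatrix}$. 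Since $\T$ is compact and this form is continuous in $z$, the condition ``$\preceq-\epsilon I$ for some $\epsilon>0$'' in~\eqref{eq:expfdi} is equivalent to ``$\prec0$ for all $z\in\T$''. Applying Lemma~\ref{lem:kyp} with $(A,B,M)\leftarrow(\rho^{-1}\hat A,\hat B,\tilde M)$ then makes~\eqref{eq:expfdi} equivalent to the existence of $P=P^\tp$ and $\lambda\ge0$ with
\[
\begin{bmatrix}\rho^{-1}\hat A & \hat B\\ I & 0\end{bmatrix}^{\tp}\begin{bmatrix}P & 0\\ 0 & -P\end{bmatrix}\begin{bmatrix}\rho^{-1}\hat A & \hat B\\ I & 0\end{bmatrix}+\lambda\tilde M\prec0.
\]
The hypothesis of Lemma~\ref{lem:kyp} to verify is that $\rho^{-1}\hat A$ has no eigenvalues on $\T$: $\hat A$ is block lower-triangular with diagonal blocks $A$ and $A_\Psi$, and the standing assumptions ($G(\rho z)\in\RHinf$ and, as in Remark~\ref{rem:rhoIQC_time_domain}, stability of $\Psi(\rho z)$) place all eigenvalues of $A$ and $A_\Psi$ strictly inside the disk of radius $\rho$, so $\rho^{-1}\hat A$ is in fact Schur.

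Finally, I would clean up the $\rho$-factors by congruence. Expanding the LMI above block-wise yields top-left block $\rho^{-2}\hat A^\tp P\hat A-P+\lambda\rho^{-2}\hat C^\tp M\hat C$, off-diagonal block $\rho^{-1}(\hat A^\tp P\hat B+\lambda\hat C^\tp M\hat D)$, and bottom-right block $\hat B^\tp P\hat B+\lambda\hat D^\tp M\hat D$. Conjugating by $T\defeq\mathrm{diag}(\rho I,\,I)$ (i.e.\ replacing $X$ by $T^\tp X T$) scales these three blocks by $\rho^2$, $\rho$, and $1$ respectively, turning the LMI into precisely~\eqref{eq:explmi}; since congruence by an invertible matrix preserves definiteness, the two LMIs are equivalent, which closes the chain \eqref{eq:expfdi} $\iff$ KYP LMI $\iff$ \eqref{eq:explmi}. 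The main obstacle I anticipate is purely computational bookkeeping: deriving the series realization $(\hat A,\hat B,\hat C,\hat D)$ correctly, tracking the $\rho^{-1}$ factors when passing from $H(z)$ to $H(\rho z)$, and checking the eigenvalue-location hypothesis that the KYP Lemma needs.
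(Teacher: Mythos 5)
Your proof is correct and follows essentially the same route as the paper: form the series realization $(\hat A,\hat B,\hat C,\hat D)$ of $\Psi\begin{bmatrix}G\\ I\end{bmatrix}$, apply the discrete-time KYP Lemma to $\rho^{-1}\hat A$, and rescale to obtain \eqref{eq:explmi}. The only difference is bookkeeping — you attach the factor $\rho^{-1}$ to $\hat C$ and finish with a congruence by $\mathrm{diag}(\rho I, I)$, whereas the paper attaches it to $\hat B$ and multiplies the final LMI through by $\rho^2$ — and you in fact justify the eigenvalue hypothesis on $\rho^{-1}\hat A$ slightly more explicitly than the paper does.
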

\begin{proof}
A similar result is proven in \cite{seiler}, which we repeat here for completeness.
\begin{equation*}
\begin{bmatrix}
      G(z)\\I
      \end{bmatrix}^*
      \Pi(z)
      \begin{bmatrix}
      G(z)\\I
  \end{bmatrix}
= \\
  \begin{bmatrix}\star \end{bmatrix}^*
      M
      \begin{bmatrix}
      \hat C & \hat D
      \end{bmatrix}
      \begin{bmatrix}
      (zI-\hat A)^{-1}\hat B\\I
      \end{bmatrix}
\end{equation*}
where $\star$ denotes the repeated part of the quadratic form surrounding $M$.
Similarly, we have
\if\MODE1
\begin{align*}
  &\rho^{-2}\begin{bmatrix}
      G(\rho z)\\I
      \end{bmatrix}^*
      \Pi(\rho z)
      \begin{bmatrix}
      G(\rho z)\\I
  \end{bmatrix}\\
  &\qquad=
  \begin{bmatrix}\star\end{bmatrix}^*
      \rho^{-2}M
      \begin{bmatrix}
      \hat C & \hat D
      \end{bmatrix}
      \begin{bmatrix}
      (\rho zI-\hat A)^{-1}\hat B\\I
      \end{bmatrix}\\
  &\qquad= 
   \begin{bmatrix}\star\end{bmatrix}^*
    \rho^{-2}M
    \begin{bmatrix}
    \hat C & \hat D
    \end{bmatrix}
    \begin{bmatrix}
    (zI-\rho^{-1}\hat A)^{-1} \rho^{-1}\hat B\\I
    \end{bmatrix}\:.
\end{align*}
\else
\begin{align*}
  \rho^{-2}\begin{bmatrix}
      G(\rho z)\\I
      \end{bmatrix}^*
      \Pi(\rho z)
      \begin{bmatrix}
      G(\rho z)\\I
  \end{bmatrix}
  &=
  \begin{bmatrix}\star\end{bmatrix}^*
      \rho^{-2}M
      \begin{bmatrix}
      \hat C & \hat D
      \end{bmatrix}
      \begin{bmatrix}
      (\rho zI-\hat A)^{-1}\hat B\\I
      \end{bmatrix}\\
  &= 
   \begin{bmatrix}\star\end{bmatrix}^*
    \rho^{-2}M
    \begin{bmatrix}
    \hat C & \hat D
    \end{bmatrix}
    \begin{bmatrix}
    (zI-\rho^{-1}\hat A)^{-1} \rho^{-1}\hat B\\I
    \end{bmatrix}\:.
\end{align*}
\fi
If $\rho^{-1}\hat A$ has no eigenvalues on the unit circle, we may then invoke Lemma~\ref{lem:kyp} (applied to $\rho^{-1} \hat A$, $\rho^{-1} \hat B$, and the appropriate $M$ term) and multiply through by $\rho^2$ to show that \eqref{eq:expfdi} is equivalent to the existence of $P=P^\tp$ and $\lambda \ge 0$ such that \eqref{eq:explmi} holds, as required.
\end{proof}
With the advent of fast interior-point methods to solve LMIs, the feasibility of the LMI \eqref{eq:explmi} can often be quickly ascertained for any fixed $\rho^2$. Since the size of the LMI is often on the order of the size of the system $G$ and the IQC $\Pi$, many practical linear systems lead to LMIs of relatively moderate size.

Finding the best upper bound amounts to minimizing~$\rho^2$ subject to~\eqref{eq:explmi} being feasible. This type of problem occurs frequently in robust control and is known as a \emph{generalized eigenvalue optimization problem} (GEVP)~\cite{boyd_linear_1997}. The GEVP is not an LMI because~\eqref{eq:explmi} is not jointly linear in $\rho^2$ and $P$. One simple approach to solving the GEVP is to perform a bisection search on $\rho^2$, but there are more sophisticated methods available; see for example~\cite{boyd_elghaoui}.

\begin{rem}
The results above may also be carried through in continuous time. In that case, an equation analogous to \eqref{eq:expfdi} must be satisfied for $G(s-\lambda)$ for all $\omega \in [0,\infty)$, and can be verified by finding $P=P^\tp$ and $\lambda\ge 0$ such that
\begin{equation*}
  \begin{bmatrix}  
  \hat A^\tp P+ P \hat A-2\lambda P & P\hat B\\ \hat B^\tp P & 0
  \end{bmatrix} + 
  \lambda
  \begin{bmatrix}
  \hat C^\tp \\ \hat D^\tp 
  \end{bmatrix}
  M
  \begin{bmatrix}
  \hat C & \hat D
  \end{bmatrix}
   \prec 0\:.
\end{equation*}
An alternative continuous-time formulation is detailed in \cite{huseiler}.
\end{rem}

Applying a bisection search on $\rho^2$ requires the $\rho$-IQC to obey a certain monotonicity property, which we now define.

\begin{defn}[Monotonicity]
We say an IQC $\Pi(z)$ satisfies the \emph{monotonicity property} if for all $0 < \rho \le \rho' < 1$, we have:
\begin{align*}
\Delta \in IQC(\Pi(z),\rho) \implies \Delta \in IQC(\Pi(z),\rho').
\end{align*}
\end{defn}
All of the $\rho$-IQCs discussed herein satisfy the monotonicity property. If an IQC does not satisfy this property, then a grid search may be used instead of bisection.


\section{Exponential rates from gain bounds}\label{sec:L2exp}

In \cite{megretski_system_1997}, IQC analysis is used to certify $L_2$ stability of interconnected systems. As noted in~\cite{megretski_system_1997}: ``for general classes of ordinary differential equations, exponential stability is equivalent to the input/output stability...''.

While input/output stability often implies exponential stability, we will show through examples that exponential rates constructed from $\ell_2$ bounds can be very conservative. This fact justifies the use of a dedicated technique for certifying exponential rates rather than using an $\ell_2$ analysis.

We will need two results. First, a well-known generalization of Theorem~\ref{thm:classic} that allows us to optimize the $\ltwo$ gains over any pair of signals. We'll consider the scenario of Fig.~\ref{fig:lure_aug}, which is slightly more general than the setup in Fig.~\ref{fig:lure1}.

\begin{figure}[th]
	\centering
	\includegraphics{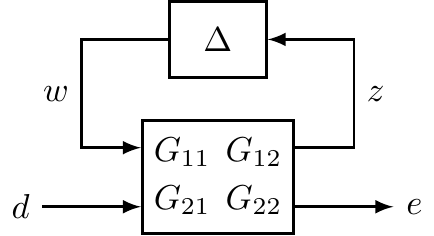}
	\caption{Augmented LTI system $G$ in feedback with a nonlinearity $\Delta$.}\label{fig:lure_aug}
\end{figure}
\noindent We would like to show that the input $d$ and output $e$ satisfy some IQC of the form
\begin{equation}\label{iqcperf}
\int_\T\,
\bmat{ \hat d(z)\\ \hat e(z) }^* \Pi_p(z)
\bmat{ \hat d(z)\\ \hat e(z) } dz \geq 0\:.
\end{equation}
The following result appears for example in~\cite{apkarian2006iqc} and a complete proof is given in~\cite{erin_thesis}.

\begin{thm}\label{thm:classic_performance}
Let $G(z) \in \RHinf^{m\times n}$ and let $\Delta$ be a bounded causal operator. Suppose $G$ is partitioned according to the dimensions of the input and output channels in Fig.~\ref{fig:lure_aug}. Suppose the interconnection of $G_{11}$ and $\Delta$ is well-posed and stable and $\Delta \in \IQC(\Pi(z))$. If there exists $\epsilon > 0$ such that  
\begin{equation*}
\begin{bmatrix}
\star
\end{bmatrix}^*
\begin{bmatrix}
\Pi(z) & 0 \\ 0 & -\Pi_p(z)
\end{bmatrix}
\begin{bmatrix}
G_{11}(z) & G_{12}(z) \\ I & 0 \\ 0 & I \\ G_{21}(z) & G_{22}(z)
\end{bmatrix} \preceq -\epsilon I \quad \forall z\in \T
\end{equation*}
then for all $d\in\ltwo$ and $e\in\ltwo$, Equation \eqref{iqcperf} is satisfied.
\end{thm}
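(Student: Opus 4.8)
The plan is to prove the statement directly from Definition~\ref{def:iqc} and the stability hypothesis, without re-running the homotopy argument behind Theorem~\ref{thm:classic}: since stability of the $G_{11}$--$\Delta$ loop is \emph{assumed} here rather than derived, all that remains is the familiar manipulation that feeds the closed-loop trajectory into the frequency-domain inequality and combines it with the supplied $\Pi$-IQC.

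First I would fix an arbitrary $d\in\ltwo$ and let $y$, $w$, $e$ denote the resulting closed-loop signals, so that $w=\Delta(y)$ and, in the frequency domain, $\hat y = G_{11}\hat w + G_{12}\hat d$ and $\hat e = G_{21}\hat w + G_{22}\hat d$ on $\T$. Well-posedness makes these signals well defined, and stability of the $G_{11}$--$\Delta$ interconnection, together with the fact that $G\in\RHinf^{m\times n}$ is a bounded operator on $\ltwo$, forces $w\in\ltwo$ and hence $y\in\ltwo$ and $e\in\ltwo$; in particular $\hat y,\hat w,\hat d,\hat e$ are all square-integrable on $\T$. (This is the reading I would attach to the phrase ``for all $d\in\ltwo$ and $e\in\ltwo$'': $e$ denotes the closed-loop response to $d$.)

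Next I would observe that the tall transfer matrix in the hypothesis maps the stacked signal $(\hat w,\hat d)$ to exactly $(\hat y,\hat w,\hat d,\hat e)$. Evaluating the matrix inequality at each $z\in\T$ on the vector $(\hat w(z),\hat d(z))$ and integrating over $\T$, the $-\epsilon I$ bound yields
\[
\int_\T \bmat{\hat y \\ \hat w}^{*}\Pi(z)\bmat{\hat y \\ \hat w}\,dz
\;-\;\int_\T \bmat{\hat d \\ \hat e}^{*}\Pi_p(z)\bmat{\hat d \\ \hat e}\,dz
\;\le\; -\epsilon\int_\T\left(|\hat w(z)|^2+|\hat d(z)|^2\right)dz \;\le\; 0,
\]
where I used the block-diagonal structure $\mathrm{diag}(\Pi,-\Pi_p)$ to split the quadratic form into its $(y,w)$ and $(d,e)$ parts. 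Since $w=\Delta(y)$ with $y\in\ltwo$ and $\Delta\in\IQC(\Pi(z))$, the first integral is nonnegative by Definition~\ref{def:iqc}; discarding it gives $\int_\T \bmat{\hat d \\ \hat e}^{*}\Pi_p(z)\bmat{\hat d \\ \hat e}\,dz\ge 0$, which is precisely \eqref{iqcperf}.

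The main obstacle is the functional-analytic bookkeeping rather than the algebra: one has to argue carefully that the stability hypothesis genuinely forces every internal signal into $\ltwo$ (so that all $z$-transforms exist as $L_2(\T)$ functions and the pointwise-to-integral passage is legitimate), and that each integrand is integrable — the latter following from boundedness of $G$, $\Pi$, and $\Pi_p$ on the compact set $\T$. It is also worth noting explicitly, in contrast to the proof of Theorem~\ref{thm:classic}, that no homotopy in $\tau$ is required here precisely because stability is part of the hypothesis.
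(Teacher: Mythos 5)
Your argument is correct, and it is essentially the standard proof of this performance result: the paper itself does not prove Theorem~\ref{thm:classic_performance} but instead points to \cite{apkarian2006iqc} and \cite{erin_thesis}, where the argument runs exactly as you describe — use the assumed stability of the $(G_{11},\Delta)$ loop (plus $G\in\RHinf^{m\times n}$, so $G_{12}d\in\ltwo$ acts as an $\ltwo$ disturbance injected into that loop) to conclude all internal signals lie in $\ltwo$, evaluate the frequency-domain inequality on $(\hat w(z),\hat d(z))$, integrate over $\T$, split the block-diagonal quadratic form, and discard the nonnegative $\Pi$-term using $\Delta\in\IQC(\Pi(z))$ applied to the closed-loop pair $(y,w)$. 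Your observation that no homotopy in $\tau$ is needed here, precisely because stability is hypothesized rather than derived, is also the right way to see why this theorem is lighter than Theorem~\ref{thm:classic} (and why Remark~\ref{rem:dont_need_stability} is needed when one wants to drop the stability hypothesis). The only bookkeeping worth stating explicitly, which you already flag, is that well-posedness plus BIBO stability of the $(G_{11},\Delta)$ interconnection gives the gain bound with the injected input $G_{12}d$, so $w,y,e\in\ltwo$ and all the $L_2(\T)$ integrals are finite.
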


\begin{rem}[see \cite{erin_thesis}]\label{rem:dont_need_stability}
In Theorem~\ref{thm:classic_performance}, if $\Pi_{p,22} \preceq 0$ and $(G_{11},\Delta)$ satisfies assumptions (i) and (ii) of Theorem~\ref{thm:classic}, then stability of the $(G_{11},\Delta)$ interconnection is automatic since the $(1,1)$ block of the FDI provides the remaining requirement for stability in Theorem~\ref{thm:classic_performance}.
\end{rem}

Next, we'll need a way to convert an $\ltwo$ gain into an exponential rate bound. The sequel is similar to~\cite[Prop.~1]{megretski_system_1997}, but presented here with an explicit rate construction and adapted for discrete time systems.

\begin{lem}\label{lem:L2exp}
	Define the recursion with $x_0=0$ by:
	\begin{equation}\label{eq1}
		\begin{aligned}
			x_{k+1} &= \phi(x_k) + g_k\qquad k=0,1,2,\dots
		\end{aligned}
	\end{equation}
	where $\phi:\R^n\to\R^n$ satisfies $\phi(0)=0$. 
	Suppose that there exists a constant $c > 0$ such that whenever $g \in \ltwo$, and $(g,x)$ is a valid trajectory of~\eqref{eq1}, then
	\begin{equation}\label{eq2}
		\| x \|_{\ltwo}^2 \leq c\, \| g \|_{\ltwo}^2.
	\end{equation}
	Then, we also have the bound
	\[
	\|x_{k+1}\|^2_2 \leq \sum_{i=0}^{k} c\left(1-\frac1{c}\right)^{k-i} \!\!\|g_i\|^2_2\,.
	\]
\end{lem}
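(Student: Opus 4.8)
The plan is to exploit the time-invariance of the recursion~\eqref{eq1} together with a truncation argument. The key observation is that the bound~\eqref{eq2} holds for \emph{every} valid $\ltwo$ trajectory starting from $x_0 = 0$, and that shifting a trajectory in time produces another valid trajectory (because $\phi$ does not depend on $k$). So first I would fix an integer $k \ge 0$ and consider the finite input $g^{(k)} \defeq (g_0, g_1, \dots, g_k, 0, 0, \dots)$, i.e. the original input truncated after index $k$ and extended by zeros. Since $g^{(k)} \in \ltwo$ (it has finitely many nonzero terms), the bound~\eqref{eq2} applies to the corresponding trajectory $x^{(k)}$, which agrees with the original $x$ through index $k+1$ because the recursion is causal and $g^{(k)}$ agrees with $g$ up to index $k$. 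In particular $\|x_{k+1}\|_2^2 = \|x^{(k)}_{k+1}\|_2^2 \le \|x^{(k)}\|_{\ltwo}^2 \le c \sum_{i=0}^k \|g_i\|_2^2$, which is already a (weaker) finite bound; the point of the rest of the argument is to sharpen the geometric weighting.

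Next I would set up an induction on $k$ to prove the claimed weighted bound. Let $P(k)$ denote the statement $\|x_{k+1}\|_2^2 \le \sum_{i=0}^k c\,(1-\tfrac1c)^{k-i}\|g_i\|_2^2$. For the base case $k=0$: applying~\eqref{eq2} to the single-term input $g^{(0)} = (g_0, 0, 0, \dots)$ gives $\|x_1\|_2^2 \le \|x^{(0)}\|_{\ltwo}^2$; but $x^{(0)}_1 = \phi(0) + g_0 = g_0$ and $x^{(0)}_{j} = \phi(x^{(0)}_{j-1})$ for $j \ge 2$, so $\|x^{(0)}\|_{\ltwo}^2 = \|g_0\|_2^2 + \sum_{j\ge 2}\|x^{(0)}_j\|_2^2 \ge \|g_0\|_2^2$, which alone is not enough. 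Instead the cleaner route is to apply~\eqref{eq2} directly to $g^{(k)}$ but then \emph{also} to the \emph{time-shifted tail}: observe that the sequence $(x_{k+1}, x_{k+2}, \dots)$ evolving under $x_{j+1} = \phi(x_j)$ with input $0$ but ``initial condition'' $x_{k+1}$ is not quite covered by~\eqref{eq2} since that requires $x_0 = 0$. So the right device is: apply~\eqref{eq2} to $g^{(k)}$ to bound the \emph{whole} squared $\ltwo$ norm $\sum_{j\ge 1}\|x^{(k)}_j\|_2^2$, split off the last ``live'' term, and use that the zero-input tail $(x^{(k)}_{k+2}, x^{(k)}_{k+3},\dots)$ has squared norm at least $0$ — this recovers only the weak bound. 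To get the geometric factor one instead writes, for the \emph{shifted} input $\tilde g \defeq (g_0,\dots,g_{k-1},0,0,\dots)$ (truncating one step earlier), $x_{k+1} = \phi(x_k^{(k-1)}) + g_k$ where $x^{(k-1)}$ agrees with $x$ through index $k$; then $\|x_{k+1}\|_2^2 \le (1+\alpha)\|\phi(x_k^{(k-1)})\|_2^2 + (1+\alpha^{-1})\|g_k\|_2^2$ for any $\alpha>0$, and one bounds $\|\phi(x_k^{(k-1)})\|_2^2$ via...

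\emph{The main obstacle} is precisely this: extracting the decay factor $(1-\tfrac1c)$ requires relating $\|\phi(x_k)\|_2^2$ back to earlier quantities, and the only handle we have is the aggregate bound~\eqref{eq2}, not a one-step contraction. The clean resolution — which I would aim for — is to apply~\eqref{eq2} to the truncated input $g^{(k)}$ to get $\sum_{j=1}^{\infty}\|x_j\|_2^2 \le c\sum_{i=0}^{k}\|g_i\|_2^2$ (all norms here for the $g^{(k)}$-trajectory, which matches $x$ up to step $k+1$), hence $\sum_{j=1}^{k+1}\|x_j\|_2^2 \le c\sum_{i=0}^{k}\|g_i\|_2^2$; this gives $\|x_{k+1}\|_2^2 \le c\sum_{i=0}^k\|g_i\|_2^2 - \sum_{j=1}^{k}\|x_j\|_2^2$. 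Now I would feed in the inductive hypotheses $\|x_{j}\|_2^2 \ge$ (nothing useful directly) — so instead I would run the induction in the form $\sum_{j=1}^{k+1}\|x_j\|_2^2 \le c\sum_{i=0}^k\|g_i\|_2^2$ and prove $\|x_{k+1}\|_2^2 \le \sum_{i=0}^{k} c(1-\tfrac1c)^{k-i}\|g_i\|_2^2$ by showing $\|x_{k+1}\|_2^2 \le (1-\tfrac1c)\|x_k\|_2^2 + \ldots$ is \emph{not} what holds; rather, summing the target bound over $j=1,\dots,k+1$ and checking it is consistent with the aggregate inequality via the geometric-series identity $\sum_{k\ge 0}(1-\tfrac1c)^k = c$ closes the loop. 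Concretely, I expect the induction step to reduce to the algebraic identity $c(1-\tfrac1c)^{k+1-i} = c(1-\tfrac1c)^{k-i} - (1-\tfrac1c)^{k-i}$ combined with the hypothesis, and the bookkeeping of which truncated trajectory is being used at each step is the part that needs care.
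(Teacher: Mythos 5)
Your plan has a genuine gap at exactly the point you yourself flag as the ``main obstacle,'' and the proposed finish does not close it. Applying the hypothesis~\eqref{eq2} only to truncations $g^{(k)}$ of the \emph{given} input yields the family of inequalities $\sum_{j=1}^{k+1}\|x_j\|_2^2 \le c\sum_{i=0}^{k}\|g_i\|_2^2$, and your final step (``summing the target bound over $j$ and checking it is consistent with the aggregate inequality'') is a consistency check, not an implication; in fact the implication fails. Take $g=(g_0,0,0,\dots)$ with $c>1$: every truncation inequality then says only that the partial sums of $\|x_j\|_2^2$ never exceed $c\|g_0\|_2^2$, and the profile $\|x_1\|_2^2=\|g_0\|_2^2$, $\|x_{K+1}\|_2^2=(c-1)\|g_0\|_2^2$ for a single large $K$, and zero otherwise satisfies all of them while violating the claimed bound $\|x_{K+1}\|_2^2\le c\left(1-\tfrac1c\right)^{K}\|g_0\|_2^2$ once $K\ge 2$. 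Whether or not such a profile is realizable by some $\phi$ is beside the point: your argument uses no information beyond those inequalities, so it cannot conclude the geometric decay. Relatedly, your intermediate step $\|x_{k+1}\|_2^2 \le c\sum_{i=0}^k\|g_i\|_2^2 - \sum_{j=1}^{k}\|x_j\|_2^2$ would need \emph{lower} bounds on the earlier $\|x_j\|_2^2$, which you correctly note are unavailable.

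The missing idea is to exploit~\eqref{eq2} for inputs other than truncations of the given one, namely inputs that first steer the state to an arbitrary reachable value and then continue arbitrarily. The paper does this by introducing the storage-like value function $V(\xi)\defeq\sup\bigl(\|x\|_{\ltwo}^2 - c\|g\|_{\ltwo}^2 + c\|\xi\|_2^2\bigr)$, the supremum taken over all valid trajectories with $x_1=\xi$ (every $\xi$ is reachable since $x_1=g_0$). One shows $\|\xi\|_2^2\le V(\xi)\le c\,\|\xi\|_2^2$ and then, by the shift/dynamic-programming structure of the supremum, the one-step dissipation inequality $V(\bar x_{k+1})\le\left(1-\tfrac1c\right)V(\bar x_k)+c\,\|\bar g_k\|_2^2$ --- precisely the per-step contraction that, as you correctly observe, does \emph{not} hold for $\|x_k\|_2^2$ itself. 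Iterating this recursion and applying the two-sided bound on $V$ gives the lemma. Without this (or some equivalent device converting the aggregate $\ltwo$ hypothesis into a per-step Lyapunov decrease), your truncation bookkeeping can only deliver the weak bound $\|x_{k+1}\|_2^2\le c\sum_{i=0}^{k}\|g_i\|_2^2$, not the geometric weighting.
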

\begin{proof}
	We write $(x,g)\in\mathcal{S}$ to denote a valid trajectory of~\eqref{eq1}.
	Define the function $V:\R^n\to\R$ as follows:
	\[
	V(\xi) \defeq \sup_{\substack{{x_1}=\xi \\ g\in\ltwo \\ (g,x)\in\mathcal{S}}}
	\left( \|x\|_{\ltwo}^2 - c\, \|g\|_{\ltwo}^2 + c\, \|\xi\|_2^2 \right)\:.
	\]
	The first step is to bound this function. Note that because $x_0=0$ and $\phi(0)=0$, we have $\xi = x_1 = g_0$. An easy lower bound is found by specializing to $g_1=g_2=\dots=0$. An upper bound is found by using~\eqref{eq2}. The result is that
	\begin{equation}\label{bnd}
		\|\xi\|_2^2 \le V(\xi) \le c\,\|\xi\|_2^2\:.
	\end{equation}
	Fix $(\bar g,\bar x)\in\mathcal{S}$ to be any feasible trajectory of~\eqref{eq1}. We may lower-bound $V(\bar x_1)$ by setting $g_1 = \bar g_1$ and shifting the entire $x$ and $g$ vectors forward one timestep:
	\begin{align*}
		V(\bar x_1) &\ge \sup_{\substack{{x_1}=\bar x_1 \\ g\in\ltwo,\,g_1 = \bar g_1 \\ (g,x)\in\mathcal{S}}}
		\left( \|x\|_{\ltwo}^2 - c\, \|g\|_{\ltwo}^2 + c\, \|\bar x_1\|_2^2 \right) \\
		&= V(\bar x_2) + \|\bar x_1\|_2^2 - c\, \|\bar g_1\|_2^2 \\
		& \ge V(\bar x_2) + \tfrac{1}{c}\, V(\bar x_1) - c\,\|\bar g_1\|_2^2\:,
	\end{align*}
	where we made use of the bound~\eqref{bnd} in the final step. Rearranging, we obtain
	\[
	V(\bar x_2) \le \left(1-\frac1{c}\right) V(\bar x_1) + c\, \|\bar g_1\|^2_2\:.
	\]
	We may lower-bound $V(\bar x_3)$ by setting $g_1 = \bar g_2$ and using a similar argument. Continuing in this fashion,
	\[
	V(\bar x_{k+1}) \le \left(1-\frac1{c}\right) V(\bar x_k) + c\, \|\bar g_k\|^2
	\quad\text{for }k=0,1,2,\dots\:.
	\]
	It follows that for all $k$, we have
	\[
	V(\bar x_{k+1}) \le \left(1-\frac1{c}\right)^k V(\bar x_1) + \sum_{i=1}^{k} c\left(1-\frac1{c}\right)^{k-i} \|\bar g_i\|^2\:.
	\]
	Applying the bound~\eqref{bnd} one more time, we conclude that
	\begin{align*}
		\|\bar x_{k+1}\|^2 &\le V(\bar x_{k+1}) \\
		&\le \left(1-\frac1{c}\right)^k V(\bar x_1) +\sum_{i=1}^{k} c \left(1-\frac1{c}\right)^{k-i}  \|\bar g_i\|^2 \\
		&\le c\left(1-\frac1{c}\right)^k \|\bar x_1\|^2 +\sum_{i=1}^{k} c \left(1-\frac1{c}\right)^{k-i}  \|\bar g_i\|^2 \\
		&= \sum_{i=0}^{k} c \left(1-\frac1{c}\right)^{k-i} \|\bar g_i\|^2 
	\end{align*}
	where we used in the last step that $\bar g_0 = \bar x_1$.
	This completes the proof.
\end{proof}

By combining Theorem~\ref{thm:classic_performance} and Lemma~\ref{lem:L2exp}, we can find exponential rate bounds for LTI systems in feedback with nonlinearities that satisfy IQCs.
First, use the setup of Fig.~\ref{fig:lure_aug} with $d=g$ and $e=x$. Then, the $\ltwo$ bound in~\eqref{eq2} is an IQC as in~\eqref{iqcperf}, with
\[
\Pi_p = \begin{bmatrix} c & 0 \\ 0 & -1 \end{bmatrix}\:.
\]
Then, transform Fig.~\ref{fig:lure1} into augmented form by setting
\[
\begin{bmatrix}
G_{11} & G_{12} \\ G_{21} & G_{22}
\end{bmatrix}
=
\left[\begin{array}{c|cc}
A & B & B \\ \hlinet
C & D & D \\
I & 0 & 0
\end{array}\right]\:.
\]
Finally, the appropriate initial condition can be set by using $g=d=\begin{bmatrix} x_0^\tp & 0 & 0 & \dots \end{bmatrix}^\tp$. Applying Lemma~\ref{lem:L2exp} leads to a bound of the form $\|x_{k+1}\|_2^2 \le c\left(1-\tfrac{1}{c}\right)^k \|x_0\|_2^2$. Or, put another way, an exponential rate of $\rho = \sqrt{1-\tfrac{1}{c}}$.

The FDI of Theorem~\ref{thm:classic_performance} can be transformed into an LMI in a manner similar to that described in Section~\ref{sec:computation}. This LMI is linear in $P$ and $c$, so it can be efficiently solved to find the minimal $c$. This in turn allows us to find the smallest exponential rate $\rho$.


\section{IQC Library}\label{sec:library}
In this section, we show some classes of nonlinearities that can be described by $\rho$-IQCs and therefore used in Theorem~\ref{thm:main} to prove robust exponential stability of an interconnected system. In the case where $\rho=1$, these $\rho$-IQCs reduce to standard IQCs~\cite{megretski_system_1997}. This class of IQCs will be constructed for single-input single-output systems, but they may be adapted for square multi-input multi-output systems where the nonlinearity is of the form $\diag(\{\Delta_i\})$ for a scalar $\Delta$.

\subsection{Noisy Multiplication}
As noted for continuous time in \cite{huseiler}, nonlinearities of the form $\Delta(y_k) \equiv \delta_k y_k$ for some unknown and/or time-varying $\delta_k$ may satisfy $\rho$-IQCs. As $\Delta$ and $\rho_{\pm}$ commute, in the parlance of Prop. \ref{prop:iqcequiv} we have that $\Delta=\Delta'$, so $\Delta\in\IQC(\Pi,1)$ implies $\Delta\in\IQC(\Pi,\rho)$. See \cite{megretski_system_1997} for examples of IQCs for noisy multiplication.

\subsection{Uncertain Time Delay}%
The following is a discrete-time analog of the $\rho$-IQC first developed in \cite{huseiler}.
Let $\Delta$ be the operator defined by 
\begin{align*}
\Delta(y_k) = &\; \begin{cases}
0, & k<\tau\\
y_{k-\tau}, & k \geq \tau
\end{cases}\:,
\end{align*}
for some unknown $\tau$ in $[0,\tau_0]$, where $\tau_0$ is known. Now, observe that
\begin{align*}
\Delta'(y_k) = \rho^{-k}\Delta(\rho^{-k}y_k) = &\;\rho^{-k} \cdot \begin{cases}
0, & k<\tau\\
\rho^{-(k-\tau)}y_{k-\tau}, & k \geq \tau
\end{cases}\\
= &\; \rho^{-\tau}\Delta(y_k)\:.
\end{align*}
Thus, we may transform the system into one with a block diagonal nonlinearity $\diag\{\Delta, \rho^{-\tau}\}$. We can then use existing IQCs for noisy multiplication and time delays, always using $\Pi(\rho z)$ instead of $\Pi(z)$ \cite{huseiler}. 

Alternatively, with any bounded Hermitian function $X(\rho z) = X(\rho z)^*\succeq 0$, we see that 
\begin{align*}
&\bmat{\hat y(\rho z)\\\hat u(\rho z)}^*
\bmat{\rho^{-2\tau_0}X(\rho z) & 0\\0&-X(\rho z)}
\bmat{\hat y(\rho z)\\\hat u(\rho z)} \\
&\quad =
\bmat{\hat y(\rho z)\\\rho^{-\tau}\hat y(\rho z)}^*
\bmat{\rho^{-2\tau_0}X(\rho z) & 0\\0&-X(\rho z)}
\bmat{\hat y(\rho z)\\\rho^{-\tau}\hat y(\rho z)} \\
&\quad = (\rho^{-2\tau_0} - \rho^{-2\tau})\hat y(\rho z)^*X(\rho z)\hat y(\rho z) \geq 0\:.
\end{align*}
Thus, $\Delta \in \IQC(\diag\{\rho^{-2\tau_0}X(z),-X(z)\},\rho)$.

\subsection{Pointwise IQCs}
A nonlinearity $\Delta$ satisfies a pointwise IQC with a factorization $(\Psi, M)$ if $z_k^\tp Mz_k \geq 0$ for each $k$. In other words, the IQC holds pointwise in time. In this case, $\Delta$ also satisfies the associated $\rho$-IQC for all $\rho < 1$. 
Examples of pointwise IQCs include the $\gamma$ \emph{norm-bounded IQC}
\begin{equation*}
  \Pi = \begin{bmatrix}
  \gamma^2&0\\0&-1
  \end{bmatrix}\:,
\end{equation*}
and the \emph{$[\alpha,\beta]$ sector-bounded IQC}, given by
\begin{equation*}
  \Pi = \begin{bmatrix}
  -2\alpha\beta&\alpha+\beta\\\alpha+\beta&-2
  \end{bmatrix}\:,
\end{equation*}
which corresponds to nonlinearities $\Delta$ that satisfy
\begin{align*}
(\Delta(x)-\beta x)^\tp(\Delta(x)-\alpha x)\leq 0 \quad\forall \:x\:.
\end{align*}
Note that the norm-bounded IQC is a special case of the sector IQC with the sector $[-\gamma,\gamma]$. These IQCs hold even if $\Delta$ is time-varying, if $\Delta$ satisfies the IQC at each $k$.

\subsection{Zames--Falb IQCs}

A nonlinearity $\Delta$ is \emph{slope-restricted on $[\alpha,\beta]$} where \mbox{$0 \le \alpha \le \beta \le \infty$}
if the following relation holds for all $x$, $y$.
\begin{equation*}
\bl( \Delta(x)-\Delta(y)-\alpha(x-y) \br)^\tp
\bl( \Delta(x)-\Delta(y)-\beta(x-y) \br) \leq 0\:.
\end{equation*}
This relation states that the chord joining input-output pairs of $\Delta$ has a slope that is bounded between $\alpha$ and $\beta$.
This class of functions satisfies the Zames--Falb family of IQCs \cite{heath_zames-falb_2005,zames_stability_1968}. We give the definition below.

\begin{prop}\label{prop:zf}
A nonlinearity $\Delta$ that is static and slope-restricted on $[\alpha,\beta]$\footnote{The $\beta=\infty$ case for this and similar IQCs considers only the $\beta$ terms, i.e. $\Pi_{[\alpha,\infty]} = \lim_{\beta\to\infty}\beta^{-1}\Pi$.}
 satisfies the Zames--Falb IQC
\begin{equation}\label{eq:zf}
  \Pi = \addtolength{\arraycolsep}{0mm}
  \bmat{-\alpha\beta(2\!-\!\hat h  \!-\! \hat h ^*) & \alpha (1\!-\!\hat h ) \!+\! \beta(1\!-\!\hat h ^*) \\
  \alpha (1\!-\!\hat h ^*) \!+\! \beta(1\!-\!\hat h ) &
  -(2\!-\!\hat h \!-\!\hat h ^*)}
\end{equation}
where $\hat h(z)$ is any proper transfer function with impulse response $h \defeq (h_0,h_1,\dots)$ that satisfies $||h||_1 \le 1$ and $h_k\geq 0$ for all $k$. If $\Delta$ is odd ($\Delta(-x) = -\Delta(x)$), then we may remove the constraint that $h_k\ge 0$ for all $k$.
\end{prop}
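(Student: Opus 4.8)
The plan is to verify the defining inequality~\eqref{iqceq} directly for a static, slope-restricted nonlinearity $\Delta$ and the multiplier $\Pi$ given by~\eqref{eq:zf}. Writing $u = \Delta(y)$, the integral in~\eqref{iqceq} can be re-expressed by Parseval's identity as a sum over time of quadratic forms in $(y_k,u_k)$ together with cross terms weighted by the impulse response $h$. Concretely, $\hat h(z)\,\hat y(z)$ has time-domain representation $(h * y)_k$, so the frequency-domain integral becomes, after expanding the $2\times2$ form and collecting terms,
\begin{equation*}
-\alpha\beta\sum_k \bl(2 y_k^\tp y_k - 2 y_k^\tp (h*y)_k\br)
+ \sum_k \bl(\alpha+\beta\br)\,2\bl(y_k^\tp u_k - y_k^\tp(h*u)_k\br)
- \sum_k \bl(2u_k^\tp u_k - 2 u_k^\tp(h*u)_k\br),
\end{equation*}
up to symmetrization of the convolution terms. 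The goal is to show this is $\ge 0$ using only $\|h\|_1 \le 1$, $h_k \ge 0$, and the slope restriction.

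The key algebraic device is to regroup the sum so that each term involves a \emph{difference} $y_k - y_j$ paired with the corresponding difference of outputs. Using $\sum_k h_k \le 1$ and nonnegativity of each $h_k$, one introduces the "missing mass" $h_\infty \defeq 1 - \sum_k h_k \ge 0$ and writes $2 - \hat h(z) - \hat h(z)^*$ evaluated in the time domain as a positive combination of terms $|y_k - y_{k-\ell}|^2$-type expressions (this is the standard trick: $2 - \hat h - \hat h^* = h_\infty\cdot 2 + \sum_{\ell\ge 1} h_\ell\, |1 - z^{-\ell}|^2$ as a para-Hermitian identity, since $2 - z^\ell - z^{-\ell} = |1-z^\ell|^2$ on $\T$). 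Substituting this decomposition into the quadratic form shows that the whole integrand is a nonnegative combination, indexed by the lag $\ell$ and by the mass at infinity, of terms of the shape
\begin{equation*}
h_\ell\bl[-\alpha\beta\,\delta y^\tp\delta y + (\alpha+\beta)\,\delta y^\tp \delta u - \delta u^\tp\delta u\br]
= -h_\ell\,\bl(\delta u - \alpha\,\delta y\br)^\tp\bl(\delta u - \beta\,\delta y\br),
\end{equation*}
where $\delta y = y_k - y_{k-\ell}$, $\delta u = u_k - u_{k-\ell}$, plus a leftover "$\ell = \infty$" term of the same form with $\delta y = y_k$, $\delta u = u_k$ weighted by $h_\infty$. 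Each bracketed quantity is exactly $\le 0$ by the slope-restriction hypothesis applied to the pair of points producing $y_k$ and $y_{k-\ell}$, and the weights $h_\ell, h_\infty$ are nonnegative, so the total is $\ge 0$. For the odd case, one replaces $|1 - z^{-\ell}|^2$ by $|1 + z^{-\ell}|^2$ wherever $h_\ell < 0$, which turns $\delta y = y_k - y_{k-\ell}$ into $y_k + y_{k-\ell}$; oddness gives $\Delta(-y_{k-\ell}) = -u_{k-\ell}$, so the slope condition still applies to the pair $(y_k, -y_{k-\ell})$ and the corresponding bracket is again $\le 0$. The $\beta = \infty$ normalization is handled by dividing by $\beta$ and taking the limit, as indicated in the footnote.

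I expect the main obstacle to be bookkeeping rather than conceptual: carefully justifying the para-Hermitian identity $2 - \hat h(z) - \hat h(z)^* = 2h_\infty + \sum_{\ell\ge1}h_\ell|1-z^{-\ell}|^2$ on $\T$, exchanging the resulting sum with the (now-convergent) Parseval sum, and confirming that the cross terms $(\alpha+\beta)(1-\hat h) y^\tp u$ and its conjugate combine into exactly the $\delta y^\tp\delta u$ pattern with no residual terms (this last point requires symmetrizing $u_k^\tp(h*y)_k$ against $y_k^\tp(h*u)_k$, which is where the $\hat h$ and $\hat h^*$ placement in~\eqref{eq:zf} is essential). A secondary subtlety is ensuring that the argument is valid for $y\in\ell_2$ (so all convolutions and sums converge absolutely, using $\|h\|_1\le1$), which lets us freely rearrange. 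Once the decomposition into lag-indexed differences is in hand, the slope restriction closes the argument termwise with no further estimates needed.
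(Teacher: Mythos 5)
There is a genuine gap, and it sits exactly at the point you flag as a "secondary subtlety": the cross terms do \emph{not} combine into the $\delta y\,\delta u$ pattern with no residual. After Parseval, the $\alpha$-weighted cross term of~\eqref{eq:zf} convolves $h$ with $u$ while the $\beta$-weighted one convolves $h$ with $y$ (equivalently, in the factorization of Remark~\ref{rem:zf} the filter $1-\hat h$ acts only on the channel $\beta y-u$, not on $u-\alpha y$). Carrying out your lag decomposition honestly, for a single lag $\ell$ one gets
\begin{equation*}
\int_\T \bmat{\hat y\\ \hat u}^*\Pi\bmat{\hat y\\ \hat u}dz
= \sum_k -\bigl(\delta u_k-\alpha\,\delta y_k\bigr)\bigl(\delta u_k-\beta\,\delta y_k\bigr)
\;+\;(\beta-\alpha)\sum_k\bigl(y_k u_{k-\ell}-u_k y_{k-\ell}\bigr),
\end{equation*}
with $\delta y_k=y_k-y_{k-\ell}$, $\delta u_k=u_k-u_{k-\ell}$. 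The first sum is indeed nonnegative by the pairwise slope restriction, but the antisymmetric residual is not sign-definite under the slope restriction alone: e.g.\ with $\phi(x)=x^3$, $y=(1,2,0,0,\dots)$, $\ell=1$ the residual is negative. (The same failure appears already in the normalized monotone case $[0,\infty]$: $\sum_k u_k(y_k-y_{k-\ell})$ equals $\tfrac12\sum_k(u_k-u_{k-\ell})(y_k-y_{k-\ell})$ \emph{plus} $\tfrac12\sum_k(u_{k-\ell}y_k-u_ky_{k-\ell})$, and the second piece can be negative.) Controlling this residual is the actual content of the Zames--Falb theorem; it cannot follow from the incremental sector condition alone, since your argument never uses staticity beyond pairwise slopes and Zames--Falb multipliers are known to be invalid for merely incrementally slope-restricted (e.g.\ time-varying) operators.

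The missing ingredient is the gradient/potential structure of a static slope-restricted $\Delta$: write $\Delta=\nabla g$ with $g$ convex (Kachurovskii), so that $u_k^\tp y_k\ge g(y_k)$ and $u_k^\tp(y_k-y_{k-\ell})\ge g(y_k)-g(y_{k-\ell})$, and bound the \emph{unsymmetrized} sum $\sum_k u_k\bigl(y_k-(h*y)_k\bigr)$ by a telescoping argument in $g$; the $[\alpha,\beta]$ case then follows by the loop transformation $(y,u)\mapsto(\beta y-u,\,u-\alpha y)$, and the odd case by evenness of $g$. This is the route taken in the reference the paper cites for Proposition~\ref{prop:zf} and in the paper's own Appendix~\ref{sec:zfthmproof} for the $\rho$-weighted generalization. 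Your lag decomposition of $1-\hat h$ into off-by-$\ell$ atoms plus the mass $h_\infty$ is fine and matches the paper's strategy; what must be replaced is the termwise quadratic step, which only establishes the $h=0$ (pure sector) case.
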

\begin{proof}
See for example~\cite{heath_zames-falb_2005}.
\end{proof}

\begin{rem}\label{rem:zf}
The Zames--Falb IQC~\eqref{eq:zf} admits the factorization
\[
\Psi = \bmat{ \beta(1 - \hat h) & -(1-\hat h) \\ -\alpha & 1 }
\quad\text{and}\quad
M = \bmat{0 & 1 \\ 1 & 0 }\:.
\]
\end{rem}
In general, for a given fixed $\rho$, only a subset of the Zames--Falb IQCs will be $\rho$-IQCs. We now give a characterization of this subset.

\begin{thm}[Zames--Falb $\rho$-IQC]\label{thm:rho_zf}
  Suppose $\Delta$ is static and slope-restricted on $[\alpha,\beta]$. Then $\Delta\in\IQC(\Pi(z),\rho)$ where $\Pi$ is the Zames--Falb IQC~\eqref{eq:zf} and $\hat h$ satisfies the additional constraint
  \begin{equation}\label{eq:rhozf_constraint}
  \sum_{k=0}^\infty \rho^{-2k} |h_k| \le 1 \:.
  \end{equation}
\end{thm}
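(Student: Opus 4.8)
\emph{Proof strategy.}
The plan is to verify the $\rho$-IQC \eqref{rr} directly in the time domain. By Remark~\ref{rem:zf} the Zames--Falb IQC factors as $\Pi=\Psi^*M\Psi$ with $\Psi = \bmat{\beta(1-\hat h) & -(1-\hat h) \\ -\alpha & 1}$ and $M = \bmat{0 & 1 \\ 1 & 0}$, so by Remark~\ref{rem:rhoIQC_time_domain} the claim is equivalent to showing, for every $y\in\ltwo^\rho$ with $u=\Delta(y)$,
\[
\sum_{k=0}^\infty \rho^{-2k}\,\bigl(\eta_k - (h*\eta)_k\bigr)\,\zeta_k \ge 0,
\qquad \eta := \beta y - u,\quad \zeta := u - \alpha y .
\]
Since $\Delta$ is static with $\Delta(0)=0$ (which we may assume as in Theorem~\ref{thm:main}) and slope-restricted on $[\alpha,\beta]$, there are nondecreasing scalar maps $\eta(\cdot),\zeta(\cdot)$ vanishing at $0$ with $\eta_k=\eta(y_k)$, $\zeta_k=\zeta(y_k)$, $\eta(s)\zeta(s)\ge0$, $\bigl(\eta(a)-\eta(b)\bigr)\bigl(\zeta(a)-\zeta(b)\bigr)\ge0$ (this last being exactly the slope restriction), and, when $\beta<\infty$, $|\eta(s)|,|\zeta(s)|\le(\beta-\alpha)|s|$. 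As $\sum_k\rho^{-2k}\eta_k\zeta_k\ge0$ termwise, it suffices to prove $\sum_k\rho^{-2k}(h*\eta)_k\zeta_k\le\sum_k\rho^{-2k}\eta_k\zeta_k$.

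Next I would pass to the exponentially weighted signals $\tilde y_k:=\rho^{-k}y_k$, $\tilde\eta_k:=\rho^{-k}\eta_k$, $\tilde\zeta_k:=\rho^{-k}\zeta_k$, $\tilde h_k:=\rho^{-k}h_k$; since $y,u\in\ltwo^\rho$ (and, when $\beta<\infty$, $\eta,\zeta$ are linearly dominated by $y$) these all lie in $\ltwo$, while \eqref{eq:rhozf_constraint} gives $\sum_k|\tilde h_k|\rho^{-k}=\sum_k\rho^{-2k}|h_k|\le1$ (in particular $\|\tilde h\|_1\le1$, and $\tilde h_k\ge0$ when $h_k\ge0$). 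A Fubini step, justified by these bounds, rewrites the target as $\sum_{j\ge0}\tilde h_j\langle S^j\tilde\eta,\tilde\zeta\rangle\le\langle\tilde\eta,\tilde\zeta\rangle$ with $S^j$ the $j$-step delay. The heart of the argument is a per-lag bound on $\langle S^j\tilde\eta,\tilde\zeta\rangle$. Introduce the potential $\Xi(s):=\int_0^s\eta(\sigma)\,d\zeta(\sigma)$ (a Stieltjes integral against the nondecreasing $\zeta$), which satisfies $\Xi(0)=0$ and $0\le\Xi(s)\le\eta(s)\zeta(s)$. Monotonicity of $\eta$ yields the Fenchel-type inequality $\eta(a)\bigl(\zeta(b)-\zeta(a)\bigr)\le\Xi(b)-\Xi(a)$, which with $a=y_{k-j}$, $b=y_k$ reads $\zeta_k\eta_{k-j}\le u_{k-j}+\pi_k-\pi_{k-j}$, where $u_m:=\eta_m\zeta_m\ge0$ and $\pi_m:=\Xi(y_m)\ge0$ (both $0$ for $m<0$ since $\Delta(0)=0$). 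Subtracting $\rho^j$ times this from the identity $\zeta_k\eta_k=u_k$, weighting by $\rho^{-2k}$, summing over $k$, and reindexing the shifted sums ($\sum_k\rho^{-2k}u_{k-j}=\rho^{-2j}U$ with $U:=\sum_k\rho^{-2k}u_k=\langle\tilde\eta,\tilde\zeta\rangle$, and likewise $P:=\sum_k\rho^{-2k}\pi_k\ge0$) yields
\[
\langle S^j\tilde\eta,\tilde\zeta\rangle \;\le\; \rho^{-j}U - (\rho^{-j}-\rho^j)\,P .
\]
Summing against $\tilde h_j\ge0$ and using $\sum_j\tilde h_j\rho^{-j}=\sum_j\rho^{-2j}h_j\le1$ together with $\sum_j\tilde h_j(\rho^{-j}-\rho^j)=\sum_j h_j(\rho^{-2j}-1)\ge0$ gives $\sum_j\tilde h_j\langle S^j\tilde\eta,\tilde\zeta\rangle\le U=\langle\tilde\eta,\tilde\zeta\rangle$, as needed. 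For odd $\Delta$ (when the sign constraint $h_k\ge0$ is dropped), $\eta,\zeta$ are odd and $\Xi$ is even, so running the computation also with $b=-y_k$ gives the two-sided bound $|\zeta_k\eta_{k-j}|\le u_{k-j}+\pi_k-\pi_{k-j}$ and hence $|\langle S^j\tilde\eta,\tilde\zeta\rangle|\le\rho^{-j}U-(\rho^{-j}-\rho^j)P$; summing against $|\tilde h_j|$ with $\sum_j\rho^{-2j}|h_j|\le1$ closes that case. The case $\beta=\infty$ is identical after replacing $\eta$ by the identity map and invoking $u\in\ltwo^\rho$ for the required summability.

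I expect the crux to be exactly the sign of this per-lag estimate. Simply mimicking the classical Zames--Falb proof after the substitution $z\mapsto\rho z$ (equivalently, applying Proposition~\ref{prop:iqcequiv} and treating $\Delta'$ as if it were slope-restricted) fails: the time-scaled map $s\mapsto\rho^{-k}\Delta(\rho^k s)$ is genuinely time-varying, and the leftover terms are of order $1-\rho^{-j}\le0$, which points the wrong way, so $\|\tilde h\|_1\le1$ alone would not suffice. It is the stronger constraint \eqref{eq:rhozf_constraint} (exponent $2$ rather than $1$) together with the nonnegative potential terms $P$ --- with no counterpart in the $\rho=1$ proof --- that absorbs the discrepancy. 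The remaining bookkeeping (Fubini, vanishing at negative indices, $\ltwo$-membership of the weighted signals) I expect to be routine given $y,u\in\ltwo^\rho$ and \eqref{eq:rhozf_constraint}.
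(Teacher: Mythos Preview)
Your proof is correct and rests on the same core mechanism as the paper's: a convexity/potential inequality applied per lag, followed by a telescoping sum that is controlled precisely by the constraint~\eqref{eq:rhozf_constraint}. The packaging differs in two respects worth noting. First, the paper reduces to the $[0,\infty]$ case and defers general $[\alpha,\beta]$ to a loop transformation (mentioned only in Table~\ref{tab:zf}), whereas you handle $[\alpha,\beta]$ directly by working in the factored variables $\eta=\beta y-u$ and $\zeta=u-\alpha y$; this is cleaner and avoids an extra reduction. Second, the paper proves an ``off-by-$l$'' atom first and then writes a general $\hat h$ as a conic combination of such atoms, while you prove a single per-lag estimate $\langle S^j\tilde\eta,\tilde\zeta\rangle\le\rho^{-j}U-(\rho^{-j}-\rho^j)P$ and sum over $j$; these are equivalent decompositions. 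Your Stieltjes potential $\Xi=\int\eta\,d\zeta$ is the Legendre dual of the paper's primitive $g=\int\Delta$ (indeed $\Xi+g=\eta\zeta$ in the $[0,\infty]$ case), so the two telescoping arguments are mirror images of one another. Your closing remark that $\|\tilde h\|_1\le1$ alone would not suffice, and that the extra factor of $\rho^{-k}$ in~\eqref{eq:rhozf_constraint} is exactly what the nonnegative $P$-term absorbs, is the right diagnosis and matches the paper's implicit reasoning.
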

\begin{proof}
The proof involves rewriting the IQC as a discrete-time sum which can be split into parts that can separately be shown to be nonnegative. See Appendix~\ref{sec:zfthmproof} for the full proof of Theorem~\ref{thm:rho_zf} and related extensions.
\end{proof}
Sector-bounded and/or slope-restricted functions show up in various specialized contexts. We will derive $\rho$-IQCs for two such cases: stiction nonlinearities and quasi-monotone/quasi-odd nonlinearities.

\subsubsection{Stiction Nonlinearities}
Stiction nonlinearities (shown in Fig.~\ref{fig:stiction}) satisfy Zames--Falb $\rho$-IQCs with additional constraints on the coefficients $h_k$.

\begin{figure}[thpb]
  \centering
  \includegraphics[scale=0.9]{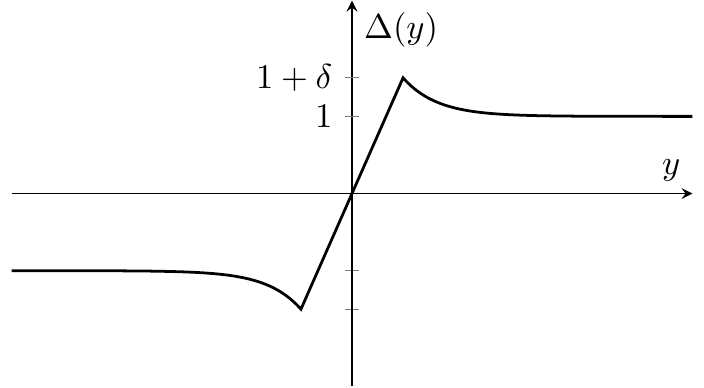}
  \caption{Example stiction nonlinearity (taken from \cite{rantzer_stiction}.)}
  \label{fig:stiction}
\end{figure}
\begin{cor}[Stiction $\rho$-IQC]\label{thm:stiction}
  Suppose $\Delta$ is a stiction nonlinearity with slope $1/\epsilon$ and overshoot $\delta$ as defined in \cite{rantzer_stiction}. Then $\Delta\in\IQC(\Pi(z),\rho)$ where $\Pi$ is the $[0,1/\epsilon]$ Zames--Falb IQC~\eqref{eq:zf} and $H$ satisfies the additional constraint
  \begin{equation*}
  \sum_{k=0}^\infty \rho^{-2k} |h_k| \le \frac{1-\delta}{1+\delta}\:.
  \end{equation*}
\end{cor}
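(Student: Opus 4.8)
\noindent\emph{Proof strategy.} The plan is to obtain this as the $\rho$-weighted analogue of the classical result of \cite{rantzer_stiction}: a stiction nonlinearity with slope $1/\epsilon$ and overshoot $\delta$ obeys the $[0,1/\epsilon]$ Zames--Falb IQC whenever $h_k\ge0$ and $\|h\|_1\le\frac{1-\delta}{1+\delta}$. The overshoot makes $\Delta$ lie slightly outside the exact $[0,1/\epsilon]$ slope cone (and $\Delta$ carries memory), so $\Delta$ is not a literal instance of Theorem~\ref{thm:rho_zf}, but the argument runs in parallel to it. Concretely, take the factorization $(\Psi,M)$ of the $[0,1/\epsilon]$ Zames--Falb IQC from Remark~\ref{rem:zf} (i.e.\ $\alpha=0$, $\beta=1/\epsilon$), form $\Delta':=\rho_-\circ(\Delta\circ\rho_+)$, and use Remark~\ref{rem:rhoIQC_time_domain} to rewrite the $\rho$-IQC~\eqref{rr} as the weighted sum $\sum_{k\ge0}\rho^{-2k}z_k^\tp M z_k$ with $z=\Psi\binom{y}{u}$. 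Expanding this quadratic form splits it into a memoryless piece (the terms with no factor $h_j$) and a memory piece (all terms carrying some $h_j$).

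The memoryless piece reduces, at each time $k$, to the instantaneous $[0,1/\epsilon]$ sector residual $q_k := 2u_k\,(\tfrac1\epsilon y_k - u_k)\ge0$, which any stiction nonlinearity satisfies pointwise; since the weights $\rho^{-2k}$ are positive, this piece equals the nonnegative quantity $Q_\rho := \sum_k\rho^{-2k}q_k$, which will also dominate the memory terms. For the memory piece, invoke the structural estimate of \cite{rantzer_stiction}: around a stiction hysteresis loop the cross-products of the output against past inputs are controlled by the corresponding slope-restricted quantities up to the loss factor $\frac{1+\delta}{1-\delta}$ (this is precisely the content of the overshoot bound). Carrying this through the sum, each cross term $\rho^{-2k}h_j(\,\cdot\,)_k(\,\cdot\,)_{k-j}$ is, after the re-indexing $m=k-j$ and the weight split $\rho^{-2k}=\rho^{-2j}\rho^{-2m}$ (a Young/Cauchy--Schwarz step, using the boundedness of $\Delta$ and the fact that shifted weighted sums $\sum_{n\ge j}\rho^{-2n}q_n$ are dominated by $Q_\rho$ since $\rho<1$), bounded below by $-\tfrac{1+\delta}{1-\delta}\rho^{-2j}|h_j|\,Q_\rho$. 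Summing over $j$ and using the hypothesis $\sum_j\rho^{-2j}|h_j|\le\frac{1-\delta}{1+\delta}$ shows the memory piece is $\ge -Q_\rho$, so the total sum is $\ge0$ and $\Delta\in\IQC(\Pi(z),\rho)$. The remaining bookkeeping parallels the proof of Theorem~\ref{thm:rho_zf} in Appendix~\ref{sec:zfthmproof}.

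The main obstacle is the interplay between the exponential weight and the memory of $\Delta$. Because $\Delta\ne\Delta'$ here, Rantzer's sign argument — which exploits the shape of the loop traced in real time — must be re-checked once every term carries a factor $\rho^{-2k}$, and the delicate point is how that weight is distributed across a cross term linking times $k$ and $k-j$: the factor $\rho^{-2j}$ has to be made to ride with $h_j$ so that it enters the weighted $\ell_1$ constraint, while the residual factor $\rho^{-2(k-j)}$ attaches to the diagonal quantity at the shifted time, and one must verify both that the shifted weighted sums are genuinely dominated by $Q_\rho$ and that the resulting inequality points the right way. This is exactly what forces the bound to read $\sum_k\rho^{-2k}|h_k|\le\frac{1-\delta}{1+\delta}$ rather than $\|h\|_1\le\frac{1-\delta}{1+\delta}$; everything else follows the pattern of Theorem~\ref{thm:rho_zf}.
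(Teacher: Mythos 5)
There is a genuine gap. Your argument hinges on two facts that you assert but never establish, and the crucial one does not follow from the overshoot definition. First, the ``memoryless'' step assumes the stiction nonlinearity satisfies the $[0,1/\epsilon]$ sector condition pointwise, $q_k=2u_k(\tfrac1\epsilon y_k-u_k)\ge0$. Under the model the paper actually uses for stiction --- a multiplicative overshoot composed with a slope-restricted map, $\Delta=\Delta_2\circ\Delta_1$ with $\Delta_2(u)=(1+\delta_k)u$, $|\delta_k|\le\delta$ --- this can fail (take $\Delta_1(y)=y/\epsilon$ and $\delta_k=\delta>0$), so it at least needs an argument tied to the precise definition in \cite{rantzer_stiction}. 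Second, and more seriously, the entire memory piece rests on the claim that each cross term obeys $2u_k\bigl(\tfrac1\epsilon y_{k-j}-u_{k-j}\bigr)\le \tfrac{1+\delta}{1-\delta}\,q_{k-j}$ (or some equivalent term-wise domination compatible with the weight split $\rho^{-2k}=\rho^{-2j}\rho^{-2(k-j)}$), which you attribute to an unstated ``structural estimate'' of \cite{rantzer_stiction}. In its natural reading this requires $u_k\le\tfrac{1+\delta}{1-\delta}u_{k-j}$ whenever the residual at time $k-j$ is positive --- a statement about arbitrary time pairs that is not ``the content of the overshoot bound'' and is not proved (nor is the invoked ``Young/Cauchy--Schwarz step'' spelled out); since this is exactly where the factor $\tfrac{1-\delta}{1+\delta}$ must come from, the proof as written does not go through. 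The remark about ``hysteresis loops'' also misreads the setup: no memory in $\Delta$ is involved here.

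For comparison, the paper proves the corollary very differently, as an instance of the ``noisy composition'' extension in Appendix~\ref{sec:zfthmproof}: stiction is written as $\Delta_2\circ\Delta_1$ with $\Delta_1$ static and slope-restricted (the finite slope $[0,1/\epsilon]$ handled by the loop transformation of Table~\ref{tab:zf}) and $\Delta_2$ a gain in $[1-\delta,1+\delta]$. One then repeats the potential-function argument of Theorem~\ref{thm:rho_zf}: convexity of $g$ with $\nabla g=\Delta_1$ gives $u_k^\tp(y_k-\gamma_l\rho^{2l}y_{k-l})\ge g(y_k)-\gamma_l\rho^{2l}g(y_{k-l})$, one multiplies by $(1+\delta_k)>0$, shifts indices in the partial sums, and nonnegativity requires $\gamma_l\le\tfrac{1+\delta_k}{1+\delta_{k+l}}$, which is guaranteed by $\gamma_l\equiv\tfrac{1-\delta}{1+\delta}$; general $\hat h$ then follows because the Zames--Falb multiplier is a nonnegative combination of off-by-$l$ atoms under the budget $\sum_k\gamma_k^{-1}\rho^{-2k}|h_k|\le1$, which with $\gamma_k=\tfrac{1-\delta}{1+\delta}$ is exactly the stated constraint. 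If you want to salvage your route, you would need to state and prove the cross-term estimate you are invoking; otherwise the cleanest fix is to adopt the composition viewpoint, where the weighted $\ell_1$ condition emerges from the index-shifted coefficient inequality rather than from any term-wise domination of the quadratic form.
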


\subsubsection{Quasi-monotone and Quasi-odd Nonlinearities}
Following the definition in \cite{heath_genzf} (shown in Fig.~\ref{fig:quasimonotone}), quasi-monotone and quasi-odd
nonlinearities also satisfy Zames--Falb $\rho$-IQCs under additional constraints on the $h_k$.

\begin{figure}[thpb]
  \centering
  \includegraphics[scale=1.0]{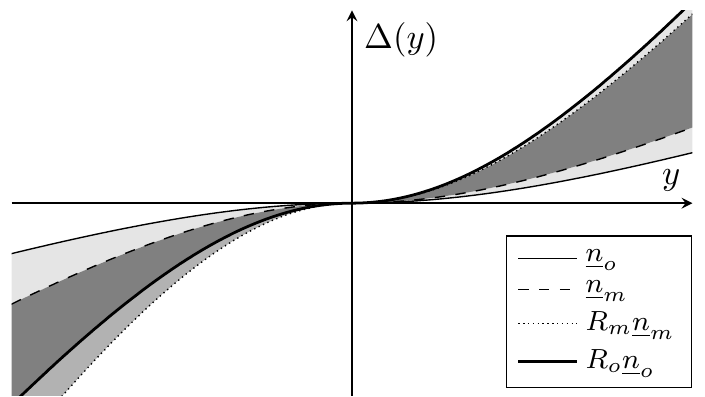}
  \caption{Monotone and odd bounds for unknown nonlinearities (modified from \cite{heath_genzf}). The nonlinearity must lie within envelopes generated by multiplicative perturbations of a known monotone linearity $\underbar{$n$}_m$ (perturbation between $1$ and $R_m\geq 1$) and a known monotone odd nonlinearity $\underbar{$n$}_o$ (perturbation between $1$ and $R_o\geq 1$). In this example, the nonlinearities of interest lie in the darkest region, the intersection of both envelopes.}
  \label{fig:quasimonotone}
\end{figure}

\begin{cor}[Quasi-monotone/odd $\rho$-IQC]\label{thm:qmp_zf}
  Suppose $\Delta$ is static and is quasi-monotone or quasi-odd as defined in \cite{heath_genzf}. Then $\Delta\in\IQC(\Pi(z),\rho)$ where $\Pi$ is the Zames--Falb IQC~\eqref{eq:zf} and $H$ satisfies the additional constraint
  \begin{equation*}
  \sum_{k=0}^\infty \gamma_k^{-1} \rho^{-2k}|h_k| \le 1
  \end{equation*}
  where
  \begin{equation*}
    \gamma_k^{-1} \defeq \begin{cases}
    R_m, & \: h_k \geq 0 \\
    R_o, & \: h_k < 0
    \end{cases}\:.
  \end{equation*}
\end{cor}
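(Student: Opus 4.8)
The plan is to treat this as a weighted refinement of Theorem~\ref{thm:rho_zf}: I would not redo the frequency-domain analysis but instead work with the time-domain form of the $\rho$-IQC and insert the envelope inequalities that define quasi-monotone/quasi-odd nonlinearities. Using the factorization $(\Psi,M)$ of~\eqref{eq:zf} from Remark~\ref{rem:zf} (with $M=\bmat{0&1\\1&0}$) together with Remark~\ref{rem:rhoIQC_time_domain}, the membership $\Delta\in\IQC(\Pi(z),\rho)$ is equivalent to $\sum_{k\ge0}\rho^{-2k}z_k^{(1)}z_k^{(2)}\ge0$, where $z=\Psi\begin{pmatrix}y\\u\end{pmatrix}$, $u=\Delta(y)$, so that $z^{(1)}$ contains the filtered combinations $y-h*y$ and $u-h*u$ while $z^{(2)}$ is the slope-corrected signal $u-\alpha y$. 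Recall from~\cite{heath_genzf} that a quasi-monotone $\Delta$ is sandwiched (up to signs) between a known monotone $\underline{n}_m$ and $R_m\,\underline{n}_m$, and a quasi-odd $\Delta$ between a known odd monotone $\underline{n}_o$ and $R_o\,\underline{n}_o$; the nonlinearity of interest lies in the intersection of both envelopes.

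The key steps, in order, would be: (1) split the filter by the sign of its taps, $h=h^{+}+h^{-}$ with $h_k^{+}=\max(h_k,0)$ and $h_k^{-}=\min(h_k,0)$, and correspondingly split $\sum_k\rho^{-2k}z_k^{(1)}z_k^{(2)}$ into a diagonal ($h$-free) part, an ``$h^{+}$'' part and an ``$h^{-}$'' part; (2) bound the diagonal part below by $0$ termwise, since it equals $\sum_k\rho^{-2k}\bigl[-(u_k-\beta y_k)(u_k-\alpha y_k)\bigr]$ and the sector/slope bound on $\Delta$ inherited from the envelopes makes each summand nonnegative; (3) bound the $h^{+}$ part below using monotonicity of $\underline{n}_m$ together with the envelope bound $\underline{n}_m\le\Delta\le R_m\,\underline{n}_m$ (up to signs), which is where a factor $R_m$ gets absorbed into each positive tap and the effective weight on tap $k$ becomes $R_m\rho^{-2k}=\gamma_k^{-1}\rho^{-2k}$; (4) bound the $h^{-}$ part using oddness of $\underline{n}_o$ and its envelope, via the standard symmetrization used in the odd Zames--Falb IQC, which controls the negative taps at the cost of a factor $R_o=\gamma_k^{-1}$; (5) throughout, carry the $\rho^{-2k}$ weights exactly as in the proof of Theorem~\ref{thm:rho_zf} in Appendix~\ref{sec:zfthmproof}, i.e.\ push the weight into the signals ($\tilde y_k:=\rho^{-k}y_k$) first, so that $\rho^{-k}(h*y)_k$ becomes a convolution against the reweighted filter $\tilde h_k:=\rho^{-k}h_k$ before any rearrangement is performed. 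Collecting (2)--(5), the weighted sum is bounded below by $\bigl(1-\sum_k\gamma_k^{-1}\rho^{-2k}|h_k|\bigr)$ times a nonnegative quantity, which is $\ge0$ under the stated constraint.

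The main obstacle is the same one that makes Theorem~\ref{thm:rho_zf} nontrivial: the exponential weight $\rho^{-2k}$ is not shift-invariant, so the classical identities of the form $\sum_k n(y_k)y_{k-j}$ that underlie all Zames--Falb-type arguments cannot be applied directly to the weighted sums. The resolution is to reweight the filter before rearranging, as in Appendix~\ref{sec:zfthmproof}; the genuinely new ingredient here is only the bookkeeping that keeps track of which taps are nonnegative versus negative and replaces the uniform factor $1$ in~\eqref{eq:rhozf_constraint} by the tap-dependent factor $\gamma_k^{-1}\in\{R_m,R_o\}$, mirroring the unweighted construction of~\cite{heath_genzf}. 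Since both pieces are already available, the corollary should follow by combining them without any new FDI/LMI machinery; the only point requiring genuine care is confirming that the envelope bounds of~\cite{heath_genzf} apply with the correct signs in each of the $h^{+}$ and $h^{-}$ summands.
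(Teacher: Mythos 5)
Your high-level plan (handle positive taps through the monotone envelope with a factor $R_m$, negative taps through the odd envelope with a factor $R_o$, and carry $\rho^{-2k}$ weights as in Theorem~\ref{thm:rho_zf}) points at the right answer, but the two steps that carry all the difficulty are not actually established, and one of them would fail as stated. First, the claim in steps (3)--(4) that a factor $R_m$ (resp.\ $R_o$) ``gets absorbed'' into each tap is precisely the thing that needs proof; the sign bookkeeping you defer to the end is the entire content of the argument. The paper does not work with the sandwich inequalities directly: it rescales the bounding nonlinearity ($\underline{n}_* \mapsto \tfrac{2}{R_*+1}\underline{n}_*$) so that the envelope becomes a pointwise multiplicative perturbation, writes $\Delta = \Delta_2\circ\Delta_1$ with $\Delta_2(u_k)=(1+\delta_k)u_k$, $|\delta_k|\le\delta=\tfrac{R_*-1}{R_*+1}$, and a monotone (or odd monotone) core $\Delta_1=\nabla g$, multiplies the core's convexity inequalities by the positive gains $(1+\delta_k)$, and after reindexing obtains the sufficient condition $\gamma_l \le \tfrac{1+\delta_k}{1+\delta_{k+l}}$, guaranteed by $\gamma_l\le\tfrac{1-\delta}{1+\delta}=R_*^{-1}$; the corollary then follows by writing a general $\hat h$ as a nonnegative combination of off-by-$k$ atoms with tap-dependent $\gamma_k$. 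Your outline contains no substitute for this composition step.

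Second, two of your concrete mechanisms do not work as written. Step (5) is not what Appendix~\ref{sec:zfthmproof} does: if you push the weight into the signals, the reweighted pair satisfies $\tilde u_k = \rho^{-k}\Delta(\rho^k\tilde y_k)$, a \emph{time-varying} slope-restricted map, so the monotone/odd rearrangement (potential-function) inequalities cannot be applied to $(\tilde y,\tilde u)$ with the reweighted filter; moreover that bookkeeping has taps $\tilde h_k=\rho^{-k}h_k$ and would naturally produce a constraint on $\sum_k\rho^{-k}|h_k|$, not the required $\sum_k\rho^{-2k}|h_k|$. The paper instead keeps the weights explicit on the original signals, and the factor $\rho^{2l}$ built into each off-by-$l$ atom exactly offsets the shift of the weight $\rho^{-2k}$ when the delayed terms are reindexed. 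Finally, step (2) bounds the $h$-free (sector) part below by zero, but that part must be \emph{retained} as the nonnegative comparison quantity: the $h^{+}$ and $h^{-}$ parts can only be bounded below by $-R_m\sum_{h_k\ge0}\rho^{-2k}h_k$ and $-R_o\sum_{h_k<0}\rho^{-2k}|h_k|$ times that same quantity, so discarding the diagonal part leaves a nonpositive bound and the claimed factor $\bigl(1-\sum_k\gamma_k^{-1}\rho^{-2k}|h_k|\bigr)$ cannot be produced. These gaps are repairable, but as proposed the chain of bounds does not close.
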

Given a fixed $\rho$, searching over (finite) $h_k$ when solving the feasibility LMI using this IQC is still a convex problem.  To see this, observe that we can equivalently write this constraint on the $h_k$ (assuming $R_m \geq R_o$, the other case is similar) as
\begin{align*}
R_o \sum_{k=0}^K \rho^{-2k}|h_k| + (R_m-R_o)\sum_{k=0}^K \rho^{-2k}\cdot\max(h_k, 0) \le 1 \:.
\end{align*}
However, the proof of Corollary~\ref{thm:qmp_zf} will show that these general Zames--Falb $\rho$-IQCs can be written as a nonnegative linear combination off ``off-by-$j$'' $\rho$-IQCs. Thus, when solving \eqref{eq:explmi} it is  sufficient to search over all nonnegative linear combinations of simpler $\rho$-IQCs atoms, rather than formulating the constraint on the $h_k$ explicity. Whether this is more efficient depends on the specific problem dimensions.
The correct chain of implications for this constraint (and others) is as follows:

\begin{itemize}
  \item Compared to the odd Zames--Falb IQC, a quasi-odd IQC as defined in Corollary~\ref{thm:qmp_zf} gives \emph{less information} about the nonlinearity $\phi$, i.e. we must provide a certificate of stability for every nonlinearity in a \emph{larger class}.
  \item Since $R_m, R_o \geq 1$, the weights satisfy $\gamma_k^{-1}\ge 1$, so there is \emph{less freedom} in choosing the $h_k$.
  \item This restriction in choosing $h_k$ leads to a \emph{smaller} feasible set for the LMI. Thus, the upper bound we find for the convergence rate will be \emph{larger}.
\end{itemize}

\subsection{Repeated Sector Nonlinearities}\label{sec:repeated_nonlinearities}

We say a real symmetric matrix $\Gamma$ is \emph{$(\rho,H)$-diagonally dominant} if, for a symmetric matrix of nonnegative proper transfers functions $\hat H$ with impulse responses $H_{ij,k}$, we have that $\Gamma_{ii}\geq 0$, $\Gamma_{ij}\leq 0$ (for $i\ne j$), $H_{ij,k}\geq 0$, $\sum_{k=0}^\infty\rho^{-2k}|H_{ij,k}| \leq 1 \: \forall\: (i,j)$ and
\begin{align*}
\Gamma_{ii} \geq \sum_{j=1, j\neq i}^n |\Gamma_{ij}| + \sum_{j=1}^n\sum_{k=0}^\infty\rho^{-2k}|H_{ij,k}| \quad \forall\:i\:.
\end{align*}
We call $\Gamma$ simply \emph{diagonally dominant}\footnote{Note that the conventional definition of ``diagonally dominant'' does not restrict the diagonal elements to be nonnegative.} if the above holds with $H=0$ and $\rho=1$.

Now, let $\Delta$ be a repeated monotone scalar nonlinearity in some sector, i.e. $\Delta(y) = \diag\{\phi(y_i)\}$.

\begin{prop}\label{prop:constant_dd}
$\Delta$ satisfies the pointwise $\rho$-IQC
\begin{equation*}
\Pi =\; \bmat{0 & \Gamma \\ \Gamma & 0}
\end{equation*}
for any symmetric diagonally dominant matrix $\Gamma$.
\end{prop}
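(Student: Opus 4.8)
The plan is to reduce the claim to an elementary pointwise inequality about the scalar nonlinearity $\phi$. Since $\Pi$ here is a constant Hermitian matrix (this is the $H=0$, $\rho=1$ instance of $(\rho,H)$-diagonal dominance), it admits the trivial factorization $(\Psi,M)=(I,\Pi)$, so by the discussion of pointwise IQCs it suffices to show that at each time $k$, with $y_k\in\R^n$ and $u_k=\Delta(y_k)$,
\[
\bmat{y_k\\u_k}^\tp\bmat{0&\Gamma\\\Gamma&0}\bmat{y_k\\u_k}\;=\;2\,y_k^\tp\Gamma u_k\;\ge\;0,
\]
after which the pointwise property automatically upgrades this to a $\rho$-IQC for every $\rho\in(0,1)$. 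Dropping the time index and writing $u_i=\phi(y_i)$, this is the statement $\sum_{i,j}\Gamma_{ij}\,y_i\,\phi(y_j)\ge 0$ for all $y\in\R^n$.

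First I would record the two properties of $\phi$ that are in force: because $\phi$ is monotone and lies in some sector, it satisfies $\phi(0)=0$ and $(\phi(a)-\phi(b))(a-b)\ge 0$ for all $a,b$; in particular $t\,\phi(t)\ge 0$ for every $t\in\R$. Next, I would decompose the diagonally dominant matrix $\Gamma$ into nonnegative atoms: write $\Gamma_{ij}=-|\Gamma_{ij}|$ for $i\ne j$ and $\Gamma_{ii}=r_i+\sum_{j\ne i}|\Gamma_{ij}|$, where $r_i\defeq\Gamma_{ii}-\sum_{j\ne i}|\Gamma_{ij}|\ge 0$ is exactly the slack in the diagonal-dominance hypothesis. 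Substituting and regrouping the double sum using $\Gamma_{ij}=\Gamma_{ji}$ — so that each unordered pair $\{i,j\}$ collects its off-diagonal contribution together with the portions of $\Gamma_{ii}$ and $\Gamma_{jj}$ absorbed above — yields the identity
\[
\sum_{i,j}\Gamma_{ij}\,y_i\,\phi(y_j)\;=\;\sum_{i}r_i\,y_i\,\phi(y_i)\;+\;\sum_{i<j}|\Gamma_{ij}|\,\bl(\phi(y_i)-\phi(y_j)\br)\bl(y_i-y_j\br).
\]

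The conclusion is then immediate: in the first sum $r_i\ge 0$ and $y_i\phi(y_i)\ge 0$, and in the second sum $|\Gamma_{ij}|\ge 0$ while each chord term is nonnegative by monotonicity of $\phi$; hence $y^\tp\Gamma u\ge 0$, giving the pointwise IQC and thus the $\rho$-IQC for all $\rho<1$. The only step requiring care is the combinatorial regrouping above — it is routine but one must track the symmetric off-diagonal terms correctly and remember to invoke \emph{both} facts about $\phi$, namely monotonicity for the chord terms and the sector lower bound (equivalently $\phi(0)=0$, together with nondecreasingness) for the diagonal terms $y_i\phi(y_i)$. I do not anticipate any genuine obstacle. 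For the MIMO statement with $\Delta=\diag\{\phi(\cdot)\}$ acting coordinatewise, the same argument applies verbatim.
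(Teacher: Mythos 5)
Your proof is correct and follows essentially the same route as the paper, which simply defers to the $H=0$ case of the argument in the cited reference on repeated nonlinearities: decompose the diagonally dominant $\Gamma$ (with $\Gamma_{ii}\ge 0$, $\Gamma_{ij}\le 0$) into a nonnegative diagonal slack plus pairwise atoms, then use the repeated-monotone chord inequality (the paper's Lemma~\ref{lem:sectorrepeated}) for the off-diagonal terms and $y_i\phi(y_i)\ge 0$ (from $\phi(0)=0$ via the sector, plus monotonicity) for the diagonal terms. Your pointwise-to-$\rho$-IQC upgrade via the trivial factorization $(\Psi,M)=(I,\Pi)$ is exactly how the paper treats pointwise IQCs, so no gap remains.
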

\begin{proof}
The proof is analogous to the proof of Theorem 1 in the Appendix of \cite{damato_repeated} with $H=0$.
\end{proof}

\begin{thm}\label{thm:rhozf_dd}
Assume $\Gamma$ is $(\rho,H)$-diagonally dominant. Then, if $\phi$ is in the $[\alpha, \beta]$ sector, then $\Delta(y) = \diag\{\phi(y_i)\}$ satisfies the $\rho$-IQC
\begin{equation}\label{eq:pi_ddzf}
\Pi =\; \bmat{-\alpha\beta(2\Gamma-\hat H-\hat H^*) & \alpha(\Gamma-\hat H)+\beta(\Gamma-\hat H^*) \\ \alpha(\Gamma-\hat H^*)+\beta(\Gamma-\hat H) & -2\Gamma+\hat H+\hat H^*}.
\end{equation}
\end{thm}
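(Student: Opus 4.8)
The plan is to reduce the frequency-domain $\rho$-IQC~\eqref{rr} defined by the $\Pi$ of~\eqref{eq:pi_ddzf} to a statement about time-domain sums, and then to decompose that sum into a nonnegative combination of elementary nonnegative quantities — the same strategy used for Theorem~\ref{thm:rho_zf} and for the constant-multiplier case of Proposition~\ref{prop:constant_dd}, now carried out simultaneously over the spatial indices and the time lags, and following the template of the proof of Theorem~1 in the appendix of~\cite{damato_repeated}.

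First I would set $Y_k \defeq \rho^k y_k$ and $U_k \defeq \rho^k u_k$ with $u = \Delta(y) = \diag\{\phi(y_i)\}$, so that $\hat Y(z) = \hat y(\rho z)$ and $\hat U(z) = \hat u(\rho z)$. Because $\phi$ is $[\alpha,\beta]$-sector bounded we have $\|U\|_{\ltwo} \le \beta\|Y\|_{\ltwo} < \infty$, and the hypothesis $\sum_k \rho^{-2k}|H_{ij,k}| \le 1$ makes $\hat H(\rho z)$ the $z$-transform of the absolutely summable matrix sequence $(\rho^{-k}H_{(k)})_{k\ge 0}$, where $H_{(k)}$ has entries $H_{ij,k}$. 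Parseval's theorem (cf.\ Remark~\ref{rem:rhoIQC_time_domain}) then turns~\eqref{rr} into an absolutely convergent series: the $\Gamma$-blocks produce $\rho$-weighted ``diagonal-in-time'' quadratic forms in $Y_k$ and $U_k$, while each lag-$j$ term of $\hat H$ produces an ``off-by-$j$'' cross term coupling $Y_k,\, Y_{k-j},\, U_k,\, U_{k-j}$ with coefficient matrix $\rho^{-j}H_{(j)}$.

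Next I would expand this series and regroup it coordinate-by-coordinate and pair-by-pair. The $j=0$ part (the $2\Gamma$ together with $H_{(0)}$) reduces — via the identity $a\,\phi(b) + b\,\phi(a) = a\,\phi(a) + b\,\phi(b) - (a-b)(\phi(a)-\phi(b))$ applied to each pair of coordinates — to a nonnegative combination of the pointwise sector inequalities $(\phi(y_{i,k}) - \alpha y_{i,k})(\beta y_{i,k} - \phi(y_{i,k})) \ge 0$ and the monotone-chord inequalities $(\phi(y_{i,k}) - \phi(y_{j,k}))(y_{i,k} - y_{j,k}) \ge 0$; this is precisely the computation behind Proposition~\ref{prop:constant_dd}. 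Each lag $j\ge1$ is handled the same way, now comparing values at times $k$ and $k-j$ as well as across coordinates and producing a further combination of sector inequalities and chord inequalities of the form $(\phi(y_{i,k}) - \phi(y_{i,k-j}))(y_{i,k} - y_{i,k-j}) \ge 0$, with all genuinely mixed terms controlled using the sign patterns $\Gamma_{ij}\le0$, $H_{ij,k}\ge0$ together with the diagonal-dominance budget. Indeed, the $(\rho,H)$-diagonal-dominance inequality $\Gamma_{ii} \ge \sum_{j\ne i}|\Gamma_{ij}| + \sum_j\sum_k \rho^{-2k}|H_{ij,k}|$ is exactly what forces every resulting coefficient to be nonnegative: the diagonal mass $\Gamma_{ii}$ must simultaneously dominate the off-diagonal spatial mass and the total $\rho$-weighted temporal mass of the $H$-block. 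Summing the (absolutely convergent) series then gives a nonnegative total, i.e.\ $\Delta\in\IQC(\Pi(z),\rho)$.

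The main obstacle is the bookkeeping: one must keep track of the sign and magnitude of the coefficient multiplying each elementary inequality as it is indexed by a coordinate pair \emph{and} a time lag, while the $\rho^{-k}$ factors thread through the convolution against $\hat H(\rho z)$. The key thing to get right is that, after all rearrangements, the leftover positive coefficients are bounded termwise by $\Gamma_{ii} - \sum_{j\ne i}|\Gamma_{ij}| - \sum_j\sum_k \rho^{-2k}|H_{ij,k}| \ge 0$ and by the per-pair bound $\sum_k \rho^{-2k}|H_{ij,k}| \le 1$. I expect the cleanest organization is to first establish the scalar ``off-by-$j$'' $\rho$-IQC atoms — a $\rho$-weighted refinement of the atoms used in the proof of Theorem~\ref{thm:rho_zf} — and only then take the nonnegative linear combination dictated by the entries of $\Gamma$ and $\hat H$; with that done the argument becomes routine, and the detailed version belongs in the appendix next to the proof of Theorem~\ref{thm:rho_zf}.
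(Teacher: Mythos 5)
Your high-level architecture is the paper's: pass to the time domain via Remark~\ref{rem:rhoIQC_time_domain}, build elementary ``off-by-$l$'' atoms indexed by a coordinate pair and a time lag, and let the $(\rho,H)$-diagonal-dominance budget certify that $\Pi$ in~\eqref{eq:pi_ddzf} is a nonnegative combination of valid $\rho$-IQCs (in Appendix~\ref{sec:ddthmproof} the atoms are the pair multipliers $\tilde\Gamma_{ij}-\tilde H^l_{ij}$, which carry the diagonal padding $(|\Gamma_{ij}|+1)(E_{ii}+E_{jj})$ so that each atom is a $\rho$-IQC on its own, and the leftover constant matrix $Q$ is shown diagonally dominant so Proposition~\ref{prop:constant_dd} absorbs it). The genuine gap is in how you propose to prove the lag atoms. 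You claim the lag-$j$ terms follow from sector inequalities plus chord inequalities $(\phi(y_{i,k})-\phi(y_{i,k-j}))(y_{i,k}-y_{i,k-j})\ge 0$. This mechanism cannot work: summing the $\rho^{-2k}$-weighted chord inequality only bounds the \emph{symmetrized} cross-correlation, $\sum_k\rho^{-2k}\bigl(u_k^\tp y_{k-j}+u_{k-j}^\tp y_k\bigr)\le(1+\rho^{-2j})\sum_k\rho^{-2k}u_k^\tp y_k$, whereas the time-domain content of~\eqref{rr} with causal $\hat H$ is a \emph{one-sided} inequality of the form~\eqref{kk2}, namely $\sum_k\rho^{-2k}u_k^\tp\bigl(y_k-\rho^{2j}y_{k-j}\bigr)\ge0$, and the one-sided correlation cannot be extracted from the symmetrized bound (sector bounds on the remaining term are far too crude). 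Moreover, the genuinely mixed space-time terms $u_k^{(i)}y_{k-l}^{(j)}$ with $i\ne j$ and $l\ge1$ are covered by neither of your two families of pointwise inequalities.

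What is missing is precisely the potential-function/telescoping step that drives both Theorem~\ref{thm:rho_zf} and this theorem: since $\phi$ is a static monotone scalar nonlinearity, it is the derivative of a nonnegative convex potential $f$ (Kachurovskii), and the gradient inequalities $d_k\ge f_k$ and $d_k-p_k^l\ge f_k-f_{k-l}$ (the latter handles the mixed $i\ne j$, lag-$l$ terms) reduce each atom's $\rho^{-2k}$-weighted partial sum to a telescoping expression that is nonnegative because $f\ge0$ and the lag coefficient is $\rho^{2l}$; the monotone rearrangement inequality (Lemma~\ref{lem:sectorrepeated}) is needed only for the same-time cross-coordinate part $d_k-c_k$. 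With that step supplied, your decomposition goes through essentially as in the paper, with two further caveats: purely scalar off-by-$j$ atoms do not suffice for the off-diagonal entries $H_{ij,k}$ (the pair atoms with diagonal padding are what the dominance budget pays for), and the $[\alpha,\beta]$ statement relies on the standing monotonicity assumption of Section~\ref{sec:repeated_nonlinearities} together with a loop transformation --- sector-boundedness alone would not give the convex potential your (corrected) argument needs.
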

\begin{proof}
The proof is similar in spirit to that of Theorem \ref{thm:rho_zf} but more involved; see Appendix~\ref{sec:ddthmproof}.
\end{proof}

\begin{rem}\label{rem:Gammafact}
The repeated $[\alpha,\beta]$-sector nonlinearity $\rho$-IQC admits the factorization
\begin{align*}
\Psi = &\; \bmat{\beta (\Gamma-\hat H) & -(\Gamma-\hat H) \\ -\alpha I &  I},\quad
M = \bmat{0 & I \\ I & 0}\:.
\end{align*}
\end{rem}
See Appendix~\ref{sec:appendixcomp} for a note on how to search over general nonnegative combinations of $\rho$-IQCs of the form (\ref{eq:pi_ddzf}), which is not immediately apparent.


\section{Examples}\label{sec:examples}
\subsection{Using multiple IQCs}
Using multiple IQCs can lead to a more refined $L_2$ gain bound. Likewise, using multiple $\rho$-IQCs can lead to refined exponential rates. In this section, we present numerical examples using both pointwise and dynamic $\rho$-IQCs.

Consider a stable discrete-time LTI system $G(z)$ in feedback with the sigmoidal nonlinearity $\Delta(x) = b\arctan(x)$. This interconnection is shown in Fig.~\ref{fig:lure_ex}.
\begin{figure}[th]
  \centering
  \includegraphics{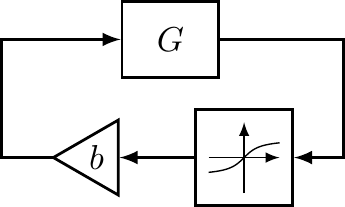}
  \caption{LTI system $G$ in feedback with the static sigmoidal nonlinearity \mbox{$\Delta(x)=b \arctan(x)$}.}\label{fig:lure_ex}
\end{figure}

Since this nonlinearity is static, in the $[0,b]$ sector, and $[0,b]$ slope-restricted, it satisfies the following $\rho$-IQCs:
\begin{align}
\Pi_\textup{n}(z) &\triangleq \bmat{b^2&0\\0&-1}  \quad \text{(norm-bounded)} \label{aa}\\
\Pi_0(z) &\triangleq \bmat{0&b\\b&-2} \quad \text{(sector bounded)} \label{bb}\\
\Pi_k(z) &\triangleq
\bmat{0&b(1 - \rho^{2k} \bar z^{-k}) \\ b(1 - \rho^{2k} z^{-k}) & -2 +\rho^{2k}(z^{-k}+\bar z^{-k})} \label{cc}\\
&\; \quad \text{(off-by-$k$ Zames--Falb)} \nonumber
\end{align}
where we may choose any $k\geq 1$. 

\paragraph*{A simple bound.} For our first case study, we analyzed the interconnection of Fig.~\ref{fig:lure_ex} with the LTI system\footnote{This example was inspired by the continuous-time example given in \cite{scherer}, which showed that adding more IQCs yields better $L_2$ gain bounds.}
\begin{equation}\label{G1}
G_1(z) = -\frac{(z+1)(10z+9)}{(2z-1)(5z-1)(10z-1)}\:.
\end{equation}
We solved the feasibility LMI~\eqref{eq:explmi} using MATLAB together with CVX~\cite{cvx2,cvx} to find the fastest guaranteed rate of convergence and we searched over positive linear combinations of subsets of the IQCs~\eqref{aa}--\eqref{cc}. Fig.~\ref{fig:rates} shows the rate bounds achieved as a function of which IQCs were used. Fig.~\ref{fig:states} shows sample state trajectories for the case $b=1$.

The true exponential rate can be found by linearizing the system about its equilibrium point. Namely, $\Delta(x) \approx bx$. Formally, this is an application of Lyapunov's indirect method~\cite[Thm.~4.13]{khalil}. The result is that the decay rate should correspond to the maximal pole magnitude of the closed-loop map $G(z)/(1-bG(z))$. We display the true exponential rate as the dashed black curve in Fig.~\ref{fig:rates} and Fig.~\ref{fig:states}.

For this example, the $\rho$-IQC approach yields a tight upper bound to the true exponential rate when we use a combination of the sector and off-by-1 IQCs. We also computed the exponential rate derived from $\ell_2$ gain as described in Section~\ref{sec:L2exp} (dotted line). The $\ell_2$ bound is very conservative despite being computed using all available IQCs.

\begin{figure}[th]
  \centering
  \includegraphics{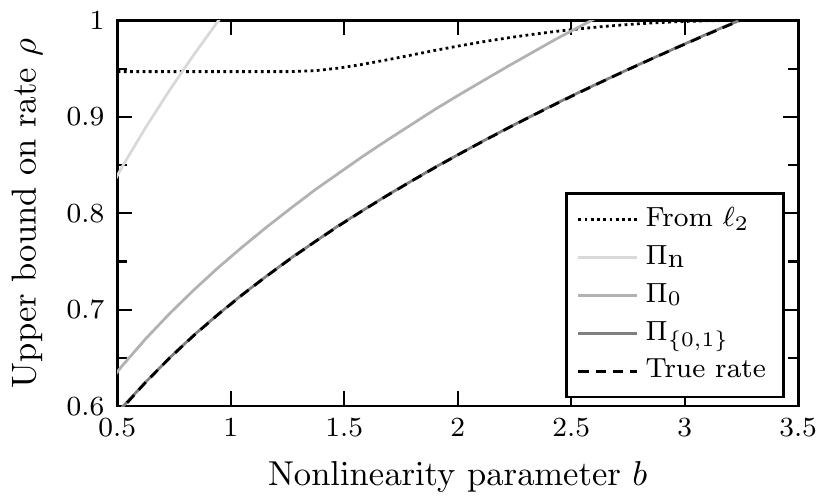}
  \caption{Upper bounds on the exponential convergence rate $\rho$ for the system $G_1(z)$ given in~\eqref{G1} in feedback as in Fig.~\ref{fig:lure_ex}. A tight bound is achieved using two $\rho$-IQCs. The bound derived from the $\ell_2$ gain is very conservative.\label{fig:rates}}
\end{figure}

\begin{figure}[th]
  \centering
  \includegraphics{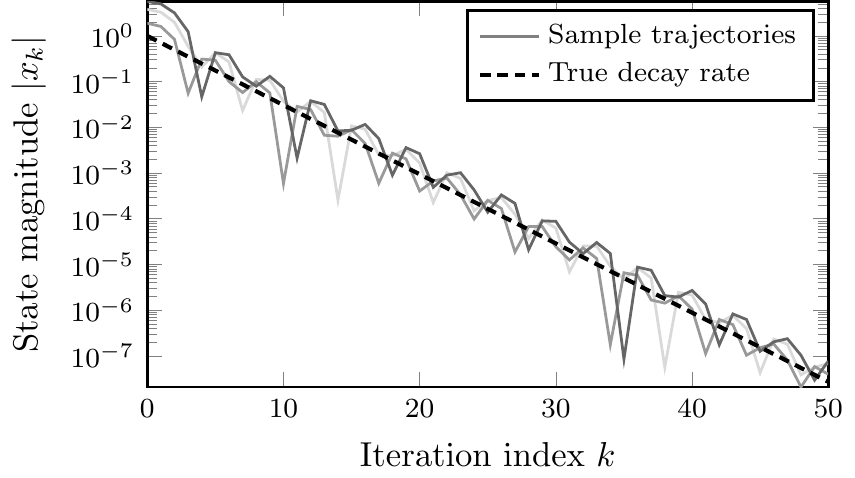}
  \caption{State decay over time of the system $G_1(z)$ in feedback as in Fig.~\ref{fig:lure_ex} with $b=1$ for various initial conditions $x_0\in[-15,15]$. The dashed black line is $\rho^k$, where $\rho=.7058$ is the true rate at $b=1$ in Fig.~\ref{fig:rates}.}
  \label{fig:states}
\end{figure}

\paragraph*{A more complex bound.} The $\rho$-IQC approach does not always achieve tight bounds as in the previous example. Consider the same interconnection of Fig.~\ref{fig:lure_ex} but this time using
\begin{equation}\label{G2}
G_2(z) = \frac{2z-1}{10(2z^2-z+1)}
\end{equation}
The rate bounds for various $\rho$-IQCs are shown in Fig.~\ref{fig:rates_ex2}. This time, we again observe that using more IQCs achieves better rate bounds, but the bound is not tight even after using six IQCs. However, if we add the Zames--Falb IQCs corresponding to odd monotone nonlinearities, the rate improves to within a small tolerance of the true rate.

\begin{figure}[thpb]
  \centering{}
  \includegraphics{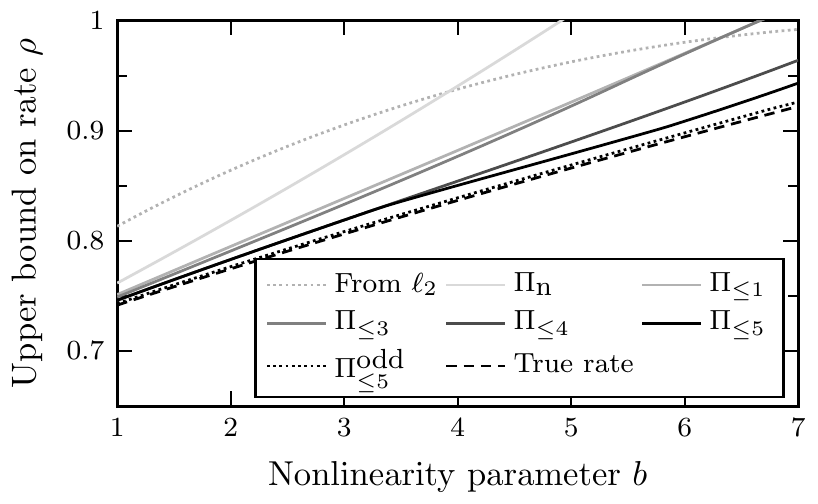}
  \caption{ Upper bounds on the exponential convergence rate $\rho$ for the system $G_2(z)$ given in~\eqref{G2} in feedback as in Fig.~\ref{fig:lure_ex}. As we include more $\rho$-IQCs, we can certify tighter bounds. Once again, the $\ell_2$-derived bound is more conservative.}
  \label{fig:rates_ex2}
\end{figure}

As in the previous example, the best achievable rate derived from an $\ell_2$ gain bound as detailed in Section~\ref{sec:L2exp} is still very conservative when compared to the rates obtained by using the $\rho$-IQC approach.

\paragraph*{A quasi-odd nonlinearity} Consider the asymmetric nonlinearity in Fig.~\ref{fig:asym_nonlin}, shown with the associated monotone and odd bounds as defined in \cite{heath_genzf}. In this example, we have $R_m = 1$ and $R_o = 2$.
\begin{figure}[thpb]
  \centering{}
  \includegraphics{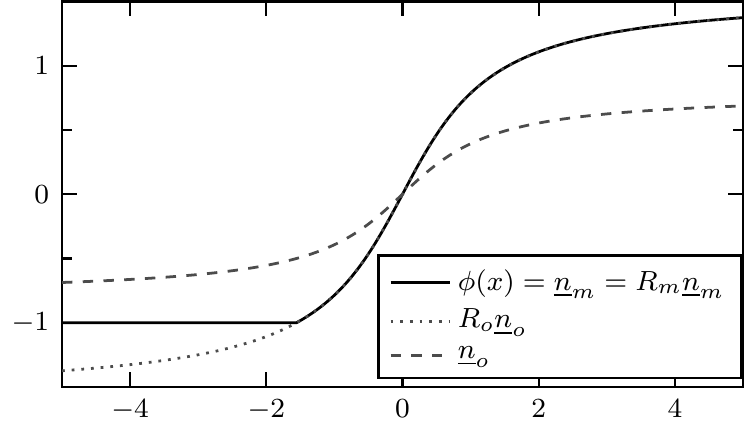}
  \caption{Plot of the monotone and quasi-odd asymmetric nonlinearity \mbox{$\phi(x) = \max\{\arctan(x),-1\}$} with its associated bounds.}
  \label{fig:asym_nonlin}
\end{figure}
Thus, we may invoke Corollary~\ref{thm:qmp_zf} and use the associated $\rho$-IQC. Using this system in feedback with the $G(z)$ from the second example, we see in Fig.~\ref{fig:quasi_zf} that the quasi-odd Zames--Falb IQCs yield better performance than the monotone Zames--Falb IQCs of the same order (which requires all filter coefficients $h_k$ to be positive).

\begin{figure}[thpb]
  \centering{}
  \includegraphics{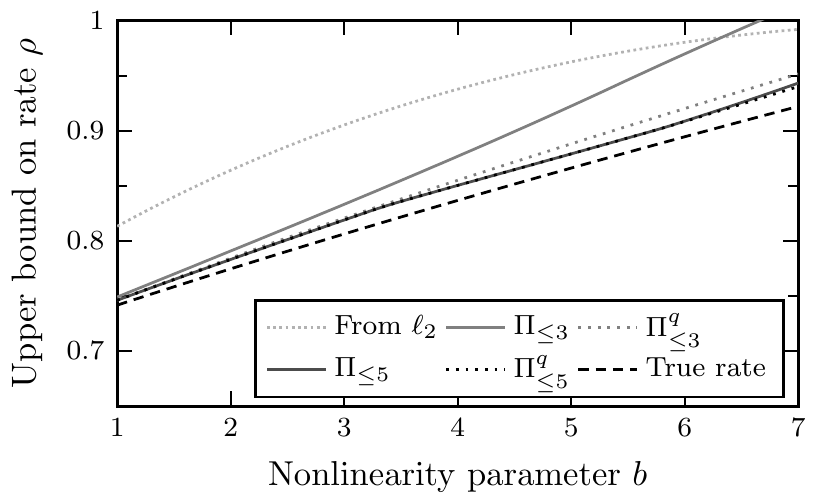}
  \caption{Comparison of monotone Zames--Falb and quasi-odd (denoted with superscript $q$) Zames--Falb IQC rate certificates.}
  \label{fig:quasi_zf}
\end{figure}

\paragraph*{Repeated nonlinearities}
To illustrate the need for repeated nonlinearity IQCs, first instantiate some stable SISO system $G$ with realization $(A,B,C,D)$. Now, consider the ``extended'' 2-input 2-output system
\begin{align*}
G_{\text{ext}} = &\; 
\left[\begin{array}{c|cc}
A & B & -B \\ \hlinet
C & D & 0\\
C & 0 & D
\end{array}\right]
\end{align*}
and connect this system in positive feedback with the block-diagonal nonlinearity \mbox{$\Delta=\diag\{\Delta_1,\Delta_2\}$}. If we constrain $\Delta_1=\Delta_2$, then the nonlinearities cancel each other out and the system is in open loop. The convergence rate of the state is therefore determined by the largest magnitude eigenvalue of $A$. However, if our IQC does not capture that the nonlinearity is repeated and instead only assumes each individual nonlinearity is (say) $[0,b]$-slope restricted, then $G$ must essentially be robust to $b$-norm bounded nonlinearities in the feedback loop. This will result in a worse rate certificate or even none at all (if $G$ is made unstable by positive feedback). 

Indeed, constructing $G_{\text{ext}}$ using our previous ``tight bound'' example with $b=0.3$ leads to a rate certificate of $\approx 0.825$ using only the odd monotone IQC; replacing it with the repeated odd monotone nonlinearity IQC  gives a certificate matching
the true convergence rate, $0.5$.

\section{Conclusion}\label{sec:conclusion}

\LL{IQC theory is the most general tool available for certifying robust stability of systems in feedback with unknown, uncertain, or otherwise difficult nonlinearities. As stable systems are often exponentially stable, it is reasonable to want finer control over not only stability, but also exponential decay rate.}

\LL{The generalization presented herein enables the certification of robust \textit{exponential} stability with precise control over the decay rate. Moreover, the library of $\rho$-IQCs provided shows how this approach can be applied as broadly and efficiently as the classical IQC theory.}

\if\MODE1
\section{Acknowledgments}\label{sec:acknowledgments}
{}
The authors would like to thank \mbox{Murat Arcak} for very helpful discussions. L.~Lessard was partially supported by AFOSR award FA9550-12-1-0339. R.~Boczar is supported by the Department of Defense NDSEG Scholarship.
\fi

\appendix

\section{Appendix}

\subsection{Proof of Proposition~\ref{prop:exp}}\label{sec:prop5proof}

Suppose the interconnection of Fig.~\ref{fig:lure2} is stable. Then there exists some $K>0$ such that for any choice of the signals $e$ and $f$ and for all $T$,
\begin{equation}\label{hh}
\sum_{k=0}^T  \bl(\|w_k\|^2 + \|v_k\|^2\br) \leq  K \sum_{k=0}^T  \bl(\|e_k\|^2 + \|f_k\|^2\br)\:.
\end{equation}
The proof will follow by carefully choosing $e$ and $f$ to transform Fig.~\ref{fig:lure2} into Fig.~\ref{fig:lure1}.
To this end, note that $(A,B)$ is controllable by assumption. So there exists a finite sequence of inputs $u_0,\dots,u_{n-1}$ and corresponding outputs $y_0,\dots,y_{n-1}$ that drives the state of $G$ from $\xi_0=0$ to $\xi_n=x_0$. Therefore, if we set
\[
e_k = \begin{cases}\rho^{-k}u_k & 0\le k < n\\ 0 & k \ge n\end{cases}
,\;\;\;
f_k = \begin{cases}-\rho^{-k}y_k & 0\le k < n\\ 0 & k \ge n\end{cases}
\]
then we obtain $\xi_n = x_0$ in the interconnection of Fig.~\ref{fig:lure2}. Moreover, $\rho_-\circ\rho_+$ is the identity operator. It follows that for  $k\ge n$, the two interconnections become identical and therefore $\xi_k = x_{k-n}$.

Substituting into~\eqref{hh}, we conclude that
\begin{equation}\label{hhh}
\sum_{k=0}^T  \bl(\|w_k\|^2 + \|v_k\|^2\br) \leq  K \sum_{k=0}^{n-1} \bl(\|e_k\|^2 + \|f_k\|^2\br)\:.
\end{equation}
The right-hand side of~\eqref{hhh} is independent of $T$, but~\eqref{hhh} holds for all $T$ so we must have
\begin{equation*}
\lim_{k \to\infty} \|w_k\| = 0
\qquad\text{and}\qquad
\lim_{k \to\infty} \|v_k\| = 0\:.
\end{equation*}
For $k\ge n$, we have $w_k = \rho^{-k}u_k$ and $v_k = \rho^{-k} y_k$. Therefore there exists some constant $c>0$ such that
\[
\|u_k\| \le c \rho^k
\qquad\text{and}\qquad
\|y_k\| \le c \rho^k \:.
\]
Now $(A,C)$ is observable by assumption, so let $L$ be such that the eigenvalues of $A+LC$ are all zero. Rewrite the dynamics of $G$ as
\begin{equation}\label{gg}
x_{k+1} = \bar A x_k +\bar B h_k
\end{equation}
where $\bar A \defeq A+LC$, $\bar B \defeq  \bmat{LD+B & -L}$, and $h_k \defeq \bmat{u_k^\tp  & y_k^\tp }^\tp $.
Iterating~\eqref{gg}, we obtain
\begin{equation}\label{ggg}
x_k = \bar A^k x_0 + \sum_{i=0}^{k-1} \bar A^{k-1-i} \bar B h_i\:.
\end{equation}
Since all eigenvalues of $\bar A$ are zero, $\bar A$ is nilpotent and so $\bar A^n = 0$. For $k\ge n$, \eqref{ggg} therefore becomes
\begin{equation*}\label{qq}
x_k = \sum_{i=0}^{n-1} \bar A^{n-1-i} \bar B h_{k-n+i}\:.
\end{equation*}
We can now bound the state using the triangle inequality.
\begin{align*}
\|x_k\|
&\le \underbrace{\bl\| \bmat{ \bar A^{n-1}\bar B & \dots & \bar A \bar B & \bar B} \br\|}_\gamma \sum_{i=k-n}^{k-1} \|h_i\| \\
&\le 2\gamma\, c \left( \frac{ \rho^{-n}-1}{1-\rho} \right) \rho^k\:,
\end{align*}
and this completes the proof.
\qedhere

\subsection{Proof of Theorem~\ref{thm:rho_zf} and related extensions}\label{sec:zfthmproof}
\subsubsection{\texorpdfstring{$[0,\infty]$-slope restricted case}{[0, Inf]-slope restricted case}}\label{sec:thm20}

We will prove this general result by first considering the simpler case where the slope restriction is on $[\alpha,\beta] = [0,\infty]$ and $H(z) = \pm\gamma_l\rho^{2j} z^{-l}$ for some constants  $0<\gamma_l\leq 1$. Note that this choice trivially satisfies~\eqref{eq:rhozf_constraint}, and the extensions of \eqref{eq:rhozf_constraint} follow for specific restrictions on $\gamma_k$ and mixed-sign $h_k$. In this case, the~$\Pi$ from~\eqref{eq:zf} (first taking the positive sign in $H(z)$) becomes
\begin{equation}\label{obj}
\Pi = \bmat{ 0 & 1-\gamma_j\rho^{2j}\bar z^{-l} \\ 1-\gamma_l\rho^{2l}z^{-l} & 0 }
\end{equation}
where $\bar z$ denotes the complex conjugate of $z$. We call~\eqref{obj} the ``off-by-$l$'' Zames--Falb IQC. We would like to show that $\Delta \in \IQC(\Pi(z),\rho)$. Appealing to Definition~\ref{def:piqc} and Remarks~\ref{rem:rhoIQC_time_domain} and~\ref{rem:zf}, this amounts to proving that
\begin{equation}\label{kk2}
\sum_{k=0}^\infty \rho^{-2k} u_k^\tp ( y_k -\gamma_l\rho^{2l} y_{k-l} ) \ge 0\:.
\end{equation}
We will prove~\eqref{kk2} by borrowing the approach from~\cite{lessard_analysis_2014}. If $\Delta$ is multidimensional, we require that $\Delta$ be the gradient of a potential function~\cite{heath_zames-falb_2005}. By the assumption that $\Delta$ is slope-restricted on $[0,\infty]$, we have
\[
(\Delta(x)-\Delta(y))^\tp(x-y) \ge 0 \quad\text{holds for all }x,y \:.
\]
In other words, $\Delta$ is monotone. Now define the scalar function $g$ such that $\grad g = \Delta$. By Kachurovskii's theorem, $g$ is convex and we have
\[
g(y) \ge g(x) + \Delta(x)^\tp(y-x)
\quad\text{for all }x,y \:.
\]
Moreover, setting $(x,y)\mapsto(y_k,0)$ or $(x,y)\mapsto(y_k,y_{k-l})$ leads to the two inequalities:
\begin{align}
u_k^\tp y_k &\ge g(y_k) \label{f1} \\
u_k^\tp (y_k - y_{k-l}) &\ge g(y_k) - g(y_{k-l})\:. \label{f2}
\end{align}
We will assume for simplicity that $g(x) \ge 0$ for all $x$, and we will first prove the case where we take the positive sign in $H(z)$.
Substituting~\eqref{f1} and \eqref{f2} into the left-hand side of~\eqref{kk2}, the partial sum from $0$ to $T$ is:
\if\MODE1
\begin{align*}
& \sum_{k=0}^T \rho^{-2k} u_k^\tp ( y_k- \gamma_l \rho^{2l} y_{k-l} ) \\
&=\sum_{k=0}^T \rho^{-2k}\bl( (1-\gamma_l\rho^{2l})u_k^\tp y_k + \gamma_l\rho^{2l}u_k^\tp(y_k- y_{k-l})\br) \\
&\ge \sum_{k=0}^T \rho^{-2k}\bl( (1-\gamma_l\rho^{2l})g(y_k) + \gamma_l \rho^{2l}(g(y_k)-g(y_{k-l})) \br) \\
&= \sum_{k=0}^T \rho^{-2k}\bl( g(y_k) - \gamma_l\rho^{2l}g(y_{k-l}) \br) \\
&= \sum_{k=0}^{T-l}(1-\gamma_l) \rho^{-2k} g(y_k) + \sum_{k=T-l+1}^{T} \rho^{-2k} g(y_k) \ge 0\:.
\end{align*}
\else
\begin{align*}
\sum_{k=0}^T \rho^{-2k} u_k^\tp ( y_k- \gamma_l \rho^{2l} y_{k-l} )
&=\sum_{k=0}^T \rho^{-2k}\bl( (1-\gamma_l\rho^{2l})u_k^\tp y_k + \gamma_l\rho^{2l}u_k^\tp(y_k- y_{k-l})\br) \\
&\ge \sum_{k=0}^T \rho^{-2k}\bl( (1-\gamma_l\rho^{2l})g(y_k) + \gamma_l \rho^{2l}(g(y_k)-g(y_{k-l})) \br) \\
&= \sum_{k=0}^T \rho^{-2k}\bl( g(y_k) - \gamma_l\rho^{2l}g(y_{k-l}) \br) \\
&= \sum_{k=0}^{T-l}(1-\gamma_l) \rho^{-2k} g(y_k) + \sum_{k=T-l+1}^{T} \rho^{-2k} g(y_k) \ge 0\:.
\end{align*}
\fi
Since each partial sum is nonnegative, the infinite sum (which must converge) is also nonnegative, and therefore we have proven~\eqref{kk2}. Now, for the case where we take the negative sign in $H(z)$, further assume that $\Delta$ is an odd function, which implies $g$ is an even function. Thus, using this fact and convexity inequality for $g$ with $(x,y)\mapsto(y_k,-y_{k-l})$ leads to the additional inequality
\begin{align*}
u_k^\tp (y_k + y_{k-l}) &\ge g(y_k) - g(y_{k-l})\:. \label{f3}
\end{align*}
The proof of nonnegativity of the partial sums then follows as before. Thus, a $[0,\infty]$-slope restricted $\Delta$ satisfies the off-by-$l$ $\rho$-IQC (and also the negative version if $\Delta$ is assumed to be odd).

Now we consider the case of a more general $\hat h(z)$. Suppose $\hat h(z) = \sum_{k=0}^\infty h_k z^{-k}$ where $h_k$ satisfies $\sum_k\gamma_k^{-1}\rho^{-2k}|h_k|\leq 1$. Then,
\if\MODE1
\begin{align*}
1 &- \hat h(z) = \\
&\underbrace{1-\sum_{k=0}^\infty \gamma_k^{-1}\rho^{-2k}|h_k|}_{\equiv c}
+ \sum_{h_k\geq0}\gamma_k^{-1}\rho^{-2k}h_k\left(1-\gamma_k\rho^{2k}z^{-k}\right) \\
&\;+ \sum_{h_k<0}\gamma_k^{-1}\rho^{-2k}(-h_k)\left(1+\gamma_k\rho^{2k}z^{-k}\right) \\
= &\;  c\,(1-\hat h_s) + \sum_{h_k\geq 0}\gamma_k^{-1}\rho^{-2k}h_k\left(1-\hat h_k^+(z)\right) \\
&\;+ \sum_{h_k< 0}\gamma_k^{-1}\rho^{-2k}(-h_k)\left(1-\hat h_k^-(z)\right)\:,
\end{align*}
\else
\begin{align*}
1 - \hat h(z) &= \underbrace{1-\sum_{k=0}^\infty \gamma_k^{-1}\rho^{-2k}|h_k|}_{\equiv c}
+ \sum_{h_k\geq0}\gamma_k^{-1}\rho^{-2k}h_k\left(1-\gamma_k\rho^{2k}z^{-k}\right) \\
&\hspace{6cm}+ \sum_{h_k<0}\gamma_k^{-1}\rho^{-2k}(-h_k)\left(1+\gamma_k\rho^{2k}z^{-k}\right) \\
&= c\,(1-\hat h_s) + \sum_{h_k\geq 0}\gamma_k^{-1}\rho^{-2k}h_k\left(1-\hat h_k^+(z)\right)
+ \sum_{h_k< 0}\gamma_k^{-1}\rho^{-2k}(-h_k)\left(1-\hat h_k^-(z)\right),
\end{align*}
\fi
where $\hat h_k^\pm(z)=\pm\gamma_k\rho^{2k}z^{-k}$ and $\hat h_s=0$ (for illustration). Note that $\hat h_k(z)$ corresponds the off-by-$k$ Zames--Falb IQC, which we proved above is a $\rho$-IQC, where the negative version is only used (with corresponding negative $h_k)$ if $\Delta$ is assumed to be odd. Also, $1-\hat h_s$ corresponds to the sector IQC, which is also a $\rho$-IQC. Now note that the general Zames--Falb IQC~\eqref{eq:zf} is linear in $1-\hat h$ and $1-\hat h^*$. Therefore, since by assumption $c \geq 0$, $\Pi(z)$ is a positive linear combination of $\rho$-IQCs and must therefore be a $\rho$-IQC itself. \qedhere

\subsubsection{Specific Zames--Falb classes}
We would now like to generalize this proof (or equivalently, specify further the class of nonlinearities). Now, assume that the nonlinearity $\Delta$ can be written as $\Delta_2 \circ \Delta_1$, where $\Delta_1\in\IQC(\Pi(z),\rho)$ of the Zames--Falb type in the preceding section where \mbox{$\sum_k \gamma_k^{-1} \rho^{-2k}|h_k|\leq 1$}. Further assume that $\Delta_2$ (which is possibly time-varying) satisfies
\begin{align*}
(1-\delta)u_k\leq \Delta_2(u_k)\leq (1+\delta)u_k\quad \forall \:u_k,k
\end{align*}
or equivalently,
\begin{align*}
\Delta_2(u_k) = (1+\delta_k)u_k, \:|\delta_k|\leq\delta
\end{align*}
for some $\delta<1$. We would like to show under what conditions $\Delta_2 \circ \Delta_1$ satisfies a Zames--Falb $\IQC$ with rate $\rho$.

To do this, we will show that $\Delta_2 \circ \Delta_1$ satisfies the relevant off-by-$l$ Zames--Falb $\rho$-IQC, which then extends to general Zames--Falb by the preceding section.
As in Section~\ref{sec:thm20}, we would like to show that
\begin{equation*}\label{nzf}
\sum_{k=0}^\infty \rho^{-2k} \Delta(y_k)^\tp (y_k \mp \gamma_l\rho^{2l} y_{k-l} ) \ge 0\:.
\end{equation*}
Using our prescribed $\Delta$ (taking the negative sign for simplicity), we see that each partial sum satisfies
\begin{align*}
\sum_{k=0}^\T &\; \rho^{-2k} (1+\delta_k)u_k^\tp (y_k - \gamma_l\rho^{2l} y_{k-l}) \quad(\Delta_1(y_k)\equiv u_k)\\
\geq &\;\sum_{k=0}^T \rho^{-2k}(1+\delta_k)\bl( g(y_k) - \gamma_l\rho^{2l}g(y_{k-l}) \br)
\end{align*}
by the same argument from the preceding section. This is then equal to 
\begin{align*}
\sum_{k=0}^{T-l}(1+\delta_k-(1+\delta_{k+l})\gamma_l) \rho^{-2k} g(y_k)
+\sum_{k=T-l+1}^{T} (1+\delta_k)\rho^{-2k} g(y_k)\:.
\end{align*}
A sufficient condition for this sum to be positive is
\begin{align*}
\gamma_l \leq \frac{1+\delta_k}{1+\delta_{l+k}}\:\forall\:k,
\end{align*}
which is satisfied if, for example,
\begin{align*}
 \gamma_l \equiv \gamma\leq \frac{1-\delta}{1+\delta}\:.
\end{align*}
If so, the partial sums converge and so does the infinite sum. Again, if we further assume that $\Delta_1$ is odd, the negative off-by-$l$ IQC is also satisfied. The argument for general $H(z)$ follows as before.\qedhere

Proofs for specific classes of nonlinearities in the literature correspond to specific choices of $\gamma_k$ and $\delta_k$. These are summarized in Table \ref{tab:zf}.
\if\MODE1
\begin{table*}[!t]
\renewcommand{\arraystretch}{1.5}
\caption{Variable Choices for Specific Zames--Falb Proofs}
\label{table_example}
\centering
\begin{tabular}{cccc}
\hline
\bfseries Type & \bfseries $\delta_k$ & \bfseries $\gamma_k$ & \bfseries Notes\\
\hline\hline
$[0,\infty]$-slope restricted & $0$ & $1$ & \\
$[\alpha,\beta]$-slope restricted & $0$ & $1$ &\shortstack{Loop transformation \\$(y,u) \mapsto (\beta y - u,u-\alpha y)$}\\
Noisy composition & $\delta$ & $1$ & \\
Stiction (slope $1/\epsilon$) &$\delta$ & $\dfrac{1-\delta}{1+\delta}$ &\\
Quasi-monotone/odd &$\dfrac{R_m-1}{R_m+1}, \dfrac{R_o-1}{R_o+1}$ & $R_m^{-1}, h_k\geq 0; \:R_o^{-1},h_k<0$ &Notational change in def. from \cite{heath_genzf}: $\underbar{$n$}_*\mapsto\frac{2}{R_*+1}\underbar{$n$}_*$\\[2mm]
\hline
\end{tabular}\label{tab:zf}
\end{table*}
\else
\begin{table}[!t]
\renewcommand{\arraystretch}{1.5}
\caption{Variable Choices for Specific Zames--Falb Proofs}
\label{table_example}
\centering
\begin{tabular}{cccc}
\hline
\bfseries Type & \bfseries $\delta_k$ & \bfseries $\gamma_k$ & \bfseries Notes\\
\hline\hline
$[0,\infty]$-slope restricted & $0$ & $1$ & \\
$[\alpha,\beta]$-slope restricted & $0$ & $1$ &\parbox{3.6cm}{Loop transformation \\$(y,u) \mapsto (\beta y - u,u-\alpha y)$}\\
Noisy composition & $\delta$ & $1$ & \\
Stiction (slope $1/\epsilon$) &$\delta$ & $\dfrac{1-\delta}{1+\delta}$ &\\[2mm]
Quasi-monotone/odd &$\dfrac{R_m-1}{R_m+1}, \dfrac{R_o-1}{R_o+1}$ & \parbox{2cm}{$R_m^{-1}, h_k\geq 0$\\$R_o^{-1},h_k<0$} & \parbox{4cm}{Notational change in def.\\ from \cite{heath_genzf}: $\underbar{$n$}_*\mapsto\frac{2}{R_*+1}\underbar{$n$}_*$}\\[2mm]
\hline
\end{tabular}\label{tab:zf}
\end{table}
\fi
\subsection{Proof of Theorem \ref{thm:rhozf_dd}}\label{sec:ddthmproof}

We begin with an elementary lemma; a similar one is used in \cite{damato_repeated} for the proof of (\ref{prop:constant_dd}).
\begin{lem}\label{lem:sectorrepeated}
For a repeated monotone nondecreasing nonlinearity $\Delta$ and with $u=\Delta(y)$, we have that
\begin{equation*}
u^{(i)}y^{(j)} + u^{(j)}y^{(i)} \leq u^{(i)}y^{(i)} + u^{(j)}y^{(j)},
\end{equation*}
and if $\phi$ is odd,
\begin{equation*}
|u^{(i)}y^{(j)} + u^{(j)}y^{(i)}| \leq u^{(i)}y^{(i)} + u^{(j)}y^{(j)},
\end{equation*}
for all indices $i, j$.
\end{lem}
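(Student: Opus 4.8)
The plan is to reduce both inequalities to the defining monotonicity relation of the scalar nonlinearity $\phi$. Since $\Delta$ is \emph{repeated}, we have $u^{(i)} = \phi(y^{(i)})$ for every coordinate $i$, so each claimed inequality is really a statement about the two scalars $a := y^{(i)}$ and $b := y^{(j)}$ together with $\phi(a)$ and $\phi(b)$. The first step is simply to expand and regroup: the inequality $u^{(i)}y^{(j)} + u^{(j)}y^{(i)} \le u^{(i)}y^{(i)} + u^{(j)}y^{(j)}$ is algebraically equivalent to
\begin{equation*}
\bl(u^{(i)} - u^{(j)}\br)\bl(y^{(i)} - y^{(j)}\br) \;=\; \bl(\phi(a)-\phi(b)\br)(a-b) \;\ge\; 0,
\end{equation*}
which holds because $\phi$ is nondecreasing. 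That disposes of the first claim (and, incidentally, of the lower bound $-\bl(u^{(i)}y^{(i)}+u^{(j)}y^{(j)}\br)\le u^{(i)}y^{(j)}+u^{(j)}y^{(i)}$ is not what we just proved, so the odd case needs a separate argument).

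For the odd case I would use $|t|\le s \iff -s \le t \le s$. The upper bound $u^{(i)}y^{(j)}+u^{(j)}y^{(i)}\le u^{(i)}y^{(i)}+u^{(j)}y^{(j)}$ is exactly the inequality just established, so it remains only to prove the matching lower bound, i.e.
\begin{equation*}
u^{(i)}y^{(i)}+u^{(j)}y^{(j)} + u^{(i)}y^{(j)}+u^{(j)}y^{(i)} \;=\; \bl(u^{(i)}+u^{(j)}\br)\bl(y^{(i)}+y^{(j)}\br)\;\ge\;0 .
\end{equation*}
Here is where oddness enters: $\phi(b) = -\phi(-b)$, so $\bl(\phi(a)+\phi(b)\br)(a+b) = \bl(\phi(a)-\phi(-b)\br)\bl(a-(-b)\br)\ge 0$ by applying the nondecreasing property of $\phi$ to the pair $a$ and $-b$. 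Combining the two bounds gives the absolute-value inequality.

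There is essentially no hard part here — the whole lemma is a two-line rearrangement plus the observation that oddness lets one flip the sign of one argument before invoking monotonicity. The only point requiring a little care is keeping the coordinate-wise bookkeeping straight (each inequality is purely coordinate-wise, so no cross terms between distinct coordinate pairs ever appear), and being explicit that "monotone nondecreasing" for the repeated nonlinearity means $\bl(\phi(s)-\phi(t)\br)(s-t)\ge 0$ for all scalars $s,t$, which is the sole property used.
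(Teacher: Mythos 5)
Your proof is correct and follows essentially the same route as the paper: the first inequality is the rearrangement $(u^{(i)}-u^{(j)})(y^{(i)}-y^{(j)})\ge 0$ from monotonicity of the repeated scalar nonlinearity, and the odd case is handled exactly as in the paper by noting that oddness lets you apply monotonicity to the pair $(y^{(i)},-y^{(j)})$, giving $(u^{(i)}+u^{(j)})(y^{(i)}+y^{(j)})\ge 0$ and hence the absolute-value bound.
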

\begin{proof}
Assume without loss of generality that $y^{(i)}\geq y^{(j)}$. By monotonicity and the fact that the nonlinearity is repeated, we must have that $u^{(i)} \geq u^{(j)}$. Thus:
\begin{equation*}
(u^{(i)}-u^{(j)})(y^{(i)}-y^{(j)})\geq 0 \:.
\end{equation*}
If $\phi$ is odd, then the second equation is proven by also observing that \mbox{$[u^{(i)},\:-u^{(j)}]^\tp = \Delta([y^{(i)},\:-y^{(j)}]^\tp)$} implies $(u^{(i)}+u^{(j)})(y^{(i)}+y^{(j)})\geq 0$.
\end{proof}

Let $E_{ij}$ denote the standard basis matrix. Now, given a diagonally dominant matrix $\Gamma$, define the following symmetric matrices:
\begin{align*}
\tilde \Gamma_{ij} = &\; \Gamma_{ij}(E_{ij}+E_{ji})+(|\Gamma_{ij}|+1)(E_{ii}+E_{jj}),\: i\neq j\\
\tilde \Gamma_{ii} = &\; E_{ii}\\
\tilde H_{ij}^l(z) = &\; \rho^{2l}z^{-l}(E_{ij}+E_{ji}), \: i\neq j\\
\tilde H_{ii}^l(z) = &\; \rho^{2l}z^{-l}E_{ii}.
\end{align*}

\begin{prop}
$\Delta$ satisfies the ``$(i,j)$ off-by-$l$ $\rho$-IQC'' defined by
\begin{align*}
\Pi =\; \bmat{0 & \tilde \Gamma_{ij} -{\tilde H_{ij}^l}\textup{*}  \\ \tilde \Gamma_{ij}-\tilde H_{ij}^l & 0}
\end{align*}
for all $\rho$ in $(0,1]$.
\end{prop}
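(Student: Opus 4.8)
The plan is to reduce the claimed ``$(i,j)$ off-by-$l$'' $\rho$-IQC to the scalar off-by-$l$ Zames--Falb $\rho$-IQC (with $\alpha=0$, $\beta=\infty$) that was already established in Appendix~\ref{sec:thm20}, combined with the pointwise sector inequality for repeated monotone nonlinearities supplied by Lemma~\ref{lem:sectorrepeated}. First I would write out, using Remark~\ref{rem:rhoIQC_time_domain} and Parseval, what the proposed $\rho$-IQC asserts in the time domain: namely that $\sum_{k=0}^\infty \rho^{-2k}\,\big( u_k^\tp \tilde\Gamma_{ij} y_k - \tfrac{1}{2} u_k^\tp \tilde H_{ij}^l(\text{shift}) y_k - \tfrac12 (\text{c.c.})\big)\ge 0$, which unpacks (using the explicit forms of $\tilde\Gamma_{ij}$ and $\tilde H_{ij}^l$) into
\[
\sum_{k=0}^\infty \rho^{-2k}\Big[ (|\Gamma_{ij}|+1)\big(u_k^{(i)}y_k^{(i)} + u_k^{(j)}y_k^{(j)}\big) + \Gamma_{ij}\big(u_k^{(i)}y_k^{(j)} + u_k^{(j)}y_k^{(i)}\big) \Big] - \rho^{2l}\sum_{k} \rho^{-2k}\big(u_k^{(i)}y_{k-l}^{(j)} + u_k^{(j)}y_{k-l}^{(i)}\big) \ge 0,
\]
where I would first treat the case $\Gamma_{ij}\le 0$ separately, since the sign of $\Gamma_{ij}$ controls which half of Lemma~\ref{lem:sectorrepeated} is needed.

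The key step is to split this sum into a ``pointwise'' piece controlled by Lemma~\ref{lem:sectorrepeated} and a ``dynamic'' piece controlled by monotonicity of the scalar components. Concretely, I would bound $\Gamma_{ij}(u_k^{(i)}y_k^{(j)}+u_k^{(j)}y_k^{(i)}) \ge -|\Gamma_{ij}|(u_k^{(i)}y_k^{(i)}+u_k^{(j)}y_k^{(j)})$ via Lemma~\ref{lem:sectorrepeated}, which cancels the $|\Gamma_{ij}|$ part of the diagonal term and leaves $\sum_k \rho^{-2k}\big(u_k^{(i)}y_k^{(i)}+u_k^{(j)}y_k^{(j)}\big) - \rho^{2l}\sum_k \rho^{-2k}\big(u_k^{(i)}y_{k-l}^{(j)}+u_k^{(j)}y_{k-l}^{(i)}\big)$. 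The remaining step is to recognize each of these as (a symmetrized version of) the scalar off-by-$l$ quantity \eqref{kk2}; more precisely, I would use the monotone-nonlinearity convexity inequalities \eqref{f1}--\eqref{f2} applied to the scalar potential $g$ (with $\grad g=\phi$), at the index pairs $(y_k^{(i)},y_{k-l}^{(j)})$ and $(y_k^{(j)},y_{k-l}^{(i)})$, together with the cross-index monotonicity $u_k^{(i)}(y_k^{(i)}-y_{k-l}^{(j)}) + u_k^{(j)}(y_k^{(j)}-y_{k-l}^{(i)}) \ge g(y_k^{(i)}) - g(y_{k-l}^{(j)}) + g(y_k^{(j)}) - g(y_{k-l}^{(i)})$, which follows since $\phi$ is repeated and monotone. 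Telescoping the partial sums exactly as in Appendix~\ref{sec:thm20} then shows nonnegativity of every partial sum, hence of the limit. For the odd case (needed when $\Gamma_{ij}>0$, and when one wants the negative sign), I would instead invoke the second inequality of Lemma~\ref{lem:sectorrepeated} and the evenness of $g$, paralleling the odd-function argument in Appendix~\ref{sec:thm20}.

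The main obstacle I anticipate is bookkeeping rather than conceptual: carefully matching the complex-frequency-domain matrix $\tilde\Gamma_{ij} - \tilde H_{ij}^l(z)^*$ to the correct symmetrized time-domain bilinear form (getting the factors of $\tfrac12$, the conjugates, and the time-shift direction right), and then verifying that after the Lemma~\ref{lem:sectorrepeated} cancellation the residual is \emph{exactly} a sum of two scalar off-by-$l$ expressions of the type proven nonnegative in Appendix~\ref{sec:thm20} — in particular that the ``diagonal slack'' $+1$ in $\tilde\Gamma_{ij}$ is precisely what is needed to absorb the off-diagonal term and leave coefficient $1$ on $u_k^{(i)}y_k^{(i)}+u_k^{(j)}y_k^{(j)}$, matching the coefficient in the scalar case. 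Once that alignment is pinned down, the proof is a direct transcription of the Appendix~\ref{sec:thm20} telescoping argument, and I would remark that general $(\rho,H)$-diagonally dominant $\Gamma$ follows by taking nonnegative linear combinations of these atomic $(i,j)$ off-by-$l$ $\rho$-IQCs together with the pointwise IQC of Proposition~\ref{prop:constant_dd}, exactly the combination strategy used at the end of Appendix~\ref{sec:thm20}.
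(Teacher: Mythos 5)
Your proposal is correct and follows essentially the same route as the paper: split the quadratic form into the pointwise piece $|\Gamma_{ij}|(d_k-c_k)$ handled by Lemma~\ref{lem:sectorrepeated} and the residual off-by-$l$ piece $d_k-\rho^{2l}p_k^l$ handled by the cross-index gradient/convexity inequalities for the scalar potential, then telescope exactly as in Appendix~\ref{sec:thm20} (your closing remark about combining these atoms for general $(\rho,H)$-diagonally dominant $\Gamma$ is also the paper's subsequent step in proving Theorem~\ref{thm:rhozf_dd}).
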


\begin{proof}
As before, assume the $\phi$ is the gradient of a potential function $f$.
For notational convenience, define the following symbols:
\begin{align*}
d_k = &\; u_k^{(i)}y_k^{(i)} + u_k^{(j)}y_k^{(j)} \\
c_k = &\; u_k^{(j)}y_k^{(i)} + u_k^{(i)}y_k^{(j)} \\
p_k^l = &\; u_k^{(j)}y_{k-l}^{(i)} + u_k^{(i)}y_{k-l}^{(j)}\\
f_{k} = &\; f(y_k^{(i)}) + f(y_k^{(j)})\:.
\end{align*}
Using convexity and the fact that the nonlinearity is repeated, we can obtain the inequalities
\begin{align*}
d_k-p_k^l \geq &\; f_k - f_{k-l} \\
d_k \geq &\; f_k\:.
\end{align*}
Then:
\if\MODE1
\begin{align*}
&\; \sum_{k=0}^T \rho^{-2k}u_k^\tp(\tilde \Gamma_{ij}-\tilde H_{ij}^l)y_k \\
= &\; \sum_{k=0}^T \rho^{-2k}|\Gamma_{ij}|(d_k-c_k)\\
&+ \sum_{k=0}^T \rho^{-2k}((1-\rho^{2l})d_k + \rho^{2l}(d_k-p_k^{ij,l})) \\
\geq &\; \sum_{k=0}^T \rho^{-2k}|\Gamma_{ij}|(d_k-c_k)\\
&+ \sum_{k=0}^T \rho^{-2k}((1-\rho^{2l})f_k + \rho^{2l}(f_k - f_{k-l}))\:.
\end{align*}
\else
\begin{align*}
\sum_{k=0}^T \rho^{-2k}u_k^\tp(\tilde \Gamma_{ij}&-\tilde H_{ij}^l)y_k\\
&= \sum_{k=0}^T \rho^{-2k}|\Gamma_{ij}|(d_k-c_k)
+ \sum_{k=0}^T \rho^{-2k}((1-\rho^{2l})d_k + \rho^{2l}(d_k-p_k^{ij,l})) \\
&\geq \sum_{k=0}^T \rho^{-2k}|\Gamma_{ij}|(d_k-c_k)
+ \sum_{k=0}^T \rho^{-2k}((1-\rho^{2l})f_k + \rho^{2l}(f_k - f_{k-l}))\:.
\end{align*}
\fi
The first sum is nonnegative by Lemma \ref{lem:sectorrepeated} and the second sum is nonnegative by the same arguments as in Appendix \ref{sec:thm20}.
\end{proof}
We will now show that for a $(\rho,H)$-diagonally dominant matrix $\Gamma$, the $\rho$-IQC defined by
\begin{align*}
\Pi =\; \bmat{0 & \Gamma-\hat H^*  \\ \Gamma-\hat H & 0}
\end{align*}
is a positive combination of satisfied $\rho$-IQCs and is thus a satisfied $\rho$-IQC.
Toward this end:
\if\MODE1
\begin{align*}
&\;\Gamma - \hat H(z) \\
= &\; \Gamma-\sum_{i\le j,k}\rho^{-2k}H_{ij,k}\tilde \Gamma_{ij}+[\sum_{i\le j,k}\rho^{-2k} H_{ij,k}(\tilde \Gamma_{ij} - \tilde H_{ij}^k(z))]\:.
\end{align*}
\else
\begin{align*}
\Gamma - \hat H(z) = \Gamma-\sum_{j\le i,k}\rho^{-2k}H_{ij,k}\tilde \Gamma_{ij}+[\sum_{j\le i,k}\rho^{-2k} H_{ij,k}(\tilde \Gamma_{ij} - \tilde H_{ij}^k(z))]\:.
\end{align*}
\fi
The term in brackets is a nonnegative linear combination of satified IQCs, so let us focus on the first term:
\if\MODE1
\begin{align*}
&\;Q \triangleq \Gamma-\sum_{i\leq j,k}\rho^{-2k}H_{ij,k}\tilde \Gamma_{ij} \\
= &\; \begin{cases}
\Gamma_{ii} - \sum_k[\sum_{j=1, j\neq i}^n\rho^{-2k}H_{ij,k}(|\Gamma_{ij}|+1) + \rho^{-2k}H_{ii,k}] \\
 \quad (ii \text{ indices})\\
(1-\sum_k \rho^{-2k}H_{ij,k})\Gamma_{ij} \\
 \quad (ij \text{ indices}).
\end{cases}
\end{align*}
\else
\begin{align*}
Q &\triangleq \Gamma-\sum_{i\leq j,k}\rho^{-2k}H_{ij,k}\tilde \Gamma_{ij} \\
&= \begin{cases}
\Gamma_{ii} - \sum_k[\sum_{j=1, j\neq i}^n\rho^{-2k}H_{ij,k}(|\Gamma_{ij}|+1) + \rho^{-2k}H_{ii,k}] \\
 \quad (ii \text{ indices})\\
(1-\sum_k \rho^{-2k}H_{ij,k})\Gamma_{ij} \\
 \quad (ij \text{ indices}).
\end{cases}
\end{align*}
\fi
We will now show that this constant matrix $Q$ is diagonally dominant:
\if\MODE1
\begin{align*}
&Q_{ii} - \sum_{j=1,j\neq i}^n|Q_{ij}| \\
= &\; \Gamma_{ii} - \sum_k[\sum_{j=1, j\neq i}^n\rho^{-2k}H_{ij,k}(|\Gamma_{ij}|+1) + \rho^{-2k}H_{ii,k}] \\
&- \sum_{j=1,j\neq i}^n(1-\sum_k \rho^{-2k}H_{ij,k})|\Gamma_{ij}|] \\
= &\; \Gamma_{ii} - \sum_{j=1,j\neq i}^n|\Gamma_{ij}| - \sum_{j=1}^n \sum_k \rho^{-2k}H_{ij,k} \geq 0
\end{align*}
\else
\begin{align*}
Q_{ii} - \sum_{j=1,j\neq i}^n|Q_{ij}|
&= \Gamma_{ii} - \sum_k\biggl(\sum_{j=1, j\neq i}^n\rho^{-2k}H_{ij,k}(|\Gamma_{ij}|+1) + \rho^{-2k}H_{ii,k}\biggr)\\
&\qquad- \sum_{j=1,j\neq i}^n(1-\sum_k \rho^{-2k}H_{ij,k})|\Gamma_{ij}|] \\
&= \Gamma_{ii} - \sum_{j=1,j\neq i}^n|\Gamma_{ij}| - \sum_{j=1}^n \sum_k \rho^{-2k}H_{ij,k} \geq 0
\end{align*}
\fi
by the assumption that $\Gamma$ is $(\rho,H)$-diagonally dominant. Similar modifications to the assumptions and proof hold for odd and $[\alpha,\beta]$-sector nonlinearities. Also, note that in the case of diagonal (but not necessarily repeating) $\Delta$, Lemma \ref{lem:sectorrepeated} will not hold in general. Thus, considering where it is used in the proof, we would need to constrain $\Gamma$ and $H$ to be diagonal.

\subsection{Computational considerations}\label{sec:appendixcomp}
\subsubsection{Homogeneity}
We may leverage the structure of the repeated Zames--Falb $\rho$-IQCs to reduce the size and complexity of the LMI~\eqref{eq:explmi}.
\begin{prop}(Homogeneity simplification)
If we are searching over a combination of repeated Zames--Falb IQCs with fixed $\alpha,\beta,\rho$ and varying over $\Gamma$ and $H$ matrices, and fixed non-Zames--Falb $\rho$-IQCs, i.e.
\if\MODE1
\begin{align*}
    \Pi_A = \biggl\{&\; \sum_{\theta=1}^n \lambda_\theta \Pi_{ZF}(\Gamma_\theta, H_\theta) + \sum_{\delta=1}^r \lambda_\delta\Pi_\delta\: | \: \lambda_\theta \geq 0,\:\lambda_\delta \geq 0,\\
    &\; \Gamma_\theta \text{ is $(\rho, H_\theta)$-diag. dom.}\biggr\}\:.
\end{align*}
\else
\begin{align*}
    \Pi_A = \biggl\{&\; \sum_{\theta=1}^n \lambda_\theta \Pi_{ZF}(\Gamma_\theta, H_\theta) + \sum_{\delta=1}^r \lambda_\delta\Pi_\delta\: | \: \lambda_\theta \geq 0,\:\lambda_\delta \geq 0,
    \Gamma_\theta \text{ is $(\rho, H_\theta)$-diag. dom.}\biggr\}\:.
\end{align*}
\fi
Then, searching over the associated LMI is equivalent to searching over
\if\MODE1
\begin{align*}
    \Pi_B = \biggl\{ &\; \lambda \Pi_{ZF}(\Gamma, H) + \sum_{\delta=1}^r \lambda_\delta \Pi_\delta \: | \: \lambda \geq 0,\:\lambda_\delta \geq 0,\\
    \; &\; \Gamma \text{ is $(\rho, H)$-diag. dom.}\biggr\}\:.
\end{align*}
\else
\begin{align*}
    \Pi_A = \biggl\{&\; \sum_{\theta=1}^n \lambda_\theta \Pi_{ZF}(\Gamma_\theta, H_\theta) + \sum_{\delta=1}^r \lambda_\delta\Pi_\delta\: | \: \lambda_\theta \geq 0,\:\lambda_\delta \geq 0, \Gamma_\theta \text{ is $(\rho, H_\theta)$-diag. dom.}\biggr\}\:.
\end{align*}
\fi
That is, $\Pi_A=\Pi_B$.
\end{prop}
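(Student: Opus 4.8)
The plan is to prove the set identity $\Pi_A=\Pi_B$ directly; once this is in hand the claimed equivalence of the two LMI searches is immediate, since \eqref{eq:explmi} is feasible for some $\Pi\in\Pi_A$ exactly when it is feasible for some $\Pi\in\Pi_B$. The inclusion $\Pi_B\subseteq\Pi_A$ is essentially by inspection: an element $\lambda\Pi_{ZF}(\Gamma,H)+\sum_{\delta}\lambda_\delta\Pi_\delta$ of $\Pi_B$ already has the form defining $\Pi_A$ — take a single nonzero weight $\lambda_1=\lambda$ with $(\Gamma_1,H_1)=(\Gamma,H)$ and the remaining $\lambda_\theta=0$ — and the fixed non-Zames--Falb part is verbatim the same in both sets. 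So the real content is the reverse inclusion $\Pi_A\subseteq\Pi_B$.

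For that, I would first isolate the structural fact behind the word ``homogeneity'': the map $(\Gamma,H)\mapsto\Pi_{ZF}(\Gamma,H)$ of \eqref{eq:pi_ddzf}, equivalently the one induced by the factorization of Remark~\ref{rem:Gammafact}, is linear — in fact positively homogeneous of degree one — jointly in $\Gamma$ and in the impulse-response coefficients $\{H_{ij,k}\}$, because each block of \eqref{eq:pi_ddzf} is a fixed linear combination of $\Gamma$, $\hat H$, and $\hat H^*$ with no constant term. Hence for nonnegative weights
\[
\sum_{\theta=1}^n\lambda_\theta\,\Pi_{ZF}(\Gamma_\theta,H_\theta)=\Pi_{ZF}(\Gamma,H),\qquad \Gamma:=\sum_\theta\lambda_\theta\Gamma_\theta,\quad H:=\sum_\theta\lambda_\theta H_\theta.
\]
Setting $\lambda:=\sum_\theta\lambda_\theta$ and, when $\lambda>0$, $\Gamma':=\Gamma/\lambda$, $H':=H/\lambda$, homogeneity again gives $\Pi_{ZF}(\Gamma,H)=\lambda\,\Pi_{ZF}(\Gamma',H')$, so it only remains to verify that $\Gamma'$ is $(\rho,H')$-diagonally dominant. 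The sign conditions $\Gamma'_{ii}\ge 0$, $\Gamma'_{ij}\le 0$ ($i\ne j$), $H'_{ij,k}\ge 0$ pass to nonnegative combinations and positive scalings; and since the off-diagonal entries $(\Gamma_\theta)_{ij}$ all share a sign and the $(H_\theta)_{ij,k}$ are nonnegative, one has $|\Gamma_{ij}|=\sum_\theta\lambda_\theta|(\Gamma_\theta)_{ij}|$ and $|H_{ij,k}|=\sum_\theta\lambda_\theta|(H_\theta)_{ij,k}|$, so the diagonal-dominance inequality for $(\Gamma,H)$ follows by taking the $\lambda_\theta$-weighted sum of those for the atoms $(\Gamma_\theta,H_\theta)$, and dividing through by $\lambda$ yields the same for $(\Gamma',H')$.

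The one step requiring care — and the place I expect the main subtlety — is the normalization bound $\sum_k\rho^{-2k}|H'_{ij,k}|\le 1$. Unlike the other requirements this is not a cone condition, and the raw sum $H=\sum_\theta\lambda_\theta H_\theta$ will in general violate $\sum_k\rho^{-2k}|H_{ij,k}|\le 1$; the point is that dividing by exactly $\lambda=\sum_\theta\lambda_\theta$ restores it, since $\sum_k\rho^{-2k}|H'_{ij,k}|=\tfrac1\lambda\sum_k\rho^{-2k}|H_{ij,k}|\le\tfrac1\lambda\sum_\theta\lambda_\theta=1$, where the inequality uses $\sum_k\rho^{-2k}|(H_\theta)_{ij,k}|\le 1$ for each individual atom. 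Thus $\lambda\Pi_{ZF}(\Gamma',H')+\sum_\delta\lambda_\delta\Pi_\delta\in\Pi_B$. The remaining loose ends are bookkeeping: when $\lambda=0$ every $\lambda_\theta$ vanishes, the Zames--Falb part is the zero matrix, and the element lies trivially in $\Pi_B$; and one uses the convention $n\ge 1$ so that the single-atom description $\Pi_B$ is meaningful. This establishes $\Pi_A\subseteq\Pi_B$, hence $\Pi_A=\Pi_B$.
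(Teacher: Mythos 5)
Your proposal is correct and follows essentially the same route as the paper: both use positive homogeneity/linearity of $\Pi_{ZF}$ in $(\Gamma,H)$ to collapse the sum into $\lambda\,\Pi_{ZF}(\Gamma',H')$ with $\lambda=\sum_\theta\lambda_\theta$ and the normalized convex combination, then verify the $(\rho,H')$-diagonal-dominance conditions (filter normalization and the main dominance inequality) from the atoms' conditions, handling $\lambda=0$ trivially. The only cosmetic difference is that you invoke the sign structure to get equality where the paper simply applies the triangle inequality; both are valid.
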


\begin{proof}
$\Pi_B \subseteq \Pi_A$ is immediate. To check the other direction, first assume $\sum_\theta \lambda_\theta = \Lambda > 0$ (the $\Lambda = 0$ case is trivial).

Now, note that, due to the linearity of the repeated Zames--Falb IQC in terms of $\Gamma$ and $H$, see that
\begin{align*}
\sum_{\theta=1}^n \lambda_\theta \Pi_{ZF}(\Gamma_\theta,H_\theta) = &\; \Lambda\: \Pi_{ZF}\left(\frac{\sum_\theta \lambda_\theta \Gamma_\theta}{\Lambda}, \frac{\sum_\theta \lambda_\theta H_\theta}{\Lambda}\right)\:.
\end{align*}
All that remains is to check that the diagonal dominance definitions are satisfied; two require care. First, the filter condition on $H$:
\begin{align*}
\sum_k \rho^{-2k}\left|\frac{\sum_\theta \lambda_\theta H_{ij,k}^\theta}{\Lambda}\right| \leq &\; \sum_k \rho^{-2k}\frac{\sum_\theta \lambda_\theta |H_{ij,k}^\theta|}{\Lambda}  \\
= &\; \frac{\sum_\theta \lambda_\theta}{\Lambda} \sum_k \rho^{-2k} |H_{ij,k}^\theta|\\
\leq &\;  \frac{\sum_\theta \lambda_\theta}{\Lambda} \quad \text{(by assumption)} \\
= &\; 1\:.
\end{align*}
Second, the main diagonal dominance condition:
\if\MODE1
\begin{align*}
& \frac{\sum_\theta \lambda_\theta \Gamma_{ii}^\theta}{\Lambda}  -  \sum_{j=1,j\neq i}^n\left|\frac{\sum_\theta \lambda_\theta \Gamma_{ij}^\theta}{\Lambda}\right| - \sum_{j=1}^n \sum_k \rho^{-2k}\left|\frac{\sum_\theta \lambda_\theta H_{ij,k}^\theta}{\Lambda}\right| \\
\geq &\; \frac{\sum_\theta \lambda_\theta \Gamma_{ii}^\theta}{\Lambda}  - \frac{\sum_\theta \lambda_\theta}{\Lambda}\sum_{j=1,j\neq i}^n| \Gamma_{ij}^\theta| - \frac{\sum_\theta \lambda_\theta}{\Lambda}\sum_{j=1}^n \sum_k \rho^{-2k}| H_{ij,k}^\theta| \\
= &\; \Lambda^{-1}\sum_\theta \lambda_\theta\left( \Gamma_{ii}^\theta  - \sum_{j=1,j\neq i}^n| \Gamma_{ij}^\theta| - \sum_{j=1}^n \sum_k \rho^{-2k}| H_{ij,k}^\theta|\right) \\
\geq &\; 0, \:\text{by assumption.}
\end{align*}
\else
\begin{align*}
\hspace{3cm}&\hspace{-3cm} \frac{\sum_\theta \lambda_\theta \Gamma_{ii}^\theta}{\Lambda}  -  \sum_{j=1,j\neq i}^n\left|\frac{\sum_\theta \lambda_\theta \Gamma_{ij}^\theta}{\Lambda}\right| - \sum_{j=1}^n \sum_k \rho^{-2k}\left|\frac{\sum_\theta \lambda_\theta H_{ij,k}^\theta}{\Lambda}\right| \\
\geq &\; \frac{\sum_\theta \lambda_\theta \Gamma_{ii}^\theta}{\Lambda}  - \frac{\sum_\theta \lambda_\theta}{\Lambda}\sum_{j=1,j\neq i}^n| \Gamma_{ij}^\theta| - \frac{\sum_\theta \lambda_\theta}{\Lambda}\sum_{j=1}^n \sum_k \rho^{-2k}| H_{ij,k}^\theta| \\
= &\; \Lambda^{-1}\sum_\theta \lambda_\theta\left( \Gamma_{ii}^\theta  - \sum_{j=1,j\neq i}^n| \Gamma_{ij}^\theta| - \sum_{j=1}^n \sum_k \rho^{-2k}| H_{ij,k}^\theta|\right) \\
\geq &\; 0, \:\text{by assumption.}
\end{align*}
\fi
The non-Zames--Falb IQCs carry straight through in both directions. Thus, $\sum_{\theta=1}^n \lambda_\theta \Pi_{ZF}(\Gamma_\theta,H_\theta) + \sum_\delta\lambda_\delta\Pi_\delta \in \Pi_A$, and therefore $\Pi_A = \Pi_B$.
\end{proof}

\subsubsection{Convexification for repeated nonlinearities}
For repeated nonlinearities, the main LMI (\ref{eq:explmi}) with constraints can be written as 
\begin{align*}
&\; \min_{P,\lambda,\Gamma,H} 0 \\
\text{s.t.} &\; \mathcal{A}(P) + \: \lambda\mathcal{M}(\Gamma, H) \prec 0\\
&\; P \succeq 0\\
&\; \lambda \geq 0\\
&\;\Gamma_{ii}\geq 0, \: \forall\:i \\
&\;\Gamma_{ij}\leq 0,  , \: \forall\: i\neq j \\
&\; H_{ij,k}\geq 0 , \: \forall\: i,j,k\\
&\;\sum_{k=0}^\infty \rho^{-2k} |H_{ij,k}| \leq 1, \: \forall\: i,j \\
&\; \Gamma_{ii} \geq \sum_{j=1, j\neq i}^n |\Gamma_{ij}| + \sum_{j=1}^n\sum_{k=0}^\infty\rho^{-2k} |H_{ij,k}|, \quad \forall\:i
\end{align*}
for some known linear functions $\mathcal{A}, \mathcal{M}$ (note that the latter is linear in the pair $[\Gamma, H]$). This problem is not immediately convex, due to the product of $\lambda$ and $[\Gamma,H]$. However, this program is indeed equivalent to the convex problem
\begin{align*}
&\; \min_{\tilde P,\zeta,\tilde\Gamma,\tilde H} 0 \\
\text{s.t.} &\; \mathcal{A}(\tilde P) +  \mathcal{M}(\tilde\Gamma, \tilde H) \preceq -I\\
&\; \tilde P \succeq 0\\
&\; \zeta > 0\\
&\; \tilde\Gamma_{ii}\geq 0, \: \forall\:i \\
&\; \tilde\Gamma_{ij}\leq 0,  , \: \forall\: i\neq j \\
&\; \tilde H_{ij,k}\geq 0 , \: \forall\: i,j,k\\
&\;\sum_{k=0}^\infty \rho^{-2k}|\tilde H_{ij,k}| \leq \zeta, \: \forall\: i,j \\
&\; \tilde\Gamma_{ii} \geq \sum_{j=1, j\neq i}^n |\tilde\Gamma_{ij}| + \sum_{j=1}^n\sum_{k=0}^\infty\rho^{-2k}|\tilde H_{ij,k}|, \quad \forall\:i\:.
\end{align*}
In practice, it is helpful to replace $\preceq -I$ with $\preceq -tI$ and maximize over $t$, while placing upper bounds on $\zeta$ and $t$.

\if\MODE1
\bibliographystyle{IEEEtran}
\bibliography{freq}
\else
\begin{small}
\bibliographystyle{abbrv}
\bibliography{freq}
\end{small}
\fi

\end{document}